\chardef\@x10\chardef\@xv60
\def\tcitime{
\def\@time{%
  \@minute\time\@hour\@minute\divide\@hour\@xv
  \ifnum\@hour<\@x 0\fi\the\@hour:%
  \multiply\@hour\@xv\advance\@minute-\@hour
  \ifnum\@minute<\@x 0\fi\the\@minute
  }}%
\def\QCTOpt[#1]#2{%
  \def\QCTOptB{#1}
  \def\QCTOptA{#2}
}
\def\QCTNOpt#1{%
  \def\QCTOptA{#1}
  \let\QCTOptB\empty
}
\def\Qct{%
  \@ifnextchar[{%
    \QCTOpt}{\QCTNOpt}
}
\def\QCBOpt[#1]#2{%
  \def\QCBOptB{#1}
  \def\QCBOptA{#2}
}
\def\QCBNOpt#1{%
  \def\QCBOptA{#1}
  \let\QCBOptB\empty
}
\def\Qcb{%
  \@ifnextchar[{%
    \QCBOpt}{\QCBNOpt}
}
\def\PrepCapArgs{%
  \ifx\QCBOptA\empty
    \ifx\QCTOptA\empty
      {}%
    \else
      \ifx\QCTOptB\empty
        {\QCTOptA}%
      \else
        [\QCTOptB]{\QCTOptA}%
      \fi
    \fi
  \else
    \ifx\QCBOptA\empty
      {}%
    \else
      \ifx\QCBOptB\empty
        {\QCBOptA}%
      \else
        [\QCBOptB]{\QCBOptA}%
      \fi
    \fi
  \fi
}
\def\GRAPHICSPS#1{%
 \ifcase\GRAPHICSTYPE
   \special{ps: #1}%
 \or
   \special{language "PS", include "#1"}%
 \fi
}%
\def\graffile#1#2#3#4{%
    \leavevmode
    \raise -#4 \BOXTHEFRAME{%
        \hbox to #2{\raise #3\hbox to #2{\null #1\hfil}}}%
}%
\def\draftbox#1#2#3#4{%
 \leavevmode\raise -#4 \hbox{%
  \frame{\rlap{\protect\tiny #1}\hbox to #2%
   {\vrule height#3 width\z@ depth\z@\hfil}%
  }%
 }%
}%
\newif\ifwasdraft
\def\GRAPHIC#1#2#3#4#5{%
 \ifnum\draft=\@ne\draftbox{#2}{#3}{#4}{#5}%
  \else\graffile{#1}{#3}{#4}{#5}%
  \fi
 }%
\def\addtoLaTeXparams#1{%
    \edef\LaTeXparams{\LaTeXparams #1}}%
\newif\ifBoxFrame \BoxFramefalse
\newif\ifOverFrame \OverFramefalse
\newif\ifUnderFrame \UnderFramefalse
\def\BOXTHEFRAME#1{%
   \hbox{%
      \ifBoxFrame
         \frame{#1}%
      \else
         {#1}%
      \fi
   }%
}
\def\doFRAMEparams#1{\BoxFramefalse\OverFramefalse\UnderFramefalse\readFRAMEparams#1\end}%
\def\readFRAMEparams#1{%
 \ifx#1\end%
  \let\next=\relax
  \else
  \ifx#1i\dispkind=\z@\fi
  \ifx#1d\dispkind=\@ne\fi
  \ifx#1f\dispkind=\tw@\fi
  \ifx#1t\addtoLaTeXparams{t}\fi
  \ifx#1b\addtoLaTeXparams{b}\fi
  \ifx#1p\addtoLaTeXparams{p}\fi
  \ifx#1h\addtoLaTeXparams{h}\fi
  \ifx#1X\BoxFrametrue\fi
  \ifx#1O\OverFrametrue\fi
  \ifx#1U\UnderFrametrue\fi
  \ifx#1w
    \ifnum\draft=1\wasdrafttrue\else\wasdraftfalse\fi
    \draft=\@ne
  \fi
  \let\next=\readFRAMEparams
  \fi
 \next
 }%
\def\IFRAME#1#2#3#4#5#6{%
      \bgroup
      \let\QCTOptA\empty
      \let\QCTOptB\empty
      \let\QCBOptA\empty
      \let\QCBOptB\empty
      #6%
      \parindent=0pt%
      \leftskip=0pt
      \rightskip=0pt
      \setbox0 = \hbox{\QCBOptA}%
      \@tempdima = #1\relax
      \ifOverFrame
          \typeout{This is not implemented yet}%
          \show\HELP
      \else
         \ifdim\wd0>\@tempdima
            \advance\@tempdima by \@tempdima
            \ifdim\wd0 >\@tempdima
               \textwidth=\@tempdima
               \setbox1 =\vbox{%
                  \noindent\hbox to \@tempdima{\hfill\GRAPHIC{#5}{#4}{#1}{#2}{#3}\hfill}\\%
                  \noindent\hbox to \@tempdima{\parbox[b]{\@tempdima}{\QCBOptA}}%
               }%
               \wd1=\@tempdima
            \else
               \textwidth=\wd0
               \setbox1 =\vbox{%
                 \noindent\hbox to \wd0{\hfill\GRAPHIC{#5}{#4}{#1}{#2}{#3}\hfill}\\%
                 \noindent\hbox{\QCBOptA}%
               }%
               \wd1=\wd0
            \fi
         \else
            \ifdim\wd0>0pt
              \hsize=\@tempdima
              \setbox1 =\vbox{%
                \unskip\GRAPHIC{#5}{#4}{#1}{#2}{0pt}%
                \break
                \unskip\hbox to \@tempdima{\hfill \QCBOptA\hfill}%
              }%
              \wd1=\@tempdima
           \else
              \hsize=\@tempdima
              \setbox1 =\vbox{%
                \unskip\GRAPHIC{#5}{#4}{#1}{#2}{0pt}%
              }%
              \wd1=\@tempdima
           \fi
         \fi
         \@tempdimb=\ht1
         \advance\@tempdimb by \dp1
         \advance\@tempdimb by -#2%
         \advance\@tempdimb by #3%
         \leavevmode
         \raise -\@tempdimb \hbox{\box1}%
      \fi
      \egroup%
}%
\def\DFRAME#1#2#3#4#5{%
 \begin{center}
     \let\QCTOptA\empty
     \let\QCTOptB\empty
     \let\QCBOptA\empty
     \let\QCBOptB\empty
     \ifOverFrame 
        #5\QCTOptA\par
     \fi
     \GRAPHIC{#4}{#3}{#1}{#2}{\z@}
     \ifUnderFrame 
        \nobreak\par #5\QCBOptA
     \fi
 \end{center}%
 }%
\def\FFRAME#1#2#3#4#5#6#7{%
 \begin{figure}[#1]%
  \let\QCTOptA\empty
  \let\QCTOptB\empty
  \let\QCBOptA\empty
  \let\QCBOptB\empty
  \ifOverFrame
    #4
    \ifx\QCTOptA\empty
    \else
      \ifx\QCTOptB\empty
        \caption{\QCTOptA}%
      \else
        \caption[\QCTOptB]{\QCTOptA}%
      \fi
    \fi
    \ifUnderFrame\else
      \label{#5}%
    \fi
  \else
    \UnderFrametrue%
  \fi
  \begin{center}\GRAPHIC{#7}{#6}{#2}{#3}{\z@}\end{center}%
  \ifUnderFrame
    #4
    \ifx\QCBOptA\empty
      \caption{}%
    \else
      \ifx\QCBOptB\empty
        \caption{\QCBOptA}%
      \else
        \caption[\QCBOptB]{\QCBOptA}%
      \fi
    \fi
    \label{#5}%
  \fi
  \end{figure}%
 }%
\def\makeactives{
  \catcode`\"=\active
  \catcode`\;=\active
  \catcode`\:=\active
  \catcode`\'=\active
  \catcode`\~=\active
}
   \gdef\activesoff{%
      \def"{\string"}
      \def;{\string;}
      \def:{\string:}
      \def'{\string'}
      \def~{\string~}
    }
\def\FRAME#1#2#3#4#5#6#7#8{%
 \bgroup
 \@ifundefined{bbl@deactivate}{}{\activesoff}
 \ifnum\draft=\@ne
   \wasdrafttrue
 \else
   \wasdraftfalse%
 \fi
 \def\LaTeXparams{}%
 \dispkind=\z@
 \def\LaTeXparams{}%
 \doFRAMEparams{#1}%
 \ifnum\dispkind=\z@\IFRAME{#2}{#3}{#4}{#7}{#8}{#5}\else
  \ifnum\dispkind=\@ne\DFRAME{#2}{#3}{#7}{#8}{#5}\else
   \ifnum\dispkind=\tw@
    \edef\@tempa{\noexpand\FFRAME{\LaTeXparams}}%
    \@tempa{#2}{#3}{#5}{#6}{#7}{#8}%
    \fi
   \fi
  \fi
  \ifwasdraft\draft=1\else\draft=0\fi{}%
  \egroup
 }%
\def\TEXUX#1{"texux"}
\long\def\QQQ#1#2{%
     \long\expandafter\def\csname#1\endcsname{#2}}%
\long\def\QQA#1#2{}%
\def\QTR#1#2{{\csname#1\endcsname #2}}
\def\EXPAND#1[#2]#3{}%
\def\NOEXPAND#1[#2]#3{}%
\def\LaTeXparent#1{}%
\def\ChildStyles#1{}%
\def\ChildDefaults#1{}%
\def\QTagDef#1#2#3{}%
\def\QQfnmark#1{\footnotemark}
\def\makeatletter\input gnuindex.sty\makeatother\makeindex{\makeatletter\input gnuindex.sty\makeatother\makeindex}%
\def\initial#1{\bigbreak{\raggedright\large\bf #1}\kern 2\p@\penalty3000}}%
 \def\abstract{%
  \if@twocolumn
   \section*{Abstract (Not appropriate in this style!)}%
   \else \small 
   \begin{center}{\bf Abstract\vspace{-.5em}\vspace{\z@}}\end{center}%
   \quotation 
   \fi
  }%
   \def\registered{\relax\ifmmode{}\r@gistered
                    \else$\m@th\r@gistered$\fi}%
 \def\r@gistered{^{\ooalign
  {\hfil\raise.07ex\hbox{$\scriptstyle\rm\text{R}$}\hfil\crcr
  \mathhexbox20D}}}}{}%
\newdimen\theight
\def\Column{%
 \vadjust{\setbox\z@=\hbox{\scriptsize\quad\quad tcol}%
  \theight=\ht\z@\advance\theight by \dp\z@\advance\theight by \lineskip
  \kern -\theight \vbox to \theight{%
   \rightline{\rlap{\box\z@}}%
   \vss
   }%
  }%
 }%
\def\qed{%
 \ifhmode\unskip\nobreak\fi\ifmmode\ifinner\else\hskip5\p@\fi\fi
 \hbox{\hskip5\p@\vrule width4\p@ height6\p@ depth1.5\p@\hskip\p@}%
 }%
\def\miss{\hbox{\vrule height2\p@ width 2\p@ depth\z@}}%
\def\tcol#1{{\baselineskip=6\p@ \vcenter{#1}} \Column}  %
\def\newfmtname{LaTeX2e}
\def\chkcompat{%
   \if@compatibility
   \else
     \usepackage{latexsym}
   \fi
}
  \DeclareOldFontCommand{\rm}{\normalfont\rmfamily}{\mathrm}
  \DeclareOldFontCommand{\sf}{\normalfont\sffamily}{\mathsf}
  \DeclareOldFontCommand{\tt}{\normalfont\ttfamily}{\mathtt}
  \DeclareOldFontCommand{\bf}{\normalfont\bfseries}{\mathbf}
  \DeclareOldFontCommand{\it}{\normalfont\itshape}{\mathit}
  \DeclareOldFontCommand{\sl}{\normalfont\slshape}{\@nomath\sl}
  \DeclareOldFontCommand{\sc}{\normalfont\scshape}{\@nomath\sc}
\def\alpha{\Greekmath 010B }%
\def\beta{\Greekmath 010C }%
\def\gamma{\Greekmath 010D }%
\def\delta{\Greekmath 010E }%
\def\epsilon{\Greekmath 010F }%
\def\zeta{\Greekmath 0110 }%
\def\eta{\Greekmath 0111 }%
\def\theta{\Greekmath 0112 }%
\def\iota{\Greekmath 0113 }%
\def\kappa{\Greekmath 0114 }%
\def\lambda{\Greekmath 0115 }%
\def\mu{\Greekmath 0116 }%
\def\nu{\Greekmath 0117 }%
\def\xi{\Greekmath 0118 }%
\def\pi{\Greekmath 0119 }%
\def\rho{\Greekmath 011A }%
\def\sigma{\Greekmath 011B }%
\def\tau{\Greekmath 011C }%
\def\upsilon{\Greekmath 011D }%
\def\phi{\Greekmath 011E }%
\def\chi{\Greekmath 011F }%
\def\psi{\Greekmath 0120 }%
\def\omega{\Greekmath 0121 }%
\def\varepsilon{\Greekmath 0122 }%
\def\vartheta{\Greekmath 0123 }%
\def\varpi{\Greekmath 0124 }%
\def\varrho{\Greekmath 0125 }%
\def\varsigma{\Greekmath 0126 }%
\def\varphi{\Greekmath 0127 }%
\def\nabla{\Greekmath 0272 }
\def\FindBoldGroup{%
   {\setbox0=\hbox{$\mathbf{x\global\edef\theboldgroup{\the\mathgroup}}$}}%
}
\def\Greekmath#1#2#3#4{%
    \if@compatibility
        \ifnum\mathgroup=\symbold
           \mathchoice{\mbox{\boldmath$\displaystyle\mathchar"#1#2#3#4$}}%
                      {\mbox{\boldmath$\textstyle\mathchar"#1#2#3#4$}}%
                      {\mbox{\boldmath$\scriptstyle\mathchar"#1#2#3#4$}}%
                      {\mbox{\boldmath$\scriptscriptstyle\mathchar"#1#2#3#4$}}%
        \else
           \mathchar"#1#2#3#4%
        \fi 
    \else 
        \FindBoldGroup
        \ifnum\mathgroup=\theboldgroup 
           \mathchoice{\mbox{\boldmath$\displaystyle\mathchar"#1#2#3#4$}}%
                      {\mbox{\boldmath$\textstyle\mathchar"#1#2#3#4$}}%
                      {\mbox{\boldmath$\scriptstyle\mathchar"#1#2#3#4$}}%
                      {\mbox{\boldmath$\scriptscriptstyle\mathchar"#1#2#3#4$}}%
        \else
           \mathchar"#1#2#3#4%
        \fi     	    
	  \fi}
\newif\ifGreekBold  \GreekBoldfalse
\let\SAVEPBF=\pbf
\def\pbf{\GreekBoldtrue\SAVEPBF}%
  \newcounter{equationnumber}  
  \def\mathletters{%
     \addtocounter{equation}{1}
     \edef\@currentlabel{\theequation}%
     \setcounter{equationnumber}{\c@equation}
     \setcounter{equation}{0}%
     \edef\theequation{\@currentlabel\noexpand\alph{equation}}%
  }
    \def\BibTeX{{\rm B\kern-.05em{\sc i\kern-.025em b}\kern-.08em
                 T\kern-.1667em\lower.7ex\hbox{E}\kern-.125emX}}}{}%
\def\AmS{{\protect\usefont{OMS}{cmsy}{m}{n}%
                A\kern-.1667em\lower.5ex\hbox{M}\kern-.125emS}}}{}%
\let\DOTSI\relax
\def\RIfM@{\relax\ifmmode}%
\def\FN@{\futurelet\next}%
\def\iint{\DOTSI\intno@\tw@\FN@\ints@}%
\def\iiint{\DOTSI\intno@\thr@@\FN@\ints@}%
\def\iiiint{\DOTSI\intno@4 \FN@\ints@}%
\def\idotsint{\DOTSI\intno@\z@\FN@\ints@}%
\def\ints@{\findlimits@\ints@@}%
\newif\iflimtoken@
\newif\iflimits@
\def\findlimits@{\limtoken@true\ifx\next\limits\limits@true
 \else\ifx\next\nolimits\limits@false\else
 \limtoken@false\ifx\ilimits@\nolimits\limits@false\else
 \ifinner\limits@false\else\limits@true\fi\fi\fi\fi}%
\def\multint@{\int\ifnum\intno@=\z@\intdots@                          
 \else\intkern@\fi                                                    
 \ifnum\intno@>\tw@\int\intkern@\fi                                   
 \ifnum\intno@>\thr@@\int\intkern@\fi                                 
 \int}
\def\multintlimits@{\intop\ifnum\intno@=\z@\intdots@\else\intkern@\fi
 \ifnum\intno@>\tw@\intop\intkern@\fi
 \ifnum\intno@>\thr@@\intop\intkern@\fi\intop}%
\def\intic@{%
    \mathchoice{\hskip.5em}{\hskip.4em}{\hskip.4em}{\hskip.4em}}%
\def\negintic@{\mathchoice
 {\hskip-.5em}{\hskip-.4em}{\hskip-.4em}{\hskip-.4em}}%
\def\ints@@{\iflimtoken@                                              
 \def\ints@@@{\iflimits@\negintic@
   \mathop{\intic@\multintlimits@}\limits                             
  \else\multint@\nolimits\fi                                          
  \eat@}
 \else                                                                
 \def\ints@@@{\iflimits@\negintic@
  \mathop{\intic@\multintlimits@}\limits\else
  \multint@\nolimits\fi}\fi\ints@@@}%
\def\intkern@{\mathchoice{\!\!\!}{\!\!}{\!\!}{\!\!}}%
\def\plaincdots@{\mathinner{\cdotp\cdotp\cdotp}}%
\def\intdots@{\mathchoice{\plaincdots@}%
 {{\cdotp}\mkern1.5mu{\cdotp}\mkern1.5mu{\cdotp}}%
 {{\cdotp}\mkern1mu{\cdotp}\mkern1mu{\cdotp}}%
 {{\cdotp}\mkern1mu{\cdotp}\mkern1mu{\cdotp}}}%
\def\RIfM@{\relax\protect\ifmmode}
\def\text{\RIfM@\expandafter\text@\else\expandafter\mbox\fi}
\let\nfss@text\text
\def\text@#1{\mathchoice
   {\textdef@\displaystyle\f@size{#1}}%
   {\textdef@\textstyle\tf@size{\firstchoice@false #1}}%
   {\textdef@\textstyle\sf@size{\firstchoice@false #1}}%
   {\textdef@\textstyle \ssf@size{\firstchoice@false #1}}%
   \glb@settings}
\def\textdef@#1#2#3{\hbox{{%
                    \everymath{#1}%
                    \let\f@size#2\selectfont
                    #3}}}
\newif\iffirstchoice@
\def\Let@{\relax\iffalse{\fi\let\\=\cr\iffalse}\fi}%
\def\vspace@{\def\vspace##1{\crcr\noalign{\vskip##1\relax}}}%
\def\multilimits@{\bgroup\vspace@\Let@
 \baselineskip\fontdimen10 \scriptfont\tw@
 \advance\baselineskip\fontdimen12 \scriptfont\tw@
 \lineskip\thr@@\fontdimen8 \scriptfont\thr@@
 \lineskiplimit\lineskip
 \vbox\bgroup\ialign\bgroup\hfil$\m@th\scriptstyle{##}$\hfil\crcr}%
\def\Sb{_\multilimits@}%
\def\endSb{\crcr\egroup\egroup\egroup}%
\def\Sp{^\multilimits@}%
\newdimen\ex@
\def\rightarrowfill@#1{$#1\m@th\mathord-\mkern-6mu\cleaders
 \hbox{$#1\mkern-2mu\mathord-\mkern-2mu$}\hfill
 \mkern-6mu\mathord\rightarrow$}%
\def\leftarrowfill@#1{$#1\m@th\mathord\leftarrow\mkern-6mu\cleaders
 \hbox{$#1\mkern-2mu\mathord-\mkern-2mu$}\hfill\mkern-6mu\mathord-$}%
\def\leftrightarrowfill@#1{$#1\m@th\mathord\leftarrow
\mkern-6mu\cleaders
 \hbox{$#1\mkern-2mu\mathord-\mkern-2mu$}\hfill
 \mkern-6mu\mathord\rightarrow$}%
\def\overrightarrow{\mathpalette\overrightarrow@}%
\def\overrightarrow@#1#2{\vbox{\ialign{##\crcr\rightarrowfill@#1\crcr
 \noalign{\kern-\ex@\nointerlineskip}$\m@th\hfil#1#2\hfil$\crcr}}}%
\def\overleftarrow{\mathpalette\overleftarrow@}%
\def\overleftarrow@#1#2{\vbox{\ialign{##\crcr\leftarrowfill@#1\crcr
 \noalign{\kern-\ex@\nointerlineskip}$\m@th\hfil#1#2\hfil$\crcr}}}%
\def\overleftrightarrow{\mathpalette\overleftrightarrow@}%
\def\overleftrightarrow@#1#2{\vbox{\ialign{##\crcr
   \leftrightarrowfill@#1\crcr
 \noalign{\kern-\ex@\nointerlineskip}$\m@th\hfil#1#2\hfil$\crcr}}}%
\def\underrightarrow{\mathpalette\underrightarrow@}%
\def\underrightarrow@#1#2{\vtop{\ialign{##\crcr$\m@th\hfil#1#2\hfil
  $\crcr\noalign{\nointerlineskip}\rightarrowfill@#1\crcr}}}%
\def\underleftarrow{\mathpalette\underleftarrow@}%
\def\underleftarrow@#1#2{\vtop{\ialign{##\crcr$\m@th\hfil#1#2\hfil
  $\crcr\noalign{\nointerlineskip}\leftarrowfill@#1\crcr}}}%
\def\underleftrightarrow{\mathpalette\underleftrightarrow@}%
\def\underleftrightarrow@#1#2{\vtop{\ialign{##\crcr$\m@th
  \hfil#1#2\hfil$\crcr
 \noalign{\nointerlineskip}\leftrightarrowfill@#1\crcr}}}%
\def\qopnamewl@#1{\mathop{\operator@font#1}\nlimits@}
\let\nlimits@\displaylimits
\def\setboxz@h{\setbox\z@\hbox}
\def\varlim@#1#2{\mathop{\vtop{\ialign{##\crcr
 \hfil$#1\m@th\operator@font lim$\hfil\crcr
 \noalign{\nointerlineskip}#2#1\crcr
 \noalign{\nointerlineskip\kern-\ex@}\crcr}}}}
 \def\rightarrowfill@#1{\m@th\setboxz@h{$#1-$}\ht\z@\z@
  $#1\copy\z@\mkern-6mu\cleaders
  \hbox{$#1\mkern-2mu\box\z@\mkern-2mu$}\hfill
  \mkern-6mu\mathord\rightarrow$}
\def\leftarrowfill@#1{\m@th\setboxz@h{$#1-$}\ht\z@\z@
  $#1\mathord\leftarrow\mkern-6mu\cleaders
  \hbox{$#1\mkern-2mu\copy\z@\mkern-2mu$}\hfill
  \mkern-6mu\box\z@$}
\def\projlim{\qopnamewl@{proj\,lim}}
\def\injlim{\qopnamewl@{inj\,lim}}
\def\varinjlim{\mathpalette\varlim@\rightarrowfill@}
\def\varprojlim{\mathpalette\varlim@\leftarrowfill@}
\def\varliminf{\mathpalette\varliminf@{}}
\def\varliminf@#1{\mathop{\underline{\vrule\@depth.2\ex@\@width\z@
   \hbox{$#1\m@th\operator@font lim$}}}}
\def\varlimsup{\mathpalette\varlimsup@{}}
\def\varlimsup@#1{\mathop{\overline
  {\hbox{$#1\m@th\operator@font lim$}}}}
\def\align{\@verbatim \frenchspacing\@vobeyspaces \@alignverbatim
You are using the "align" environment in a style in which it is not defined.}
\let\csname endalign*\endcsname =\endtrivlist
\def\alignat{\@verbatim \frenchspacing\@vobeyspaces \@alignatverbatim
You are using the "alignat" environment in a style in which it is not defined.}
\let\csname endalignat*\endcsname =\endtrivlist
\def\xalignat{\@verbatim \frenchspacing\@vobeyspaces \@xalignatverbatim
You are using the "xalignat" environment in a style in which it is not defined.}
\let\csname endxalignat*\endcsname =\endtrivlist
\def\gather{\@verbatim \frenchspacing\@vobeyspaces \@gatherverbatim
You are using the "gather" environment in a style in which it is not defined.}
\let\csname endgather*\endcsname =\endtrivlist
\def\multiline{\@verbatim \frenchspacing\@vobeyspaces \@multilineverbatim
You are using the "multiline" environment in a style in which it is not defined.}
\let\csname endmultiline*\endcsname =\endtrivlist
\def\arrax{\@verbatim \frenchspacing\@vobeyspaces \@arraxverbatim
You are using a type of "array" construct that is only allowed in AmS-LaTeX.}
\def\tabulax{\@verbatim \frenchspacing\@vobeyspaces \@tabulaxverbatim
You are using a type of "tabular" construct that is only allowed in AmS-LaTeX.}
\let\csname endarrax*\endcsname =\endtrivlist
\let\csname endtabulax*\endcsname =\endtrivlist
\def\@@eqncr{\let\@tempa\relax
    \ifcase\@eqcnt \def\@tempa{& & &}\or \def\@tempa{& &}%
      \else \def\@tempa{&}\fi
     \@tempa
     \if@eqnsw
        \iftag@
           \@taggnum
        \else
           \@eqnnum\stepcounter{equation}%
        \fi
     \fi
     \global\tag@false
     \global\@eqnswtrue
     \global\@eqcnt\z@\cr}
 \def\endequation{%
     \ifmmode\ifinner 
      \iftag@
        \addtocounter{equation}{-1} 
        $\hfil
           \displaywidth\linewidth\@taggnum\egroup \endtrivlist
        \global\tag@false
        \global\@ignoretrue   
      \else
        $\hfil
           \displaywidth\linewidth\@eqnnum\egroup \endtrivlist
        \global\tag@false
        \global\@ignoretrue 
      \fi
     \else   
      \iftag@
        \addtocounter{equation}{-1} 
        \eqno \hbox{\@taggnum}
        \global\tag@false%
        $$\global\@ignoretrue
      \else
        \eqno \hbox{\@eqnnum}
        $$\global\@ignoretrue
      \fi
     \fi\fi
 } 
 \newif\iftag@ \tag@false
 \def\tag{\@ifnextchar*{\@tagstar}{\@tag}}
 \def\@tag#1{%
     \global\tag@true
     \global\def\@taggnum{(#1)}}
 \def\@tagstar*#1{%
     \global\tag@true
     \global\def\@taggnum{#1}%
}
\begin{document}
\title{A String Model of Liquidity in Financial Markets }
\author{Sergey Lototsky\footnote{Department of Mathematics, University of Southern California.} \and Henry Schellhorn\footnote{Institute of Mathematical Sciences, Claremont Graduate University.} \and Ran Zhao\footnote{Institute of Mathematical Sciences and Drucker School of Management, Claremont Graduate University.}}
\date{}
\maketitle

\begin{abstract}
We consider a dynamic market model of liquidity where unmatched buy and sell limit orders are stored in order books. The resulting net demand surface constitutes the sole input to the model. We prove that generically there is no arbitrage in the model when the driving noise is a stochastic string. Under the equivalent martingale measure, the clearing price is a martingale, and options can be priced under the no-arbitrage hypothesis. We consider several parameterized versions of the model, and show some advantages of specifying the demand curve as quantity as a function of price (as opposed to price as a function of quantity). We calibrate our model to real order book data, compute option prices by Monte Carlo simulation, and compare the results to observed data.

\begin{flushleft}
\textbf{Keywords: } Liquidity Modeling $\cdot$ String Model $\cdot$ It\^o-Wentzell formula $\cdot$ No-arbitrage Condition $\cdot$ SPDE

\textbf{MSC (2010): } 91B26 $\cdot$ 91G80 $\cdot$ 91G60
\end{flushleft}
\end{abstract}



\date{}

\section{Introduction}

In our model, equilibrium prices of assets are completely determined by the
order flow, which is viewed as an exogenous process. We model a market of
assets without specialist where every trader submits limit orders, that is,
for a buy order, the buyer specifies the maximum purchase price, or buy
limit price, and, for a sell order, the seller specifies the minimum sale
price, or sell limit price\footnote{%
There is no loss of generality in that statement. A buy market order can be
specified in our model as a buy limit order with limit price equal to
infinity. Since we model assets with only positive prices, a sell market
order can be specified in our model as a sell limit order with a limit price
equal to zero.}.

If, at a given moment in time, the buyer is unable to complete the entire
order due to the shortage of sell orders at the required limit price, the
unmatched part of the order is recorded in the order book. A symmetric
outcome exists in the case of incoming sell orders. Subsequently these buy
unmatched orders may be matched with new incoming sell orders. We note that
this is the operating procedure for many electronic exchanges, such as NYSE
Arca. Time-priority is used to break indeterminacies of a match between an
incoming buyer at limit price superior to the ask price, i.e., the lowest
limit price in the sell order book. As a result, the equilibrium of \textit{%
clearing price} process is always defined.

Since the matching mechanism does not add any information to the economy,
all information about asset prices is included in the order flow. Whether
public exchanges should or should not reveal the order book data in real
time is an important issue, which continues to preoccupy the financial
markets community~\cite{WW02}. Our theoretical framework accommodates either
viewpoint. However, our model is most useful in an economy where order books
are public information, but where the large trader position is not known.
The current blossoming of high-frequency trading activity~\cite%
{BRZ,BLT06,Eng00,Hau08} seems to confirm our viewpoint that traders (i) are
interested in understanding order book information, and (ii) trade on that
information.

We do not address in this paper the issue of differential information. The
market microstructure literature, such as the Kyle model~\cite{Kyl85},
considers various setting involving multiple uninformed, or noise, traders,
and one or several informed traders. A key result of ~\cite{Kyl85} is that,
given the information available to noise traders, the resulting price
process is a martingale with respect to a suitable measure, whereas it may
not be for the informed traders. As a consequence, we do not believe that
abstracting issues of differential information results in any loss of
generality. The order books reflect all the public information. Public
information corresponds to the filtration under which the clearing price
needs to have an equivalent martingale measure, in order to avoid arbitrage.
Obviously, the clearing price may not be a martingale in the aforementioned
measure if the filtration is enlarged to include private information.

There are two main classes of models in the liquidity literature. The first
class of models (\cite{BB04, BK15a, BK15b, Fre98, Jar92, Jar94, PS98a,
PS98b, RS10, SW00}) considers the action of a large trader who can
manipulate the prices in the market; our model belongs to this class. The
large trader can employ one of the following two strategies: to ``corner the
market and squeeze the shorts'' or to ``front-run one's own trades''. While
some exchanges have rules to curtail the cornering of the market,
front-running seems more difficult to ban from an exchange. In discrete-time
trading, it is known that absence of arbitrage in periods where the large
investor does not trade implies non-existence of a market manipulation
strategy \cite{BK15a}.

The second class of models, considered, for example, in \cite{CR07, CJP04,
CST10,GS11}, abstracts the issues of market manipulation away, and considers
all traders as price-takers. In particular, \cite{CJP04} introduces an
exogenous residual supply curve against which an investor trades. The
investor trades market orders, the orders are matched instantaneously, and
then, as pointed out in \cite{BB04}, it is reasonable to assume that the
price effect of an order is limited to the very moment when the order is
placed in the market, so that the residual supply curve at a future time is
statistically independent from the order just matched.

The paper by Roch~\cite{RA11} attempts to bridge the gap between these two
classes of models by analyzing a \emph{linear} impact of the large trader on
the demand price. By contrast, our model is not limited to a linear impact.

Our model extends previous models in two directions. In our model, all the
information is contained in a Brownian sheet driving the dynamics of the
market net demand curve. Such a string model makes it possible to represent
any correlation structure in the net demand curve and has already been
introduced in finance to model the yield curve. Santa Clara and Sornette~%
\cite{SCS01} argue that a ``discontinuous with respect to term] forward
curve [...] is intuitively unlikely''. All the uncertainty in the economy is
contained in this single Brownian sheet. We can thus assume that claims can
be replicated by trading at various points on the net demand curve, and thus
the market is complete.

Secondly, our demand curve represents quantity (number of shares) as a
function of price, whereas, traditionally, the demand curve represents the
price as a function of quantity. The main advantage of this formulation is
that we can easily prove a risk-neutral pricing formula in a market that the
large trader can manipulate, but where she limits herself to continuous
strategies of bounded variation in order to avoid liquidity costs. There is
also a technical advantage to this approach (see remark \ref{remmodelPisbad}%
). Interestingly, our risk-neutral model of the demand curve (expressed as
quantity as a function of price) leads to a nonlinear stochastic partial
differential equation (SPDE). A linearization of this SPDE leads to an
unstable model. Could it be the mathematical explanation why illiquid
markets tend to be instable \cite{O04}? We leave this interesting question
for future research.

We proceed in two steps to define our model. In the first step, we consider
a market with atomistic traders and a large trader, and develop conditions
on the net demand curve of the atomistic traders such that the large trader
cannot manipulate the market and, as a result, will refrain from trading
large orders, or orders of infinite variation, that generate liquidity
costs. We assume then in a second step that the large trader is reduced to
differentiable strategies. Generally, large trader strategies are not
observable \cite{OW13}. Fortunately, the risk-neutral pricing formula that
we obtain, under the assumptions of completeness mentioned above, is the
same for every large trader strategy. Thus, even if the large trader can
manipulate the underlying market, she cannot manipulate option prices beyond
the immediate impact of her trade on the current price of the underlying.
This is an expedient feature of our model: we do not need to identify
whether there is a large trader, or several large traders on the market for
the option pricing formula to be plausible.

Our contributions are thus two-fold. First, we show that, under natural
assumptions, the large trader cannot generate arbitrage, and a risk-neutral
pricing formula obtains. Second, we match a particular model to empirical
data. In this particular model, liquidity effects have an impact on option
prices. Similar to most other models that came after Black-Scholes~\cite%
{BS73}, we obtain a smile curve for implied volatility. Our analysis shows
that a fairly technical (infinite-dimensional) model can be easily
implemented. We hope that, should this paper become of interest among
practitioners, more refined implementations will follow than the one
contained in this paper, which is meant only to prove our concept.

The structure of the paper is as follows. Section \ref{sec::pre} covers preliminaries on the market mechanism and string model. Section \ref{sec:MarketModel} introduces the infinite-factor string model and Section \ref{Sec:AFFM} illustrate the possibility of arbitrage in a finite-factor model. Section \ref{sec:NA} derives general conditions for existence of an equivalent martingale measure. Section \ref{sec:PMC} introduces an option pricing formula in a market where a large trader can manipulate the price of the underlying. Section \ref{sec:MOD} introduces the discretized model that we will use in our empirical analysis. Empirical analysis is in Section \ref{sec:NumericalEx}, where we describe the data set, the discretized algorithm, and pricing of options. We provide technical details of proofs in
Section \ref{sec::App} (Appendix). 

\section{Preliminaries}   \label{sec::pre}

\subsection{The Market Mechanism}

\noindent A buy\ limit order specifies how many shares a trader wants to
buy, and the maximum purchase price per share; we call this price the (buy)
\textit{limit price}. A sell\ limit order specifies how many shares a trader
wants to sell, and at what minimum price he is willing to sell them. We call
this price the (sell)\ \textit{limit price}. The unmatched buy and sell
orders are stored in order books, until they are either canceled or matched
with an incoming order. An incoming order is matched with the order on the
opposite side of the market which has the best price. The clearing price of
the transaction is equal to the limit price of the order in the book, and
not of the incoming order. Partial execution is allowed, and ties are
resolved by time-priority. Here is example of the matching mechanism in
discrete time, i.e., at most one order arrives at time $t\in
\{0,1,2,\ldots\}.$


\begin{example}
Suppose that the clearing price at time $0$ is any price $P(0)\in \lbrack
100,120]$. After clearing, that is, when $0<t<1$ we suppose that the order
books contain the following orders

\begin{equation*}
\begin{tabular}{|l|l|}
\hline
\multicolumn{2}{|l|}{Buy Order Book} \\
Price & Quantity \\ \hline\hline
100 & 10 \\ \hline
\end{tabular}%
\text{ \ \ \ \ \ }%
\begin{tabular}{|l|l|}
\hline
\multicolumn{2}{|l|}{Sell Order Book} \\
Price & Quantity \\ \hline\hline
120 & 10 \\ \hline
130 & 10 \\ \hline
\end{tabular}%
\end{equation*}

At time $t=1$ a buy order arrives with a limit price of \$125, and a
quantity of 15. The exchange matches it with the best sell order, i.e., the
one with a sell limit price of \$120. However, execution is only partial,
and the remainder of the buy order is placed in the order book at the limit
price of \$125, resulting in the following order books:

\begin{equation*}
\begin{tabular}{|l|l|}
\hline
\multicolumn{2}{|l|}{Buy Order Book} \\
Price & Quantity \\ \hline\hline
100 & 10 \\ \hline
125 & 5 \\ \hline
\end{tabular}%
\text{ \ \ \ \ \ }%
\begin{tabular}{|l|l|}
\hline
\multicolumn{2}{|l|}{Sell Order Book} \\
Price & Quantity \\ \hline\hline
130 & 10 \\ \hline
\end{tabular}%
\end{equation*}

The clearing price at time 1 is equal to the limit price of the sell order:
\begin{equation*}
P(1)=120.
\end{equation*}
\end{example}

This example illustrates several properties of the limit order markets:

\begin{itemize}
\item The clearing price is always defined, and can assume any positive value%
\footnote{%
We do not consider markets for swaps, where the price can be negative.}.

\item An incoming order can ``cross'' the order book, i.e., for the case of
a buy order, that its limit price is higher than the best sell order limit
price (i.e, the best ask), since the buyer does not lose a cent. Crossing
the book is indeed advantageous for two reasons: first, it allows for faster
execution. In our example, had the buyer submitted an order at price \$130
he would have bought the complete quantity of shares (15)\ that he desired,
rather than waiting an indeterminate amount of time until enough sell orders
arrive at his limit price.

\item If several buy orders are submitted at the same time and demand
exceeds supply at the best ask, then the buy orders with the highest limit
price are executed first. Our own data analysis shows that few orders cross
the Arca book. This is consistent with the theory of optimal order book
placement suggested by Rosu~\cite{Ros09}.
\end{itemize}

\subsection{String Modeling}

We now move to continuous time, omitting the technicalities related to
convergence of a discrete time model. We start with a filtered probability
space $(\Omega ,\mathcal{F},\{\mathcal{F}_{t}\},\mathbb{P})$ satisfying the
usual conditions. Equalities of random variables are to be understood almost
surely unless stated otherwise. Likewise, after explaining that there exists
a modification of a stochastic process that satisfies a certain property, we
do not distinguish in the remainder of the text between the original process
and the corresponding modification.

All the uncertainty is described by a one-dimensional Brownian sheet $%
B=B(s,t)$ for $0\leq s\leq 1$ and $0\leq t\leq T\ $ which generates $\{%
\mathcal{F}_{t}\equiv \sigma \big(B(s,t)\big);0\leq s\leq 1\}$ on $0\leq
t\leq T$. There are three main approaches to constructing a Brownian sheet
and the corresponding stochastic integral; cf. Mueller~\cite{Mue09}:

\begin{enumerate}
\item the martingale measure approach (Walsh \cite[Chapter 2]{Wal86}),

\item the Hilbert space approach (Da Prato and Zabczyk~\cite[Chapter 4]%
{DPZ2014}),

\item the function space approach (Krylov \cite[Section 8.2]{Kryl99}).
\end{enumerate}

The Hilbert space approach covers the martingale measure approach \cite%
{DPZ2014,Mue09}; the function space approach covers the Hilbert space
approach \cite{Kryl99}.

To construct the stochastic integral with respect to the Brownian sheet
using the function space approach, we take an orthonormal basis $\{\mathfrak{%
m}_{n};n\geq 1\}$ in $L^{2}[0,1]$ and let $\{w_{n},\ n\geq 1\}$ be
independent standard Brownian motions on $[0,T]$. Define the random field
\begin{equation*}
B(s,t)=\sum_{n=1}^{\infty }w_{n}(t)\int_{0}^{s}\mathfrak{m}_{n}(r)dr, \ \
s\in [0,1], \ t\in [0,T].
\end{equation*}

It follows that $B=B(s,t)$ is a Gaussian random field with mean zero and
covariance
\begin{equation*}
\mathbb{E}B(s,t)B(r,u)=\min(r,s)\cdot\min(t,u),
\end{equation*}
and then, by the Kolmogorov continuity criterion, $B$ has a modification
that is jointly continuous in $(s,t)$ \cite[Proposition 1.4]{Wal86}; this
modification is called Brownian sheet. If $b=b(s,t)$ is a random field such
that
\begin{equation}  \label{sq-int-SVL}
\int_0^T\int_{0}^{1}\mathbb{E}b^{2}(s,t)\,dsdt<\infty,
\end{equation}
and, for each $s\in [0,1]$, the process $b(s,\cdot)$ is $\mathcal{F}_{t}-$%
adapted, then, by definition,
\begin{equation}  \label{continuous}
\int_{0}^{t}\int_{0}^{s}b(r,u)B(dr,du)= \sum_{n\geq 1} \int_0^t\left(
\int_0^{s} b(r,u)\mathfrak{m}_n(r)dr\right) dw_n(u).
\end{equation}

Then Girsanov's Theorem (cf. \cite[Theorem 2.2]{Allouba} or \cite[Theorem
10.14]{DPZ2014}) can be stated as follows.

\begin{theorem}
\label{th:Girsanov} Suppose that $\lambda (s,\cdot)$ is an $\mathcal{F}_{t}$%
-predictable process for each $s\in [ 0,1]$, and that
\begin{equation}  \label{E1}
\mathbb{E}\left[\exp \left(\int_{0}^{T}\int_{0}^{1}\lambda (s,u)B(ds,du) -%
\frac{1}{2}\int_{0}^{T}\int_{0}^{1}\lambda ^{2}(s,u)dsdu\right)\right]=1.
\end{equation}
Define a new probability measure $\mathbb{Q}$ on $(\Omega ,\mathcal{F}_T)$
by
\begin{equation}  \label{MM}
d\mathbb{Q}= \exp \left(-\int_{0}^{T}\int_{0}^{1}\lambda(s,u)B(ds,du)- \frac{%
1}{2}\int_{0}^{T}\int_{0}^{1}\lambda^{2}(s,u)ds\right) d\mathbb{P}.
\end{equation}

Let
\begin{equation} \label{eqn::risk_lambda}
B^{\mathbb{Q}}(s,t)=B(s,t)+\int_{0}^{t}\int_0^s\lambda (r,u)drdu,
\end{equation}

Then, for each $s\in [0,1]$, the process $B^{\mathbb{Q}}(s,\cdot)$ is a
standard Brownian motion with respect to $\{\mathcal{F}_{t}\}_{0\leq t\leq T}
$ on the probability space $(\Omega ,\mathcal{F},\mathbb{Q)}$.
\end{theorem}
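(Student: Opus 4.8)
The plan is to reduce the statement to the classical Girsanov theorem for the countable family of independent Brownian motions $\{w_n\}$ underlying the Brownian sheet, using the series representation \eqref{continuous}. First I would expand the stochastic integrals against $B$ in the exponent of \eqref{MM} in terms of the $w_n$: setting $\lambda_n(u)=\int_0^1\lambda(r,u)\mathfrak{m}_n(r)\,dr$, formula \eqref{continuous} gives $\int_0^T\int_0^1\lambda(s,u)B(ds,du)=\sum_{n\geq1}\int_0^T\lambda_n(u)\,dw_n(u)$, and Parseval's identity gives $\int_0^T\int_0^1\lambda^2(s,u)\,ds\,du=\sum_{n\geq1}\int_0^T\lambda_n^2(u)\,du$. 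Hence the Radon--Nikodym density in \eqref{MM} is the limit of the Dol\'eans--Dade exponentials $Z^{(N)}_T=\exp\bigl(-\sum_{n=1}^N\int_0^T\lambda_n\,dw_n-\tfrac12\sum_{n=1}^N\int_0^T\lambda_n^2\,du\bigr)$, and condition \eqref{E1} is exactly the statement that the limiting density has expectation $1$, so $\mathbb{Q}$ is a genuine probability measure.

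Next I would invoke the finite-dimensional Girsanov theorem: under $\mathbb{Q}$, each process $\widetilde w_n(t):=w_n(t)+\int_0^t\lambda_n(u)\,du$ is a standard Brownian motion with respect to $\{\mathcal{F}_t\}$, and (because the joint density factors appropriately in the limit) the $\widetilde w_n$ remain independent. The point is then to identify $B^{\mathbb{Q}}$ from \eqref{eqn::risk_lambda} with the sheet built from the $\widetilde w_n$: since $\int_0^t\int_0^s\lambda(r,u)\,dr\,du=\sum_{n\geq1}\bigl(\int_0^s\mathfrak{m}_n(r)\,dr\bigr)\int_0^t\lambda_n(u)\,du$ by expanding the indicator $\mathbf 1_{[0,s]}$ in the basis $\{\mathfrak{m}_n\}$, we get
\[
B^{\mathbb{Q}}(s,t)=\sum_{n\geq1}w_n(t)\int_0^s\mathfrak{m}_n(r)\,dr+\sum_{n\geq1}\Bigl(\int_0^t\lambda_n(u)\,du\Bigr)\int_0^s\mathfrak{m}_n(r)\,dr=\sum_{n\geq1}\widetilde w_n(t)\int_0^s\mathfrak{m}_n(r)\,dr.
\]
Fixing $s$, the process $t\mapsto B^{\mathbb{Q}}(s,t)$ is then a sum of independent $\mathbb{Q}$-Brownian motions scaled by $\int_0^s\mathfrak{m}_n(r)\,dr$, hence a continuous $\mathbb{Q}$-martingale with quadratic variation $t\cdot\sum_{n\geq1}\bigl(\int_0^s\mathfrak{m}_n(r)\,dr\bigr)^2=t\cdot\|\mathbf 1_{[0,s]}\|_{L^2}^2=s\,t$; by L\'evy's characterization, $B^{\mathbb{Q}}(s,\cdot)/\sqrt{s}$ (or rather the process with the right variance) is a standard Brownian motion, which is the claim.

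The main obstacle is the passage to the infinite-dimensional limit: one must justify that the truncated densities $Z^{(N)}_T$ converge (in $L^1(\mathbb{P})$, say) to the density in \eqref{MM}, that the limiting object is a true density precisely under hypothesis \eqref{E1}, and that the independence and Brownian character of the $\widetilde w_n$ survive under $\mathbb{Q}$ rather than merely under each finite-dimensional approximation. The cleanest route is to note that $\{Z^{(N)}_T\}$ is a $\mathbb{P}$-martingale in $N$ (an iterated conditional-expectation argument using independence of the $w_n$), that it converges a.s. and in $L^1$ to the stated density exactly when \eqref{E1} holds (uniform integrability from the $L^1$-bounded nonnegative martingale being closable), and then to verify the Girsanov conclusion by checking the characteristic functional of $(B^{\mathbb{Q}}(s_1,t_1),\dots)$ under $\mathbb{Q}$ against that of a Brownian sheet, which again factors through the finite-$N$ computations and passes to the limit by dominated convergence. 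Everything else — the Parseval identities, the $L^2$-expansion of $\mathbf 1_{[0,s]}$, L\'evy's characterization — is routine. Alternatively, one can simply cite \cite[Theorem 2.2]{Allouba} or \cite[Theorem 10.14]{DPZ2014} for the infinite-dimensional statement and restrict the present write-up to verifying that our hypotheses \eqref{E1} match theirs and that \eqref{eqn::risk_lambda} is the correct shifted sheet.
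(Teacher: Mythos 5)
Your proposal is correct, but it is worth knowing that the paper does not prove this theorem at all: it states it as a known result and defers to \cite[Theorem 2.2]{Allouba} and \cite[Theorem 10.14]{DPZ2014}, which is exactly the ``alternative'' you mention in your last sentence. Your self-contained route --- expand $\lambda$ and the density in the basis $\{\mathfrak{m}_n\}$ via \eqref{continuous} and Parseval, apply Girsanov to the countable family $\{w_n\}$, and resum to identify $B^{\mathbb{Q}}$ as the sheet built from the shifted motions $\widetilde w_n$ --- is a legitimate and standard reduction, and it buys a proof that stays entirely inside the function-space construction the paper actually uses, rather than importing the Hilbert-space machinery of the cited references. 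Two small repairs: first, the parenthetical ``because the joint density factors appropriately in the limit'' is not the right justification for the joint law of the $\widetilde w_n$ under $\mathbb{Q}$ (the factors $\exp(-\int\lambda_n\,dw_n-\tfrac12\int\lambda_n^2\,du)$ all depend on the full filtration, so they are not independent); the clean argument is that cross-variations are invariant under an equivalent change of measure, so $\langle\widetilde w_n,\widetilde w_m\rangle(t)=\delta_{nm}t$ under $\mathbb{Q}$, and the multidimensional L\'evy characterization then gives both the Brownian character and the independence, after which your quadratic-variation computation $\langle B^{\mathbb{Q}}(s,\cdot)\rangle(t)=t\sum_n\bigl(\int_0^s\mathfrak{m}_n\bigr)^2=st$ closes the argument. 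Second, you are right to flag that this yields a Brownian motion with variance parameter $s$ rather than a literally ``standard'' one; that looseness is in the paper's statement, not in your proof. You might also note that \eqref{E1} carries a $+$ sign on the stochastic integral while the density \eqref{MM} carries a $-$ sign; since $\lambda$ is arbitrary this is harmless, but your identification of \eqref{E1} with ``the limiting density has expectation $1$'' implicitly replaces $\lambda$ by $-\lambda$.
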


\begin{corollary}
\label{cor-Girsanov-diffusion} If $X=X(t),\ t\in [0,T], $ is a stochastic
process with representation
\begin{equation*}
X(t)=X(0)+\int_0^t\int_0^{s}\sigma_X(r,u)\lambda(r,u)dsdu+\int_0^t\int_0^{s}%
\sigma_X(r,u)B(dr,du),
\end{equation*}
then $X$ is a martingale under the measure $\mathbb{Q}$.
\end{corollary}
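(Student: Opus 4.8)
The plan is to rewrite $X$ as a stochastic integral against the shifted Brownian sheet $B^{\mathbb{Q}}$ of Theorem~\ref{th:Girsanov} and then read off the martingale property. Throughout, $\sigma_X$ is understood to satisfy the same adaptedness and square-integrability requirements as the integrand in \eqref{continuous}--\eqref{sq-int-SVL}, so that both integrals defining $X$ make sense, and $\lambda$ satisfies \eqref{E1} so that $\mathbb{Q}$ of \eqref{MM} is a probability measure equivalent to $\mathbb{P}$. The first step is the change-of-integrator identity
\[
\int_{0}^{t}\!\!\int_{0}^{s}\sigma_X(r,u)\,B^{\mathbb{Q}}(dr,du)
=\int_{0}^{t}\!\!\int_{0}^{s}\sigma_X(r,u)\,B(dr,du)
+\int_{0}^{t}\!\!\int_{0}^{s}\sigma_X(r,u)\lambda(r,u)\,dr\,du ,
\]
which follows from the definition \eqref{eqn::risk_lambda} of $B^{\mathbb{Q}}$: for simple adapted integrands it is an algebraic identity, and the general case follows by $L^{2}(\mathbb{Q})$-approximation, the drift being an ordinary pathwise Lebesgue integral that passes to the limit harmlessly. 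Inserting this into the representation of $X$ cancels the drift term, leaving $X(t)=X(0)+\int_{0}^{t}\!\int_{0}^{s}\sigma_X(r,u)\,B^{\mathbb{Q}}(dr,du)$.

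Next I would apply Theorem~\ref{th:Girsanov}: under $\mathbb{Q}$ the field $B^{\mathbb{Q}}$ is again a Brownian sheet, so the function-space construction \eqref{continuous} applies verbatim with $w_n$ replaced by the independent $\mathbb{Q}$-Brownian motions $w_n^{\mathbb{Q}}(u)=w_n(u)+\int_{0}^{u}\langle\lambda(\cdot,v),\mathfrak{m}_n\rangle_{L^{2}[0,1]}\,dv$, so that
\[
X(t)=X(0)+\sum_{n\ge 1}\int_{0}^{t}\Big(\int_{0}^{s}\sigma_X(r,u)\,\mathfrak{m}_n(r)\,dr\Big)\,dw_n^{\mathbb{Q}}(u).
\]
This exhibits $X$ as a sum of It\^o integrals against independent $\mathbb{Q}$-Brownian motions, hence a $\mathbb{Q}$-local martingale; the square-integrability of $\sigma_X$, which via Parseval's identity bounds $\sum_{n\ge1}\mathbb{E}_{\mathbb{Q}}\int_{0}^{T}\!\big(\int_{0}^{s}\sigma_X\,\mathfrak{m}_n\,dr\big)^{2}du$, upgrades it to a genuine $\mathbb{Q}$-martingale.

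I expect the only real work to be the first step, since after it the conclusion is an immediate consequence of Theorem~\ref{th:Girsanov} and the elementary martingale property of Wiener integrals. The subtlety there is purely one of approximation --- verifying the identity for simple integrands and passing to the limit with the It\^o isometry under $\mathbb{Q}$, using the equivalence $\mathbb{Q}\sim\mathbb{P}$ --- and it is exactly the place where the hypotheses on $\sigma_X$ and the Novikov-type condition \eqref{E1} enter.
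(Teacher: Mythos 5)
Your proof is correct and is exactly the argument the paper leaves implicit: the corollary is stated without proof as an immediate consequence of Theorem~\ref{th:Girsanov}, and your route---absorb the drift into $B^{\mathbb{Q}}$ via \eqref{eqn::risk_lambda}, expand in the basis $\{\mathfrak{m}_n\}$, and invoke the martingale property of It\^o integrals against the $\mathbb{Q}$-Brownian motions $w_n^{\mathbb{Q}}$---is the intended one. The only caveat is in your last step: the square-integrability \eqref{sq-int-SVL} is an expectation under $\mathbb{P}$, and since $\mathbb{Q}\sim\mathbb{P}$ does not transfer $L^2$-bounds, you strictly obtain a $\mathbb{Q}$-local martingale unless you additionally assume $\mathbb{E}_{\mathbb{Q}}\int_0^T\int_0^1\sigma_X^2(s,u)\,ds\,du<\infty$ (or boundedness of $\sigma_X$); the paper is silent on this hypothesis, so your reading that it is part of the standing assumptions on $\sigma_X$ is reasonable.
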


\subsection{It\^{o}-Wentzell Formula}

Let $F=F(x,t),\ x\in \mathbb{R},\ t\in [0,T], $ be a random field and let $%
g=g(t),\ t\in [0,T],$ be a stochastic process such that
\begin{align*}
&F(x,t) =F(x,0)+\int_0^t\mu _{F}(x,u)du+ \int_0^t\int_{0}^{s}
\sigma_{F}(x,r,u)B(dr,du), \\
&g(t) =g(0)+\int_0^t\mu_{g}(u)du+\int_0^t\int_{0}^{s}\sigma_{g}(r,u)B(dr,du).
\end{align*}

\begin{definition}
\label{def:IWC} We say that the pair $(F,g)$ satisfies the It\^{o}-Wentzell
conditions if

\begin{enumerate}
\item The random variables $F(x,0)$, $x\in \mathbb{R}$ and $g(0)$ are $%
\mathcal{F}_0$-measurable;

\item Each of the processes $\mu_g(\cdot)$, $\sigma_g(r,\cdot)$, $%
\mu_F(x,\cdot),\ x\in \mathbb{R},$ and $\sigma_F(x,r,\cdot)$ is $\mathcal{F}%
_t$-adapted;

\item The functions $F$ and $g$ are continuous in $t$.

\item The function $F$ is twice continuously differentiable in $x$ and the
function $\sigma_F$ is continuously differentiable in $x$.

\item The following integrability condition holds:
\begin{align}  \label{Int-IW}
\mathbb{E}\mathcal{I}&<\infty, \mathrm{\ where\ } \\
\mathcal{I}&= \int_{0}^{T}\Big|\mu_{F}\big(g(u),u\big)\Big|du
+\int_{0}^{T}\int_{0}^{1}\sigma_{F}^2\big(g(u),s,u\big)dsdu  \notag \\
& +\int_{0}^{T}\Big|\frac{\partial F}{\partial x}\big(g(u),u\big)\, \mu
_{g}(u)\Big|du +\int_{0}^{T}\int_{0}^{1}\Big|\frac{\partial F}{\partial x}%
\big(g(u),u\big)\, \sigma_{g}(s,u)\Big|^2dsdu  \notag \\
& +\int_{0}^{T}\int_{0}^{1} \Big|\frac{\partial ^{2}F}{\partial x^{2}}\big(%
g(u),u\big)\Big|\, \sigma _{g}^{2}(s,u)dsdu \notag \\
& +\int_{0}^{T}\int_{0}^{1} \Big|%
\frac{\partial \sigma_{F}}{\partial x}\big(g(u),s,u\big)\,\sigma_{g}(s,u)%
\Big|dsdu.  \notag
\end{align}
\end{enumerate}
\end{definition}

\begin{theorem}
If the pair $(F,g)$ satisfies the It\^{o}-Wentzell conditions, then
\begin{equation}
\begin{split}  \label{IWK}
F\big(g(t),t\big)-F\big(g(0),0\big) &= \int_{0}^{t}\mu_{F}\big(g(u),u\big)du
+\int_{0}^{t}\int_{0}^{s}\sigma_{F}\big(g(u),r,u\big)B(dr,du) \\
& +\int_{0}^{t}\frac{\partial F}{\partial x}\big(g(u),u\big)\, \mu _{g}(u)du
\\
& +\int_{0}^{t}\int_{0}^{s}\frac{\partial F}{\partial x}\big(g(u),u\big)\,
\sigma_{g}(r,u)B(dr,du) \\
& +\frac{1}{2}\int_{0}^{t}\int_{0}^{s} \frac{\partial ^{2}F}{\partial x^{2}}%
\big(g(u),u\big)\, \sigma _{g}^{2}(r,u)drdu \\
& +\int_{0}^{t}\int_{0}^{s} \frac{\partial \sigma_{F}}{\partial x}\big(%
g(u),r,u\big)\,\sigma_{g}(r,u)drdu.
\end{split}%
\end{equation}
\end{theorem}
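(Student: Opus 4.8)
The plan is to establish \eqref{IWK} by the classical partition argument for It\^o--Wentzell-type identities: a telescoping sum in time, a second-order Taylor expansion in space, and a careful identification of every quadratic covariation that appears, after a preliminary localization that reduces everything to bounded, square-integrable data. First I would introduce stopping times $\tau_N$, equal to the first time that $|g(t)|$, the quantities $|\partial_x F(g(t),t)|$ and $|\partial_{xx}F(g(t),t)|$, or any of the running integrals making up $\mathcal{I}$ in Definition~\ref{def:IWC} reach $N$; by the assumed continuity of $F$, $g$ and these derivatives, $\tau_N\uparrow T$ almost surely, so it is enough to prove \eqref{IWK} on $[0,t\wedge\tau_N]$ and let $N\to\infty$, the limit being justified by \eqref{Int-IW} and dominated convergence. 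On the stopped interval every integrand is bounded and every integral against $B$ is a genuine square-integrable martingale. It is also convenient to expand the Brownian-sheet integrals in the orthonormal basis $\{\mathfrak{m}_n\}$ via \eqref{continuous}, prove the identity with the noise truncated to $w_1,\dots,w_n$, and then let $n\to\infty$ using the It\^o isometry; this turns each term $\int_0^t\int_0^s(\cdots)\,B(dr,du)$ into a finite sum of ordinary one-dimensional It\^o integrals, to which standard stochastic calculus applies directly.

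Next, fix a partition $0=t_0<\cdots<t_m=t$ and split each increment of $F(g(\cdot),\cdot)$ into a ``space'' part and a ``time'' part:
\[
F\big(g(t),t\big)-F\big(g(0),0\big)=\sum_i\Big(F\big(g(t_{i+1}),t_{i+1}\big)-F\big(g(t_i),t_{i+1}\big)\Big)+\sum_i\Big(F\big(g(t_i),t_{i+1}\big)-F\big(g(t_i),t_i\big)\Big).
\]
In the second sum the spatial argument $x=g(t_i)$ is frozen and $\mathcal{F}_{t_i}$-measurable, so the evolution equation for $F(x,\cdot)$ applies verbatim on each $[t_i,t_{i+1}]$ --- the $\mathcal{F}_{t_i}$-measurability of $g(t_i)$ being exactly what legitimizes the stochastic-integral increments. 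As the mesh tends to zero, continuity of $\mu_F$ and $\sigma_F$ along $g$ together with the localization bounds shows this sum converges to $\int_0^t\mu_F(g(u),u)\,du+\int_0^t\int_0^s\sigma_F(g(u),r,u)\,B(dr,du)$, i.e.\ the first and fourth terms on the right of \eqref{IWK}.

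For the first sum, Taylor-expand in $x$ about $g(t_i)$ with time held at $t_{i+1}$,
\[
F\big(g(t_{i+1}),t_{i+1}\big)-F\big(g(t_i),t_{i+1}\big)=\partial_x F\big(g(t_i),t_{i+1}\big)\,\Delta_i g+\tfrac{1}{2}\,\partial_{xx}F\big(\xi_i,t_{i+1}\big)\,(\Delta_i g)^2,
\]
where $\Delta_i g=g(t_{i+1})-g(t_i)$ and $\xi_i$ lies between $g(t_i)$ and $g(t_{i+1})$. The second-order sum converges to $\tfrac{1}{2}\int_0^t\int_0^s\partial_{xx}F(g(u),u)\,\sigma_g^2(r,u)\,dr\,du$, since $(\Delta_i g)^2$ contributes, in the limit, only the quadratic-variation density $\int_0^s\sigma_g^2(r,u)\,dr$ of $g$, while $\xi_i\to g(u)$ and $t_{i+1}\to u$ by continuity. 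In the first-order sum one cannot simply replace $\partial_x F(g(t_i),t_{i+1})$ by its value at time $t_i$: writing $\partial_x F(g(t_i),t_{i+1})=\partial_x F(g(t_i),t_i)+R_i$, where $R_i$ is the increment over $[t_i,t_{i+1}]$ of the process $u\mapsto\partial_x F(g(t_i),u)$ and, by differentiating the equation for $F$ in $x$, carries the martingale part $\int_{t_i}^{t_{i+1}}\int_0^s\partial_x\sigma_F(g(t_i),r,u)\,B(dr,du)$, one finds that $\sum_i\partial_x F(g(t_i),t_i)\,\Delta_i g$ converges to $\int_0^t\partial_x F(g(u),u)\,\mu_g(u)\,du+\int_0^t\int_0^s\partial_x F(g(u),u)\,\sigma_g(r,u)\,B(dr,du)$, while $\sum_i R_i\,\Delta_i g$ converges to the joint quadratic covariation $\int_0^t\int_0^s\partial_x\sigma_F(g(u),r,u)\,\sigma_g(r,u)\,dr\,du$; every product of $R_i$ with $\Delta_i g$ other than martingale-times-martingale vanishes in the limit. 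Summing the six limiting terms gives \eqref{IWK}.

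The main obstacle is precisely this cross term, $\sum_i R_i\,\Delta_i g\to\int_0^t\int_0^s\partial_x\sigma_F(g(u),r,u)\,\sigma_g(r,u)\,dr\,du$: here one is computing the quadratic covariation of $g$ with a process, $u\mapsto\partial_x F(g(t_i),u)$, whose ``label'' $g(t_i)$ jumps from block to block, so the usual covariation estimate has to be run increment by increment and the errors controlled uniformly before summing. In the $\{\mathfrak{m}_n\}$-expansion this reduces to using $\Delta w_n\,\Delta w_\ell\approx\delta_{n\ell}\,\Delta u$ together with Parseval's identity for the basis to recover $\int_0^s\partial_x\sigma_F(g(u),r,u)\,\sigma_g(r,u)\,dr$, and to estimating all remaining contributions in $L^1$ via the boundedness from the localization and the hypothesis \eqref{Int-IW}. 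A secondary, routine, point is justifying the differentiation of the equation for $F$ in $x$ and the interchange of $\partial_x$ with the $B$-integral, which is exactly where part (4) of Definition~\ref{def:IWC} ($F$ twice, and $\sigma_F$ once, continuously differentiable in $x$) is used.
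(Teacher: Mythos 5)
Your proposal is correct in outline, but it takes a genuinely different route from the paper: the paper does not prove the formula from first principles at all --- it reduces \eqref{IWK}, via the basis expansion \eqref{continuous} of the Brownian-sheet integral into a series of one-dimensional It\^{o} integrals, to Krylov's It\^{o}--Wentzell theorem, or alternatively to Kunita's generalized It\^{o} formula after identifying the two quadratic (co)variations $\langle M_g\rangle(t)=\int_0^t\int_0^{s}\sigma_g^2(r,u)\,dr\,du$ and $\bigl\langle \partial M_F/\partial x,\, M_g\bigr\rangle(t)=\int_0^t\int_0^{s}\frac{\partial\sigma_F}{\partial x}\big(g(u),r,u\big)\sigma_g(r,u)\,dr\,du$. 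You instead reconstruct the classical partition proof that underlies those cited theorems: localization, the space/time telescoping split, the Taylor expansion in $x$, and the identification of the cross term $\sum_i R_i\,\Delta_i g$ with the joint covariation. Your decomposition is the standard one and each claimed limit is the right one; in particular you correctly isolate the one genuinely delicate contribution (the covariation of $g$ with $u\mapsto\partial_x F(g(t_i),u)$), which is exactly the second bracket the paper reads off from Kunita. What the citation buys the paper is that all the $L^1$/$L^2$ limit estimates and, above all, the justification that $\partial_x F(x,\cdot)$ is a semimartingale whose martingale part is $\int_0^{\cdot}\int_0^{s}\frac{\partial\sigma_F}{\partial x}(x,r,u)B(dr,du)$ --- a step you call ``routine'' but which is not literally contained in part (4) of Definition~\ref{def:IWC} (nothing is assumed about $\partial\mu_F/\partial x$) and is precisely the technical content of the cited results --- come for free; what your approach buys is self-containedness and transparency about where each of the six terms of \eqref{IWK} and the integrability condition \eqref{Int-IW} are used. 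Two small slips to fix: the limit of your ``time'' sum yields the first and \emph{second} terms of \eqref{IWK}, not the first and fourth; and since only the martingale part of $\partial_x F(x,\cdot)$ enters the covariation, you should phrase the differentiation step as identifying that martingale part rather than differentiating the full evolution equation for $F$ in $x$.
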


\begin{proof}
This follows by combining \cite[Theorem 3.1]{Kryl11} with \eqref{continuous}%
. Alternatively, one can derive \eqref{IWK} from \cite[Theorem 3.3.1]{Kun97}
by writing
\begin{equation*}
F(x,t)=F(x,0)+\int_0^t \mu_F(x,u)du+M_F(x,t),\ g(t)=g(0)+\int_0^t
\mu_g(u)du+M_g(t),
\end{equation*}
and noticing that
\begin{align*}
\langle M_g\rangle(t)&=\int_0^t\int_0^{s} \sigma^2_g(r,u)dr du, \\
\left\langle \frac{\partial{F}}{\partial x}, g \right\rangle (t)&=
\left\langle \frac{\partial{M_F}}{\partial x}, M_g \right\rangle(t)=
\int_{0}^{t}\int_{0}^{s} \frac{\partial \sigma_{F}}{\partial x}\big(g(u),r,u%
\big)\,\sigma_{g}(r,u)dsdu.
\end{align*}
Note that some form of \eqref{Int-IW} is necessary to define the right-hand
side of \eqref{IWK}.
\end{proof}

\section{A Market with Atomistic Traders and a Large Trader}

\label{sec:MarketModel}

We view atomistic traders as a continuum of traders. Each atomistic trader
submits an infinitesimally small order at a price $p$, and there is no
concentration of orders at any particular price. This lack of order
concentration distinguishes atomistic traders from the large trader.

Similar to \cite[Assumption A1]{Jar92}, we assume that the market is
frictionless, that is, there are no transaction costs. We also assume that
the buy and sell limit price $p$ can take every value between $0$ and $S$,
and the orders can be submitted to the market at any time $t\in [0,T]$.

\begin{definition}
The net demand curve $Q=Q(p,t,\omega)$ is a real-valued function on $%
[0,S]\times [0,T]\times \Omega$. The number $Q(p,t,\omega )$\ is equal to
the difference between the total quantity of shares \textbf{submitted} for
purchase at price lower than or equal to $p$ and the total quantity of
shares \textbf{submitted} for sale at price larger than or equal to $p$
between time $0$ and time $t$ by atomistic traders.
\end{definition}

\textbf{Assumption Q1.} \emph{For every $t\in [0, T]$, the function $%
Q=Q(p,t) $ is twice continuously differentiable and strictly decreasing in
its first argument $p$.}

If $N$ is the total number of shares outstanding on the market, then, for
all $p\in (0,S)$ and $t\in [0,T]$,
\begin{equation}  \label{cross}
-N\leq Q(S,t)<Q(p,t)<Q(0,t)\leq N.
\end{equation}
The fact that the net demand curve is decreasing as a function of the price $%
p$ is an immediate consequence of the market mechanism. Indeed, the number
of shares of available buy orders is decreasing with price, while the number
of shares of available sell orders is increasing: buy low, sell high. Our
Assumption \textbf{Q1} is similar to Assumption A3 in \cite{Jar92} or
Assumption 2 in \cite{BB04}; see also Remark \ref{remarkPQ} below.

Note that, for a fixed price $p$, the total quantity of shares submitted,
either for purchase or for sale, does not need to be increasing in time. As
a result, there is no monotonicity condition  on $Q(p,t)$ as a function of $t
$.

Denote by $P(x,t)$ the price available on the market at time $t\in [0,T]$ when the large trader's position is $x\in [x_{\min},x_{\max}]$, with negative values of $x$ corresponding to a short position. We assume the following \emph{market clearing condition}:
\begin{equation}  \label{MCE-const}
Q\big(P(x,t),t\big)+x = C,
\end{equation}
where $C$ is a constant.

\begin{remark}
\label{remarkPQ}\textrm{Equality \eqref{MCE-const} means that the sum of the
number of shares held by the large trader and the current net demand at the
corresponding price does not depend on time. From a mathematical point of
view, the particular value of the constant on the right hand side of %
\eqref{MCE-const} is not important at this point, and we will set it to
zero:
\begin{equation}
Q\big(P(x,t),t\big)+x=0.  \label{market_clearing_equation}
\end{equation}
By Assumption \textbf{Q1}, $Q(p,t)$ is monotonically decreasing in $p$, and then %
\eqref{market_clearing_equation} implies that $P(x,t)$ is monotonically
increasing in $x$, which is exactly Assumption 2 from \cite{BB04}. By %
\eqref{cross}, $P(x,t)\in [0,S]$ for all $x\in [Q(S,t),Q(0,t)]$, $t\in [0,T]$%
.}
\end{remark}

The position of the large trader at time $t$ is the predictable process $%
\theta=\theta (t)$. It is also called the \textit{large trader trading
strategy}. The process $\theta$ must be a semimartingale satisfying
\begin{equation}  \label{thetabound}
x_{\min} \leq \theta(t)\leq x_{\max}.
\end{equation}

In order for \eqref{market_clearing_equation} to always have a solution, we
assume without loss of generality that:
\begin{equation}  \label{thetaQ}
Q(S,t)<x_{\min}<x_{\max}<Q(0,t).
\end{equation}

As shown in \cite{BB04}, optimal trading strategies are continuous, so, with
no loss of generality, we assume that the process $\theta$ is continuous.
Denote by $\Theta$ the set of all trading strategies, that is, continuous
semimartingales satisfying \eqref{thetabound}.

Next, define the \textit{asymptotic liquidation proceeds} of the large
trader in a fixed position $\vartheta$ at time $t$ by
\begin{equation}  \label{ALP}
L(\vartheta ,t)=\int_{0}^{\vartheta }P(x,t)dx.
\end{equation}

The large trader \textit{holdings in the bank account} are denoted by $%
\beta^{\theta }(t)$. The \textit{\ realizable wealth} of the large trader
achieved by the trading strategy $\theta $ is denoted by $V^{\theta }(t)$,
where
\begin{equation*}
V^{\theta }(t)=\beta^{\theta }(t)+L\big(\theta(t),t\big).
\end{equation*}
In what follows, we use the notation
\begin{equation*}
L\big(\theta(t),dt\big)=\big(L\big(x,t\big)
dP (x,t )\big)\big|_{x=\theta(t)}.
\end{equation*}

\begin{proposition}
\label{prop:RW} For every $\theta\in \Theta, $
\begin{equation}
V^{\theta }(t)-V^{\theta }(0) =\int_0^tL\big(\theta(u),du\big) -\frac{1}{2}%
\int_{0}^{t}\frac{\partial P}{\partial x} \big(\theta(u),u\big)d\langle
\theta \rangle(u),  \label{lemma32BB04}
\end{equation}
where $\langle \theta \rangle$ is the quadratic variation of $\theta$.
\end{proposition}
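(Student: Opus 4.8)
The plan is to split $V^{\theta}=\beta^{\theta}+L\big(\theta(\cdot),\cdot\big)$ and treat the two summands with different tools. The process $t\mapsto L\big(\theta(t),t\big)$ is the composition of the random field $L(x,t)=\int_{0}^{x}P(y,t)\,dy$ (see \eqref{ALP}) with the continuous semimartingale $g=\theta$; since $\theta$ is adapted to $\{\mathcal{F}_{t}\}$, its martingale part is a stochastic integral against $B$, so this composition is governed by the It\^{o}--Wentzell formula \eqref{IWK}. The process $\beta^{\theta}$ is governed by the model's self-financing (trading-cost) rule, which charges the large trader, for each marginal share traded, the price read off the net demand curve. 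So the first step is to record these two dynamics and add them, expecting the transaction-cost terms to cancel and leave \eqref{lemma32BB04}.

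To apply \eqref{IWK} I first need the regularity and semimartingale structure of $L$, which come from those of $P$, which in turn come from the market-clearing equation \eqref{market_clearing_equation}. By Assumption \textbf{Q1}, $Q$ is $C^{2}$ and strictly decreasing in $p$, so $\partial Q/\partial p<0$ on the relevant region; implicit differentiation of $Q\big(P(x,t),t\big)+x=0$ then shows that $P$ is $C^{2}$ in $x$ with $\partial P/\partial x=-\big(\partial Q/\partial p\big)^{-1}>0$, and that for each $x$ the process $P(x,\cdot)$ is a semimartingale whose drift and diffusion coefficients inherit the regularity of those of $Q$. Consequently $\partial L/\partial x=P$, $\partial^{2}L/\partial x^{2}=\partial P/\partial x$, and the diffusion coefficient of $L$ is the $x$-antiderivative of that of $P$ (so its $x$-derivative is again the diffusion coefficient of $P$) — which is all the differentiability the It\^{o}--Wentzell conditions demand, leaving \eqref{Int-IW} as the one hypothesis to check by hand. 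Feeding these identities into \eqref{IWK} yields, for $L\big(\theta(t),t\big)-L\big(\theta(0),0\big)$, a ``frozen-$x$'' time differential $\int_{0}^{t}L\big(\theta(u),du\big)$, a first-order term $\int_{0}^{t}P\big(\theta(u),u\big)\,d\theta(u)$, a curvature term $\tfrac{1}{2}\int_{0}^{t}\frac{\partial P}{\partial x}\big(\theta(u),u\big)\,d\langle\theta\rangle(u)$, and the mixed It\^{o}--Wentzell term. Combining with the self-financing dynamics of $\beta^{\theta}$ — which contribute $-\int_{0}^{t}P\big(\theta(u),u\big)\,d\theta(u)$ together with the second-order effect of the demand curve shifting under the trade — the first-order terms cancel and the second-order pieces combine to the single term $-\tfrac{1}{2}\int_{0}^{t}\frac{\partial P}{\partial x}\big(\theta(u),u\big)\,d\langle\theta\rangle(u)$, which is \eqref{lemma32BB04}. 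This is the stochastic-string analogue of Lemma~3.2 of \cite{BB04}; an alternative route is to prove the identity first for finite-variation $\theta$, where it reduces to the ordinary product rule (no curvature term, since $\langle\theta\rangle\equiv 0$), and then extend to all $\theta\in\Theta$ by a semimartingale approximation, the liquidity cost $-\tfrac{1}{2}\int\frac{\partial P}{\partial x}\,d\langle\theta\rangle$ emerging in the limit.

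I expect the main obstacle to be the precise accounting of the second-order terms. One must fix which price the trader is charged when executing a strategy of nonzero quadratic variation — the pre-trade quote $P\big(\theta(u),u\big)$, the post-trade quote, or the average along the curve $\int_{\theta(u)}^{\theta(u)+d\theta}P(x,u)\,dx/d\theta$ — because this choice is exactly what controls whether the It\^{o}--Wentzell curvature and mixed terms combine with the trading cost to leave the coefficient $-\tfrac{1}{2}$ rather than $+\tfrac{1}{2}$ or $0$; I would settle it by computing the block-trade cost $\sum_{i}\int_{\theta(t_{i})}^{\theta(t_{i+1})}P(x,t_{i})\,dx$ over a partition and passing to the limit, matching it term by term against \eqref{IWK}. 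The remaining technical point is the integrability hypothesis \eqref{Int-IW} for the pair $(L,\theta)$: this is where \eqref{thetabound}, \eqref{thetaQ} and \eqref{cross} do the work, confining $\theta$ and the executed price $P\big(\theta(\cdot),\cdot\big)$ to a fixed compact subset of $[x_{\min},x_{\max}]\times[0,S]$ on which the continuous functions $P$, $\partial P/\partial x$, $\partial^{2}P/\partial x^{2}$ and the coefficients of $P$ and of $\theta$ are bounded, so that each of the six integrals in \eqref{Int-IW} is dominated by a constant multiple of $\langle\theta\rangle(T)$ plus the total variation over $[0,T]$ of the finite-variation part of $\theta$, both finite for $\theta\in\Theta$.
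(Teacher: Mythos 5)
Your proposal is correct and follows essentially the same route as the paper, whose proof is simply ``apply the It\^{o}--Wentzell formula, see \cite[Lemma 3.2]{BB04}'': you apply \eqref{IWK} to $L\big(\theta(t),t\big)$ using $\partial L/\partial x=P$ and $\partial^{2}L/\partial x^{2}=\partial P/\partial x$, and combine with the self-financing dynamics of $\beta^{\theta}$ (including the covariation of the executed price with $\theta$) so that the first-order terms cancel and the curvature, mixed, and trading-cost terms collapse to $-\tfrac{1}{2}\int_{0}^{t}\frac{\partial P}{\partial x}\big(\theta(u),u\big)\,d\langle\theta\rangle(u)$. Your identification of the block-trade limit as the way to pin down the sign of the liquidity term, and of \eqref{thetabound}, \eqref{thetaQ}, \eqref{cross} as the source of the integrability hypothesis \eqref{Int-IW}, is exactly the content deferred to \cite{BB04}.
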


\begin{proof}
This follows by the It\^{o}-Wentzell formula; see \cite[Lemma 3.2]{BB04} for
details.
\end{proof}

\begin{corollary}
\label{cor:SPM} If the process
\begin{equation*}
t\mapsto \int_{0}^{t}L\big(\theta(u),du\big),\ \ t\in [0,T],
\end{equation*}
is a local martingale under an equivalent martingale measure $\mathbb{Q}$,
then the realizable wealth $V^{\theta}$ is a supermartingale under $\mathbb{Q%
}$.
\end{corollary}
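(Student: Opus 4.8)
\emph{Proof plan.} The engine is Proposition~\ref{prop:RW}, which gives, for $\theta\in\Theta$,
\[
V^{\theta}(t)-V^{\theta}(0)=\int_{0}^{t}L\big(\theta(u),du\big)-\frac{1}{2}\int_{0}^{t}\frac{\partial P}{\partial x}\big(\theta(u),u\big)\,d\langle\theta\rangle(u)=:M(t)-A(t).
\]
By assumption $M$ is a $\mathbb{Q}$-local martingale, so the whole statement reduces to checking that $A$ is a continuous, adapted, \emph{nondecreasing} process; granting that, $V^{\theta}=V^{\theta}(0)+M-A$ is a local martingale minus an increasing process, hence a $\mathbb{Q}$-local supermartingale, and it only remains to drop the word ``local''.

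First I would pin down the sign of $\partial P/\partial x$. By Remark~\ref{remarkPQ} the map $x\mapsto P(x,t)$ is increasing, and it is differentiable because, differentiating the market-clearing identity $Q\big(P(x,t),t\big)+x=0$ in $x$ and using Assumption~\textbf{Q1},
\[
\frac{\partial P}{\partial x}(x,t)=-\Big(\frac{\partial Q}{\partial p}\big(P(x,t),t\big)\Big)^{-1}\ge 0.
\]
Since $\langle\theta\rangle$ is nondecreasing for any continuous semimartingale, the integrand of $A$ is nonnegative, so $A$ is nondecreasing; continuity and adaptedness are inherited from $P$, $\theta$ and $\langle\theta\rangle$. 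This already shows $V^{\theta}$ is a $\mathbb{Q}$-local supermartingale.

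To promote this to a genuine supermartingale I would localize so as to kill both defects at once: choose $\tau_{n}\uparrow\infty$ reducing $M$ and also with $\tau_{n}\le\inf\{t:A(t)\ge n\}$ (admissible since $A$ is continuous and adapted). Then $M^{\tau_{n}}$ is a true $\mathbb{Q}$-martingale and $A^{\tau_{n}}$ is bounded, so $V^{\theta}(\cdot\wedge\tau_{n})$ is an honest supermartingale and $\mathbb{E}^{\mathbb{Q}}\big[V^{\theta}(t\wedge\tau_{n})\mid\mathcal{F}_{s}\big]\le V^{\theta}(s\wedge\tau_{n})$ for $s\le t$. Letting $n\to\infty$, the right-hand side tends to $V^{\theta}(s)$ by continuity; on the left I would invoke the conditional Fatou lemma, which applies once $\{V^{\theta}(t\wedge\tau_{n})\}_{n}$ has a $\mathbb{Q}$-integrable lower bound. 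Such a bound is built into the model: from $0\le P\le S$ and $\theta(t)\in[x_{\min},x_{\max}]$ one gets $\big|L\big(\theta(t),t\big)\big|=\big|\int_{0}^{\theta(t)}P(x,t)\,dx\big|\le S\max(|x_{\min}|,|x_{\max}|)$, so $V^{\theta}$ is bounded below provided the bank holdings $\beta^{\theta}$ are, which is the natural admissibility condition on $\theta$. Under that proviso conditional Fatou yields $\mathbb{E}^{\mathbb{Q}}[V^{\theta}(t)\mid\mathcal{F}_{s}]\le\liminf_{n}V^{\theta}(s\wedge\tau_{n})=V^{\theta}(s)$, as claimed.

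The calculations are essentially one-line (the sign of $\partial P/\partial x$ and the monotonicity of $\langle\theta\rangle$); the only delicate point is the final promotion, since a local martingale minus an increasing process is in general only a \emph{local} supermartingale. So the real obstacle is a modelling one — ensuring the admissible strategies in $\Theta$ carry a lower bound on wealth (equivalently, reading the hypothesis together with such a bound) — after which the supermartingale property is automatic.
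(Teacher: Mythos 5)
Your proof is correct and follows the same route as the paper: decompose $V^{\theta}$ via Proposition~\ref{prop:RW} into the local martingale $\int_0^t L(\theta(u),du)$ minus the nonnegative, nondecreasing correction term, whose sign comes from $\partial P/\partial x>0$ (a consequence of \eqref{market_clearing_equation} and Assumption \textbf{Q1}) and the monotonicity of $\langle\theta\rangle$. The paper's proof stops there; your additional localization/Fatou step, and the observation that one needs an integrable lower bound on wealth (an admissibility condition on $\beta^{\theta}$) to upgrade the local supermartingale to a genuine one, addresses a gap the paper leaves implicit and is a legitimate refinement rather than a different argument.
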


\begin{proof}
Indeed,
\begin{equation*}
\int_{0}^{t}\frac{\partial P}{\partial x} \big(\theta(u),u\big)d\langle
\theta \rangle(u)\geq 0,
\end{equation*}
for all $t\geq 0$, because the process $\langle \theta \rangle $ is
increasing and, by \eqref{market_clearing_equation},
\begin{equation*}
\partial P(x,t)/\partial x>0.
\end{equation*}
\end{proof}

\begin{definition}
An\textit{\ arbitrage strategy} is a trading strategy $\theta\in \Theta$
such that
\begin{equation*}
V^{\theta }(0)=0,\ \ \mathbb{P}\big(V^{\theta }(T) \geq 0\big)=1, \ \
\mathbb{P}\big(V^{\theta }(T) >0\big)>0.
\end{equation*}
A \textit{market model admits arbitrage} if there exists an arbitrage
strategy.
\end{definition}

\section{Arbitrage in Finite-Factor Models}

\label{Sec:AFFM}

Consider the price process \footnote{%
According to Martin Schweizer (private communication), this model is more of
theoretical interest: in most applications, prices are not limited above.
See our online supplement for a more complicated model where prices are not
limited above but that need at least two stocks to generate arbitrage.}
\begin{equation}  \label{eq:P}
P(x,t)=\mu(t)+\sigma(x)h\big(Z(t)\big), \ x\in [x_{\min},x_{\max}],\ t\geq 0,
\end{equation}
where

\begin{itemize}
\item $x_{\min}<0$ is the largest short position the large trader can take;

\item $\mu=\mu(t)$ is a non-random, positive, continuously differentiable
function with $\mu^{\prime }(t)>0$;

\item $\sigma=\sigma(x)$ is a non-random, bounded, strictly increasing
smooth function;

\item $h=h(y),\ y\in \mathbb{R},$ is a strictly positive, bounded function;

\item $Z=Z(t)$ is a real-valued noise process.
\end{itemize}

To ensure that the price stays positive, we assume that
\begin{equation}  \label{eq:pos}
\delta_0= \mu(0) + \min\limits_{x,y} \big(\sigma(x)h(y)\big) \geq 0.
\end{equation}

As a concrete example, take
\begin{equation}  \label{eq:B}
P(x,t)= 20+2t+(2x-1)\big(2+\sin\big(Z(t)\big)\big),\ t\geq 0,\ x\in [-2,2];\
\end{equation}
\begin{equation*}
Z(t)=\sigma_0(t)\int_0^t\int_0^1b_0(s,u)B(ds,du),
\end{equation*}
with a Brownian sheet $B=B(s,t)$ and suitable functions $\sigma_0$ and $b_0$.

If $\sigma(x^*)=0$ for some $x^*>0$, then the  constant (time-independent)
buy strategy
\begin{equation*}
\theta(t)=x^*,\ t\geq 0,
\end{equation*}
is an arbitrage strategy.

Indeed, the asymptotic liquidation process is
\begin{equation*}
\begin{split}
L(x^*,t)&=\int_0^{x^*} P(x,t)dx =\mu(t)x^*+ h\big(Z(t)\big)\int_0^{x^*}
\sigma(x)dx \\
& = x^*\left( \mu(t) +\frac{h\big(Z(t)\big)}{x^*} \int_0^{x^*}
\sigma(x)dx\right).
\end{split}%
\end{equation*}
Strict monotonicity of $\sigma$ means
\begin{equation*}
\frac{1}{x^*} \int_0^{x^*} \sigma(x)dx > \sigma(0),
\end{equation*}
and therefore, by \eqref{eq:pos},
\begin{equation}  \label{eq:L}
L(x^*,t)> x^*\big(\mu(t)+\sigma(0)h\big(Z(t)\big)\big) >x^*\,\delta_0
\end{equation}
for all $t\geq 0$; the second inequality is also strict because $%
\sigma(0)>\sigma(x_{\mathrm{min}})$ and $h\big(Z(t)\big)>0$. As a result,
\begin{eqnarray}  \label{eq:arb}
V^{x^*}(t)-V^{x^*}(0) &=& \int_0^tL(x^*,ds) \notag \\
&=& \int_0^t L(x^{*},s)\mu'(s)ds > x^{*} \delta_0 \int_0^t \mu'(s) ds  \notag \\
&=& x^{*} \delta_0[\mu(t) - \mu(0)] \notag \\
&>& 0. 
\end{eqnarray}
for all $t>0$ and for all $\omega\in \Omega$.

In this example, it is easier to work with the process $P$, but we can also
derive the corresponding equation for $Q$:
\begin{equation*}
Q(p,t)=-\sigma^{-1}\left(\frac{p-\mu(t)}{h\big(Z(t)\big)}\right),
\end{equation*}
where $\sigma^{-1}$ is the inverse of $\sigma$, that is, $\sigma^{-1}\big(%
\sigma(x)\big)=x$. This follows by inverting $P$, as a function of $x$,
using \eqref{eq:P} and the relation $Q\big(P(x,t),t\big)+x=0$.

\section{Conditions for Absence of Arbitrage}

\label{sec:NA}

\textbf{Assumption Q2.} \emph{The process $Q(p,\cdot)$ has representation
\begin{align}
&dQ(p,t) =\mu _{Q}(p,t)dt +\sigma_{Q}(p,t)\int_{0}^{1}b_{Q}(p,s,t)B(ds,dt),  \label{GenMod1} \\
&\int_{0}^{1}b_{Q}^{2}(p,s,t)ds =1.\   \label{GenMod3}
\end{align}
where $p\in [0,S], t\in[ 0 ,T]$, and the processes $\mu _{Q}(p,\cdot)$, $\sigma _{Q}(p,\cdot)$ and $ b_{Q}(p,s,\cdot)$ are $\mathbb{R}$-valued and $\mathcal{F}_{t}$-adapted.}

\begin{remark}
\textrm{Equality \eqref{GenMod1} is the usual semimartingale condition on
the process $Q$. While not every process \eqref{GenMod1} is monotone in $p$,
a straightforward way to ensure monotonicity is to set
\begin{equation*}
Q(p,t)=\Psi\big(p,t,B\big)
\end{equation*}
for some smooth function $\Psi$ that is strictly decreasing in the first
argument. Other examples are below in this section. }

\textrm{Condition \eqref{GenMod3} is a standard normalization, which, with
the presence of $\sigma_{Q}$, leads to no loss of generality.}
\end{remark}

\textbf{Assumption Q3.} \emph{The function $Q=Q(p,t)$ is twice continuously
differentiable with respect to $p$ for every $t$ and is continuous in $t$
for every $p$, and the function
\begin{equation*}
\tilde{\sigma}_Q(p,s,t)=\sigma_Q(p,t) b_Q(p,s,t).
\end{equation*}
is continuously differentiable with respect to $p$ for every $s$ and $t$.}

For notational convenience, we introduce the function
\begin{equation*}
C(p,t)=-\frac{1}{\frac{\partial Q}{\partial p}(p,t)} \int_{0}^{1} \frac{%
\partial \tilde{\sigma} _{Q}(p,s,t)}{\partial p} \tilde{\sigma}_Q(p,s,t) ds.
\end{equation*}

\begin{theorem}
\label{lemma::para} Suppose that Assumptions \textbf{Q1}, \textbf{Q2}, and \textbf{Q3} hold. Then, for
every $x\in [x_{\mathrm{min}},x_{\mathrm{\max}}]$, the price process $%
t\mapsto P(x,\cdot)$ satisfies
\begin{equation}  \label{P_1}
dP(x,t) =\mu _{P}(x,t)dt+\sigma _{P}(x,t)\int_{0}^{1}b_{P}(x,s,t)B(ds,dt),
\end{equation}
where
\begin{align}
&\mu _{P}(x,t) =-\frac{\mu _{Q}\big(P(x,t),t\big) +\frac{1}{2}\frac{\partial
^{2}Q}{\partial p^{2}}\big(P(x,t),t\big)\sigma _{P}^{2}\big(x,t\big) +C%
\big(P(x,t),t\big)}{\frac{\partial Q}{\partial p}\big(P(x,t),t\big)},
\label{mu_Px} \\
&\sigma _{P}(x,t) =\frac{\sigma _{Q}\big(P(x,t),t\big)}{\frac{\partial Q}{%
\partial p}\big(P(x,t),t\big)},  \label{sigma_P} \\
&b_{P}(x,s,t) =-b_{Q}\big(P(x,t),s,t\big).  \label{b_P}
\end{align}
\end{theorem}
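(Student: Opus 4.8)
The plan is to realize $P(x,\cdot)$ as the composition of the random field $Q$ with its own spatial inverse and to apply the It\^{o}-Wentzell formula. Fix $x\in[x_{\min},x_{\max}]$ and set $g(t)=P(x,t)$; by the market clearing equation \eqref{market_clearing_equation} we have $Q\big(g(t),t\big)=-x$ for every $t$, so the process $t\mapsto Q\big(g(t),t\big)$ is constant. The first task is to show that $g$ is itself a continuous semimartingale driven by $B$, i.e.\ that there are $\mathcal{F}_t$-adapted coefficients $a(x,\cdot)$ and $c(x,s,\cdot)$ with
\begin{equation*}
dP(x,t)=a(x,t)\,dt+\int_0^1 c(x,s,t)B(ds,dt).
\end{equation*}
This follows from Assumption \textbf{Q1} (strict monotonicity and $C^2$-smoothness of $Q(\cdot,t)$, so the spatial inverse $p\mapsto Q^{-1}(-x,t)$ is well defined and $C^2$) together with Assumptions \textbf{Q2}--\textbf{Q3}, via an inverse-function argument for random fields of the type in \cite{Kun97}; alternatively one posits the representation above and verifies it a posteriori. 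One then checks that the pair $(F,g)=(Q,P(x,\cdot))$ satisfies the It\^{o}-Wentzell conditions of Definition \ref{def:IWC}: measurability of the initial data and adaptedness come from Assumptions \textbf{Q2}--\textbf{Q3}, the required $p$-differentiability of $Q$ and of $\tilde{\sigma}_Q$ is precisely Assumptions \textbf{Q1} and \textbf{Q3}, and the integrability \eqref{Int-IW} is arranged by a standard localization with stopping times (using that $P(x,t)$ stays in the compact interval $[0,S]$ by Remark \ref{remarkPQ}).

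Granting this, apply the It\^{o}-Wentzell formula \eqref{IWK} with $\mu_F=\mu_Q$, $\sigma_F=\tilde{\sigma}_Q$, $\mu_g=a(x,\cdot)$, $\sigma_g=c(x,\cdot)$, and with $d\langle g\rangle(u)=\big(\int_0^1 c^2(x,s,u)\,ds\big)du$. Since the left-hand side of \eqref{IWK} is constant in $t$, uniqueness of the semimartingale decomposition forces the $B(ds,dt)$-part and the $dt$-part of the right-hand side to vanish separately. Vanishing of the martingale part gives, for a.e.\ $(s,t,\omega)$,
\begin{equation*}
\tilde{\sigma}_Q\big(P(x,t),s,t\big)+\frac{\partial Q}{\partial p}\big(P(x,t),t\big)\,c(x,s,t)=0,
\end{equation*}
so, dividing by $\partial Q/\partial p<0$ (Assumption \textbf{Q1}),
\begin{equation*}
c(x,s,t)=-\frac{\sigma_Q\big(P(x,t),t\big)\,b_Q\big(P(x,t),s,t\big)}{\frac{\partial Q}{\partial p}\big(P(x,t),t\big)}.
\end{equation*}
Writing this as $c(x,s,t)=\sigma_P(x,t)\,b_P(x,s,t)$ with $\sigma_P$ as in \eqref{sigma_P} and $b_P$ as in \eqref{b_P} is immediate, and the normalization \eqref{GenMod3} for $b_Q$ transfers to $\int_0^1 b_P^2(x,s,t)\,ds=1$, so in particular $\int_0^1 c^2(x,s,t)\,ds=\sigma_P^2(x,t)$.

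It remains to read off the drift. Setting the $dt$-part of \eqref{IWK} to zero yields
\begin{equation*}
\mu_Q\big(P(x,t),t\big)+\frac{\partial Q}{\partial p}\big(P(x,t),t\big)a(x,t)+\tfrac12\frac{\partial^2 Q}{\partial p^2}\big(P(x,t),t\big)\sigma_P^2(x,t)+\int_0^1\frac{\partial\tilde{\sigma}_Q}{\partial p}\big(P(x,t),s,t\big)c(x,s,t)\,ds=0.
\end{equation*}
Substituting $c(x,s,t)=-\tilde{\sigma}_Q\big(P(x,t),s,t\big)\big/\frac{\partial Q}{\partial p}\big(P(x,t),t\big)$ into the last integral identifies it with $C\big(P(x,t),t\big)$, by the definition of $C$. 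Solving the resulting scalar equation for $a(x,t)$ (again dividing by $\partial Q/\partial p\neq0$) gives $a(x,t)=\mu_P(x,t)$ as in \eqref{mu_Px}, completing the identification of \eqref{P_1}. I expect the only genuine difficulty to be the first step: establishing a priori that $P(x,\cdot)$ is a semimartingale of the stated form and discharging the technical hypotheses of the It\^{o}-Wentzell theorem (especially \eqref{Int-IW}, which Assumptions \textbf{Q1}--\textbf{Q3} guarantee only after localization). Once \eqref{IWK} is legitimately applied, the rest is the algebra of matching the two parts of a semimartingale decomposition, and the appearance of $C(p,t)$ is forced by the cross-term $\partial_p\tilde{\sigma}_Q\cdot c$.
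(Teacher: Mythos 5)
Your proposal is correct and follows essentially the same route as the paper: apply the It\^{o}-Wentzell formula to the constant process $t\mapsto Q\big(P(x,t),t\big)=-x$ and set the martingale and drift components to zero to read off \eqref{sigma_P}, \eqref{b_P}, and then \eqref{mu_Px}. The paper's proof is terser (it simply verifies that the stated coefficients make $dQ(P(x,t),t)=0$, and handles \eqref{Int-IW} by invoking boundedness of $Q$ rather than localization), but the substance of the argument is identical to yours.
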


\begin{proof}
Monotonicity of $Q$ implies that the process $P$ defined by %
\eqref{market_clearing_equation} exists and is unique, and $\frac{\partial Q%
}{\partial p}<0$, so that (i) the expression (\ref{mu_Px}) is well-defined
and (ii) $P(x,t)\in [0,S]$. Boundedness of $Q$ implies that \eqref{Int-IW}
holds.

Using (\ref{IWK}), we will now verify that \eqref{P_1} defines the required
process:
\begin{equation}  \label{ItoWE}
\begin{split}
0=dQ&\big(P(x,t),t\big)=\mu _{Q}\big(P(x,t),t\big)dt +\int_{0}^{1}\tilde{%
\sigma}_{Q}\big(P(x,t),s,t\big)B(ds,dt) \\
&+\frac{\partial Q}{\partial p}\big(P(x,t),t\big)\left( \mu _{P}(x,t)dt
+\sigma_{P}(x,t)\int_{0}^{1}b_{P}(x,s,t)B(ds,dt)\right) \\
&+\frac{\sigma^2_{P}(x,t) }{2} \frac{\partial ^{2}Q}{\partial p^{2}}\big(%
P(x,t),t\big)dt \\
&+\sigma_P(x,t)\left(\int_0^1\frac{\partial \tilde{\sigma}_Q}{\partial p}%
\big(P(x,t),s,t\big) b_P(x,s,t)ds\right)dt.
\end{split}%
\end{equation}

Setting the martingale component in \eqref{ItoWE} equal to zero yields %
\eqref{sigma_P} and \eqref{b_P}. After that, setting the drift component in %
\eqref{ItoWE} equal to zero yields \eqref{mu_Px}.
\end{proof}

Next, we investigate the conditions for existence of an equivalent
martingale measure, that is, the measure under which the price process $P$
is a martingale. For notational convenience, we define
\begin{equation}
A(p,t) =\mu _{Q}(p,t) +\frac{1}{2}\frac{\partial ^{2}Q}{\partial p^{2}}(p,t)
\left(\frac{\sigma_{Q}(p,t)}{\frac{\partial Q}{\partial p}(p,t)}\right)^{2}
+C(p,t). \label{eqn::A_p_t}
\end{equation}

\begin{definition}
The \textit{market price of risk} is a random function $\lambda=\lambda(s,t)$
such that
\begin{equation}
\int_{0}^{1}\tilde{\sigma}_Q(p,s,t )\lambda (s,t)ds =A(p,t),\ \ p\in[0,S], \
t\in [0,T],  \label{MPR}
\end{equation}
and, for every $s\in [0,1]$ and $t\in [0,T]$, the random variable $%
\lambda(s,t)$ is $\mathcal{F}_{t}$-measurable We call equation \eqref{MPR}
\textit{market price of risk equation in demand format}.
\end{definition}

\begin{remark}
\textrm{While we consider $Q$ as the main input (or primitive) for the
underlying model, it is possible, by \eqref{market_clearing_equation}, to
take $P$ instead of $Q$ as the corresponding primitive. Then, by %
\eqref{mu_Px} and \eqref{sigma_P}, formula \eqref{MPR_price} becomes
\begin{equation}
\int_{0}^{1}\tilde{\sigma}_P(x,s,t )\lambda (s,t)ds =\mu_P(x,t) ,\ \ \ x\in[%
x_{\min}, x_{\max}],\ t\in [0,T],  \label{MPR_price}
\end{equation}
with $\tilde{\sigma}_P(x,s,t )=\sigma_P(x,t)b_P(x,s,t)$. We call %
\eqref{MPR_price} the \textit{market price of risk equation in price format}.%
}
\end{remark}

\begin{theorem}
\label{thm::no_arbitrage} In addition to Assumptions \textbf{Q1}, \textbf{Q2}%
, and \textbf{Q3}, suppose that equation \eqref{MPR} has a solution $%
\lambda=\lambda(s,t)$ satisfying \eqref{E1}. Define the measure $\mathbb{Q}$
according to \eqref{MM}. Then, for every $x\in [x_{\min}, x_{\max}]$, the
price process $t\mapsto P(x,t)$ is a martingale with respect to the measure $%
\mathbb{Q} $.
\end{theorem}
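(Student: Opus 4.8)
The plan is to take the explicit dynamics of the clearing price furnished by Theorem~\ref{lemma::para} and kill the drift by a change of measure. Since $\lambda$ satisfies \eqref{E1}, the prescription \eqref{MM} defines a probability measure $\mathbb{Q}\sim\mathbb{P}$ on $\mathcal{F}_T$, and Theorem~\ref{th:Girsanov} tells us that $B^{\mathbb{Q}}(s,t)=B(s,t)+\int_0^t\int_0^s\lambda(r,u)\,dr\,du$ is a Brownian sheet under $\mathbb{Q}$. Fix $x\in[x_{\min},x_{\max}]$. Substituting $B(ds,dt)=B^{\mathbb{Q}}(ds,dt)-\lambda(s,t)\,ds\,dt$ into the representation \eqref{P_1} rewrites $P(x,\cdot)$ as
\[
dP(x,t)=\Bigl(\mu_P(x,t)-\sigma_P(x,t)\int_0^1 b_P(x,s,t)\lambda(s,t)\,ds\Bigr)dt+\sigma_P(x,t)\int_0^1 b_P(x,s,t)\,B^{\mathbb{Q}}(ds,dt).
\]
So everything comes down to showing the $\mathbb{Q}$-drift is zero.

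For the drift, combine \eqref{sigma_P} and \eqref{b_P} to get $\sigma_P(x,t)b_P(x,s,t)=-\tilde\sigma_Q\bigl(P(x,t),s,t\bigr)/\tfrac{\partial Q}{\partial p}\bigl(P(x,t),t\bigr)$, whence the correction term is $-\tfrac{1}{\partial Q/\partial p(P(x,t),t)}\int_0^1\tilde\sigma_Q\bigl(P(x,t),s,t\bigr)\lambda(s,t)\,ds$, which by the market-price-of-risk equation \eqref{MPR} --- applied at the (random) argument $p=P(x,t)\in[0,S]$ --- equals $-A\bigl(P(x,t),t\bigr)/\tfrac{\partial Q}{\partial p}\bigl(P(x,t),t\bigr)$. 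On the other hand, using \eqref{sigma_P} to identify $\sigma_P^2(x,t)=\bigl(\sigma_Q/\tfrac{\partial Q}{\partial p}\bigr)^2\bigl(P(x,t),t\bigr)$, the numerator in \eqref{mu_Px} is precisely $A\bigl(P(x,t),t\bigr)$ as defined in \eqref{eqn::A_p_t}, so $\mu_P(x,t)=-A\bigl(P(x,t),t\bigr)/\tfrac{\partial Q}{\partial p}\bigl(P(x,t),t\bigr)$ as well. Thus the correction term equals $\mu_P(x,t)$, the $\mathbb{Q}$-drift vanishes, and $dP(x,t)=\sigma_P(x,t)\int_0^1 b_P(x,s,t)\,B^{\mathbb{Q}}(ds,dt)$. (Equivalently, one checks $\mu_P(x,t)=\int_0^1\tilde\sigma_P(x,s,t)\lambda(s,t)\,ds$ and quotes Corollary~\ref{cor-Girsanov-diffusion} directly.)

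It remains to upgrade ``local martingale'' to ``martingale''. Under $\mathbb{Q}$, $P(x,\cdot)$ is a stochastic integral against a Brownian sheet, hence a local martingale; and since $P(x,t)\in[0,S]$ for every $t$ --- a fact established in the proof of Theorem~\ref{lemma::para} from monotonicity and boundedness of $Q$ --- it is a bounded local martingale and therefore a true $\mathbb{Q}$-martingale. The only genuinely delicate point is the very first step: legitimising the substitution $B\mapsto B^{\mathbb{Q}}$ inside the string integral in \eqref{P_1}, i.e.\ verifying that the integrand $b_P(x,\cdot,\cdot)=-b_Q\bigl(P(x,\cdot),\cdot,\cdot\bigr)$ and $\lambda$ meet the adaptedness and integrability requirements of Theorem~\ref{th:Girsanov} and of formula \eqref{continuous}; Assumptions \textbf{Q1}--\textbf{Q3} together with the bounds $P(x,t)\in[0,S]$ and $\int_0^1 b_Q^2\,ds=1$ from \eqref{GenMod3} provide what is needed. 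A secondary bookkeeping subtlety is that \eqref{MPR} must be invoked at the random price $p=P(x,t)$, which is exactly why it is imposed for every $p\in[0,S]$ rather than only along the clearing-price path.
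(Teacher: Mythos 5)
Your proof is correct and takes essentially the same route as the paper: obtain the dynamics of $P$ from Theorem~\ref{lemma::para}, recognize the drift as $-A\bigl(P(x,t),t\bigr)\big/\tfrac{\partial Q}{\partial p}\bigl(P(x,t),t\bigr)$, and use the market price of risk equation \eqref{MPR} evaluated at the random argument $p=P(x,t)$ together with Girsanov (Corollary~\ref{cor-Girsanov-diffusion}) to remove the drift under $\mathbb{Q}$. The paper's proof is terser — it simply writes the dynamics and cites the corollary — while your explicit verification that the $\mathbb{Q}$-drift vanishes and your upgrade from local martingale to true martingale via the bound $P(x,t)\in[0,S]$ supply details the paper leaves implicit.
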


\begin{proof}
By Theorem \ref{lemma::para}, the process $P$ has representation
\begin{equation*}
dP(x,t)=-\frac{A\big(P(x,t),t\big)}{\frac{\partial Q}{\partial p}\big(%
P(x,t),t\big)}\ dt -\frac{\sigma_Q\big(P(x,t),t\big)}{\frac{\partial Q}{%
\partial p}\big(P(x,t),t\big)} \int_0^1 b_Q\big(P(x,t),s,t\big)B(ds,dt).
\end{equation*}
By Corollary \ref{cor-Girsanov-diffusion}, this process is a martingale with
respect to the measure $\mathbb{Q}$ from \eqref{MM} if equality \eqref{MPR}
holds; condition \eqref{E1} ensures that the measure $\mathbb{Q}$ is
well-defined.
\end{proof}

Combining Theorem \ref{thm::no_arbitrage} with \cite[Theorem 3.3]{BB04}, we
conclude that, under Assumptions \textbf{Q1}, \textbf{Q2}, and \textbf{Q3},
there is no arbitrage in our model.

Equation \eqref{MPR} is a Fredholm integral equation of the first kind and
has a solution if and only if, for every $t\in [0,T]$, and every $\omega$, the right-hand side $%
A(p,t)$ is in the range of the integral operator
\begin{equation*}
\mathcal{A}_t: f(s)\mapsto \int_0^1 \tilde{\sigma}_Q(p,s,t)f(s)ds;
\end{equation*}
We now investigate models under which the market price of risk equations
admit a solution. In subsections 5.1 and 5.2 the primitive of the model is
the net demand $Q$, so that we investigate whether \eqref{MPR} has a
solution, while in subsection 5.3, the primitive of the model is the price
process $P$, so we investigate whether \eqref{MPR_price} has a solution.
While working with $P$ results in market price of risk equations that are
simpler to solve, we refer the reader to Remark \ref{remmodelPisbad} where
we describe a subtle drawback of working with $P$ instead of with $Q$.

\subsection{A Linear Model}

A sufficient condition for the existence of an equivalent martingale measure
is existence of a \emph{bounded} solution $\lambda$ of \eqref{MPR}. This
condition is easy to verify when the demand curve is linear in $p$; cf. \cite%
{RA11}.

\begin{proposition}
\label{prop:MPR-linear} Assume that the demand curve depends linearly on $p$
so that
\begin{equation*}
\frac{\frac{\partial \tilde{\sigma}_Q}{\partial p}(p,s,t)}{\frac{\partial Q}{%
\partial p}(p,t)}=h_Q(s,t) \quad \mathrm{\ and } \quad \frac{\partial^2 Q}{%
\partial p^2}(p,t)=0
\end{equation*}
with a bounded function $h_Q$. If  there exists a bounded function $%
\lambda_Q=\lambda_Q(s,t)$ such that, for every $t\in [0,T]$,
\begin{equation}  \label{MPR-c1}
\mu_Q(p,t)=\int_0^1 \tilde{\sigma}_Q(p,s,t) \lambda_Q(s,t) ds,
\end{equation}
then equation \eqref{MPR} has a bounded solution
\begin{equation}  \label{MPR-lin-sol}
\lambda(s,t)=\lambda_Q(s,t)-h_Q(s,t).
\end{equation}
\end{proposition}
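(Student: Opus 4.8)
The plan is to verify \eqref{MPR-lin-sol} by direct substitution into the market price of risk equation \eqref{MPR}; under the linearity hypotheses the Fredholm equation collapses and no fixed-point or functional-analytic argument is required.

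First I would simplify the right-hand side $A(p,t)$ of \eqref{MPR}, as given in \eqref{eqn::A_p_t}. Since the demand curve is linear in $p$, we have $\partial^2 Q/\partial p^2(p,t)=0$, so the middle term of $A(p,t)$ vanishes and $A(p,t)=\mu_Q(p,t)+C(p,t)$. Next I would rewrite $C(p,t)$: because $\frac{\partial Q}{\partial p}(p,t)$ does not depend on the integration variable $s$, it can be moved inside the integral defining $C(p,t)$, which gives
\[
C(p,t)=-\int_0^1 \frac{\frac{\partial \tilde\sigma_Q}{\partial p}(p,s,t)}{\frac{\partial Q}{\partial p}(p,t)}\,\tilde\sigma_Q(p,s,t)\,ds=-\int_0^1 h_Q(s,t)\,\tilde\sigma_Q(p,s,t)\,ds,
\]
where the last equality is exactly the first linearity hypothesis.

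Then I would insert the assumed representation \eqref{MPR-c1} of $\mu_Q$ and combine the two integrals, both taken over $s\in[0,1]$ against the common kernel $\tilde\sigma_Q(p,s,t)$:
\[
A(p,t)=\int_0^1 \tilde\sigma_Q(p,s,t)\lambda_Q(s,t)\,ds-\int_0^1 \tilde\sigma_Q(p,s,t)h_Q(s,t)\,ds=\int_0^1 \tilde\sigma_Q(p,s,t)\bigl(\lambda_Q(s,t)-h_Q(s,t)\bigr)\,ds.
\]
This is precisely \eqref{MPR} with $\lambda(s,t)=\lambda_Q(s,t)-h_Q(s,t)$, so the candidate solves the equation. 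Finally I would observe that $\lambda$ is bounded, being the difference of the bounded functions $\lambda_Q$ and $h_Q$, and is $\mathcal{F}_t$-measurable for each $s,t$ (since $\lambda_Q$ is assumed so and $h_Q$ is a ratio of $\mathcal{F}_t$-adapted quantities), so in particular it satisfies \eqref{E1}; hence it is an admissible bounded market price of risk.

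There is no genuinely hard step: the only points needing a word of justification are that $\frac{\partial Q}{\partial p}(p,t)$ is constant in $s$ and so passes through the $s$-integral, and that two integrals over the same interval against the same kernel may be merged — both immediate. The content of the proposition is the observation itself, namely that linearity of the demand curve reduces \eqref{MPR} to the trivial shift \eqref{MPR-lin-sol}.
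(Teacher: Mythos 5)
Your proof is correct and follows essentially the same route as the paper: substituting $\partial^2 Q/\partial p^2=0$ and the ratio hypothesis into $A(p,t)$ reduces \eqref{MPR} to $\int_0^1\tilde\sigma_Q\lambda\,ds=\int_0^1\tilde\sigma_Q\lambda_Q\,ds-\int_0^1\tilde\sigma_Q h_Q\,ds$, from which \eqref{MPR-lin-sol} is immediate. You simply spell out the computation of $C(p,t)$ that the paper leaves implicit, and your closing remark on boundedness (and, via Novikov, condition \eqref{E1}) is a harmless addition.
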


\begin{proof}
Under the assumptions of the proposition, equation \eqref{MPR} becomes
\begin{equation*}
\int_0^1 \tilde{\sigma}_Q(p,s,t)\lambda(s,t)ds= \int_0^1 \tilde{\sigma}%
_Q(p,s,t)\lambda_Q(s,t)ds -\int_0^1 \tilde{\sigma}_Q(p,s,t) h_Q(s,t)ds,
\end{equation*}
and then \eqref{MPR-lin-sol} follows.
\end{proof}

Combining Proposition \ref{prop:MPR-linear} with \cite[Theorem 2.6]{RA11},
we conclude that, in the case of the linear demand curve, there are no
arbitrage opportunities as long as condition \eqref{MPR-c1} holds. Note that %
\eqref{MPR-c1} is  similar to Condition C.4 in the original HJM model \cite%
{HJM-orig}.

\subsection{A Separated Model}

An extension of a linear demand model is a \emph{separated} demand curve
\begin{equation*}
Q(p,t)=\sigma(p)F\left(\int_0^t\int_0^1b(s,u)B(ds,du)\right).
\end{equation*}

\begin{proposition}
\label{prop:sep} Assume that
\begin{equation*}
\int_0^1b^2(s,u)ds=1,
\end{equation*}
for all $u\in [0,S]$, and also
\begin{align}  \label{sigmaODE}
& 0<\delta \leq F(x), 0<\delta \leq |F^{\prime }(x)|\leq C,\ |F^{\prime \prime }(x)|\leq C, \notag \\
& \sigma^{\prime }(p)<0, |b(t,u)| \leq C, \notag \\
& \frac{d}{dp}\left(\frac{\sigma^{\prime \prime }(p)\sigma(p)}{\big(%
\sigma^{\prime }(p)\big)^2}\right)=0.
\end{align}
Then equation \eqref{MPR} has a bounded solution.
\end{proposition}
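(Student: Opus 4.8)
The plan is to put $Q$ into the form of Assumption \textbf{Q2}, verify Assumptions \textbf{Q1} and \textbf{Q3}, evaluate $A(p,t)$ from \eqref{eqn::A_p_t} explicitly, and then exhibit a bounded $\lambda$ solving \eqref{MPR}; the absence of arbitrage then follows from Theorem \ref{thm::no_arbitrage}. Write $\xi(t)=\int_0^t\int_0^1 b(s,u)B(ds,du)$, so $Q(p,t)=\sigma(p)F(\xi(t))$. The normalization $\int_0^1 b^2(s,u)\,ds=1$ together with \eqref{continuous} gives $d\langle\xi\rangle(t)=dt$, so the one-dimensional It\^{o} formula applied to $F(\xi(t))$, multiplied by the deterministic factor $\sigma(p)$, yields
\[
dQ(p,t)=\tfrac12\sigma(p)F''(\xi(t))\,dt+\sigma(p)F'(\xi(t))\int_0^1 b(s,t)B(ds,dt).
\]
Hence one may take $\mu_Q(p,t)=\tfrac12\sigma(p)F''(\xi(t))$, $\sigma_Q(p,t)=\sigma(p)F'(\xi(t))$, $b_Q(p,s,t)=b(s,t)$, and \eqref{GenMod3} holds by hypothesis. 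Assumption \textbf{Q1} holds since $\partial_p Q=\sigma'(p)F(\xi(t))<0$ (because $\sigma'<0$ and $F\ge\delta>0$) and $\sigma\in C^2$; Assumption \textbf{Q3} holds since $\tilde\sigma_Q(p,s,t)=\sigma(p)F'(\xi(t))b(s,t)$ is $C^1$ in $p$ and $\xi$ is continuous in $t$.

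Next I would substitute these quantities into $C(p,t)$ and \eqref{eqn::A_p_t}. From $\partial_p\tilde\sigma_Q=\sigma'(p)F'(\xi(t))b(s,t)$ and $\int_0^1 b^2(s,t)\,ds=1$ one obtains $C(p,t)=-\sigma(p)\big(F'(\xi(t))\big)^2/F(\xi(t))$, so
\[
A(p,t)=\sigma(p)\left[\tfrac12 F''(\xi(t))+\tfrac12\,\frac{\sigma''(p)\sigma(p)}{\big(\sigma'(p)\big)^2}\,\frac{\big(F'(\xi(t))\big)^2}{F(\xi(t))}-\frac{\big(F'(\xi(t))\big)^2}{F(\xi(t))}\right].
\]
This is where the last line of \eqref{sigmaODE} enters: it states $\sigma''(p)\sigma(p)/\big(\sigma'(p)\big)^2\equiv k$ for a constant $k$, so the bracket depends on $t$ only through $\xi(t)$. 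Setting $g(y)=\tfrac12 F''(y)+\big(\tfrac k2-1\big)\big(F'(y)\big)^2/F(y)$ gives $A(p,t)=\sigma(p)\,g(\xi(t))$, whereas $\tilde\sigma_Q(p,s,t)=\sigma(p)F'(\xi(t))b(s,t)$; thus on $\{\sigma(p)\neq0\}$ (and trivially where $\sigma(p)=0$) equation \eqref{MPR} reduces to $\int_0^1 b(s,t)\lambda(s,t)\,ds=g(\xi(t))/F'(\xi(t))$, solved by $\lambda(s,t)=b(s,t)\,g(\xi(t))/F'(\xi(t))$ because $\int_0^1 b^2(s,t)\,ds=1$. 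This $\lambda(s,t)$ is $\mathcal F_t$-measurable.

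Finally, boundedness of $\lambda$ follows from \eqref{sigmaODE}: $g(\xi(t))/F'(\xi(t))=F''(\xi(t))/\big(2F'(\xi(t))\big)+\big(\tfrac k2-1\big)F'(\xi(t))/F(\xi(t))$ is bounded by $C/(2\delta)+\big|\tfrac k2-1\big|\,C/\delta$ using $|F''|\le C$, $|F'|\ge\delta$, $F\ge\delta$, $|F'|\le C$, and multiplying by $|b|\le C$ bounds $\lambda$. In particular $\lambda$ satisfies \eqref{E1} (Novikov), so Theorem \ref{thm::no_arbitrage} applies. The one place requiring care is the computation of $A(p,t)$ --- keeping track of the $C(p,t)$ contribution and recognizing that, once the ODE condition collapses $\sigma''\sigma/(\sigma')^2$ to the constant $k$, all of the $p$-dependence of $A(p,t)$ reduces to the single factor $\sigma(p)$ that also appears in $\tilde\sigma_Q$; the rest is routine.
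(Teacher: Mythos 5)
Your proposal is correct and follows essentially the same route as the paper's proof: compute $\mu_Q$, $\sigma_Q$, $b_Q$ via the It\^{o} formula, substitute into \eqref{eqn::A_p_t} and \eqref{MPR}, and exhibit the explicit solution, and indeed your $\lambda(s,t)=b(s,t)\bigl(F''/(2F')+(\tfrac{k}{2}-1)F'/F\bigr)$ is algebraically identical to the paper's $b(s,t)\bigl(h_2/h_1+h_1\sigma_0/(2h_0)-h_1/h_0\bigr)$. You simply carry out in detail the ``direct computation'' the paper leaves implicit, including the boundedness estimate and the harmless division by $\sigma(p)$ and $F'$.
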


\begin{proof}
By the It\^{o} formula,
\begin{align*}
&\mu_Q(p,t)=\sigma(p)h_2(t),\ \ b_Q(t,s)=b(t,s), \\
&\sigma_Q(p,t)=\sigma(p) h_1(t),\ \ \tilde{\sigma}_Q(p,t,s)=%
\sigma(p)b(s,t)h_1(t),
\end{align*}
with
\begin{eqnarray*}
h_1(t) &=& F^{\prime }\left(\int_0^t\int_0^1b(s,u)B(ds,du)\right), \\
h_2(t) &=& \frac{1}{2}F^{\prime \prime }\left(\int_0^t\int_0^1b(s,u)B(ds,du)\right).
\end{eqnarray*}
By direct computation, the bounded solution of equation \eqref{MPR} is
\begin{equation*}
\lambda(s,t)=b(s,t)\left(\frac{h_2(t)}{h_1(t)} +\frac{h_1(t)\sigma_0}{2h_0(t)} - \frac{h_1(t)}{%
h_0(t)}\right),
\end{equation*}
with
\begin{equation*}
h_0(t)=F\left(\int_0^t\int_0^1b(s,u)B(ds,du)\right),
\end{equation*}
and
\begin{equation*}
\sigma_0=\frac{\sigma^{\prime \prime }(p)\sigma(p)}{\big(\sigma^{\prime }(p)%
\big)^2}.
\end{equation*}
\end{proof}

Equation \eqref{sigmaODE} defines a three-parameter family of functions $%
\sigma$. This family includes the linear function, corresponding to $%
\sigma_0=0$.

\subsection{A Lognormal Model}

\label{sec::lognormal_model}

In this model, the input is $P$, the price as a function of quantity, and
the main objective  becomes analysis of the market price of risk equations
in price format \eqref{MPR_price}. To solve the resulting Fredholm equation
of the first kind, we will transform it to a Volterra equation.

Let $0<\varepsilon<1/2$. We define a scale function mapping the quantity
variable $x$ to the noise variable $s$:
\begin{equation}
f(x)=2\varepsilon+(1-2\varepsilon)\frac{x-x_{\min}}{x_{\max}-x_{\min}};
\end{equation}
note that $2\varepsilon \leq f(x)\leq 1$ for $x\in [x_{\min},x_{\max}]$.

Introduce non-random functions $p_0=p_0(x)$, $\bar{\mu}_p=\bar{\mu}_p(x)$
and $\bar{\sigma}=\bar{\sigma}_{p}(x,s)$ and suppose that $p_0$ and $\bar{\mu%
}_{p}$ are $\mathcal{C}^1$ (that is, bounded and continuously
differentiable, with a bounded derivative) and $\bar{\sigma}$ is $\mathcal{C}%
^1$ in $x$, uniformly with respect to $s$. We also assume that

\begin{itemize}
\item $\bar{\sigma}_{p}(x,s)=0$ for $s\leq \varepsilon;$

\item $\bar{\sigma}_{p}\big(x,f(x)\big)$  is uniformly bounded from zero for
$x\in [x_{\min},x_{\max}]; $

\item $\bar{\sigma}_{p}(x,s)=0$ for $s>f(x)$;\newline
${\ } $

\item $\int_{0}^{\varepsilon }\bar{\sigma}_{p}(x_{\min },s)ds\neq 0$;
\newline
${\ } $

\item $\int_{\varepsilon }^{2\varepsilon }\bar{\sigma}_{p}(x_{\min},s)ds\neq
0$.
\end{itemize}

Define the \emph{price density function}
\begin{equation}  \label{prd}
p(x,t)=p_0(x)\exp \left(\bar{\mu}_{p}(x)t+\int_{\varepsilon }^{1}\bar{\sigma}%
_{p}(x,s)B(ds,t) -{\frac{1}{2}}\int_{\varepsilon}^1 \bar{\sigma}_p^2(x,s)ds \right),
\end{equation}

and then the price process

\begin{eqnarray}  \label{PPr}
P(x,t) &=&p(x_{\min },t)+\int_{x_{\min }}^{x}p(y,t)dy, \quad x\in
[x_{\min},x_{\max}].
\end{eqnarray}

\begin{proposition}
\label{prop:expo} Equation \eqref{MPR_price} has a bounded solution.
\end{proposition}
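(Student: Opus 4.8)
The plan is to first read off the semimartingale coefficients of $P$ from \eqref{prd}--\eqref{PPr}, then to strip the quantity integral off of \eqref{MPR_price} by differentiating in $x$, reducing \eqref{MPR_price} to a one‑parameter family of scalar integral equations, and finally to solve that family as a Volterra equation of the second kind in a rescaled variable. It\^o's formula applied to \eqref{prd} (whose quadratic exponent term is the usual normalization, so the drift stays a non‑random $\mathcal{C}^1$ multiple of $p$) gives the linear equation
\[
dp(x,t)=p(x,t)\,\bar{\mu}_p(x)\,dt+p(x,t)\int_\varepsilon^1\bar{\sigma}_p(x,s)\,B(ds,dt).
\]
Differentiating \eqref{PPr} in $t$ and interchanging $\int_{x_{\min}}^x dy$ with the stochastic integral (stochastic Fubini) yields $dP(x,t)=\mu_P(x,t)\,dt+\int_0^1\tilde{\sigma}_P(x,s,t)\,B(ds,dt)$ with
\[
\mu_P(x,t)=p(x_{\min},t)\bar{\mu}_p(x_{\min})+\int_{x_{\min}}^x p(y,t)\bar{\mu}_p(y)\,dy,\qquad
\tilde{\sigma}_P(x,s,t)=p(x_{\min},t)\bar{\sigma}_p(x_{\min},s)+\int_{x_{\min}}^x p(y,t)\bar{\sigma}_p(y,s)\,dy.
\]

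Plugging these into \eqref{MPR_price} and using ordinary Fubini on the left, the equation becomes $p(x_{\min},t)R(x_{\min})+\int_{x_{\min}}^x p(y,t)R(y)\,dy=0$ for all $x\in[x_{\min},x_{\max}]$, where $R(y):=\int_0^1\bar{\sigma}_p(y,s)\lambda(s,t)\,ds-\bar{\mu}_p(y)$ is continuous in $y$ for any bounded $\lambda$. Since $p(\cdot,t)>0$, differentiating in $x$ shows this holds for every $x$ if and only if $R\equiv0$, i.e.
\[
\int_0^1\bar{\sigma}_p(y,s)\,\lambda(s,t)\,ds=\bar{\mu}_p(y),\qquad y\in[x_{\min},x_{\max}],
\]
which I will call $(\star)$. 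Its coefficients are non‑random and $t$‑independent, so a solution may be taken deterministic (hence trivially $\mathcal{F}_t$‑measurable); it remains only to exhibit a \emph{bounded} $\lambda$ solving $(\star)$.

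Because $\bar{\sigma}_p(y,s)=0$ for $s\le\varepsilon$ and for $s>f(y)$, and $f$ is affine and strictly increasing from $[x_{\min},x_{\max}]$ onto $[2\varepsilon,1]$, the substitution $y=f^{-1}(\tau)$ turns $(\star)$ into a Volterra equation of the first kind on $\tau\in[2\varepsilon,1]$,
\[
\int_\varepsilon^\tau\bar{\sigma}_p\big(f^{-1}(\tau),s\big)\,\lambda(s)\,ds=\bar{\mu}_p\big(f^{-1}(\tau)\big),
\]
whose diagonal coefficient $\bar{\sigma}_p(f^{-1}(\tau),\tau)=\bar{\sigma}_p(y,f(y))$ is, by hypothesis, bounded away from zero. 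Differentiating once in $\tau$ — legitimate since $\bar{\sigma}_p$ and $\bar{\mu}_p$ are $\mathcal{C}^1$ and $f^{-1}$ is affine — converts it into a Volterra equation of the second kind for $\lambda$ on $[2\varepsilon,1]$ with bounded kernel and bounded forcing term. The only freedom left is $\lambda$ on the boundary layer $s\in[\varepsilon,2\varepsilon]$ (the gap between the left endpoint $\varepsilon$ of the $s$‑range and the left endpoint $2\varepsilon$ of the $\tau$‑range), constrained solely by $(\star)$ at $y=x_{\min}$, namely $\int_\varepsilon^{2\varepsilon}\bar{\sigma}_p(x_{\min},s)\lambda(s)\,ds=\bar{\mu}_p(x_{\min})$; since $\int_\varepsilon^{2\varepsilon}\bar{\sigma}_p(x_{\min},s)\,ds\ne0$ this is met by a bounded (e.g.\ constant) choice on $[\varepsilon,2\varepsilon]$, and one sets $\lambda\equiv0$ on $[0,\varepsilon]$, where $\bar{\sigma}_p$ vanishes. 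With this data the second‑kind equation has a unique bounded solution on $[2\varepsilon,1]$ via the Neumann series (the iterated Volterra kernels decay like $(\tau-2\varepsilon)^n/n!$, so the series converges in $L^\infty$); integrating back recovers $(\star)$ on $(x_{\min},x_{\max}]$, and the boundary‑layer constraint gives $(\star)$ at $y=x_{\min}$, so $(\star)$ holds on all of $[x_{\min},x_{\max}]$. The resulting $\lambda$ is bounded, which is the assertion.

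The main obstacle is precisely the degeneration of the Volterra structure on the boundary layer $[\varepsilon,2\varepsilon]$, where $(\star)$ underdetermines $\lambda$: one must supply there consistent bounded initial data, and that is what the non‑vanishing hypotheses on the boundary integrals of $\bar{\sigma}_p(x_{\min},\cdot)$ are for. Everything else is bookkeeping — justifying the stochastic Fubini step, the differentiation in $x$ of \eqref{MPR_price}, and the first‑to‑second‑kind reduction of the Volterra equation — all of which are underwritten by the positivity of $p$, the support properties of $\bar{\sigma}_p$, and the stated $\mathcal{C}^1$ regularity of $p_0$, $\bar{\mu}_p$ and $\bar{\sigma}_p$.
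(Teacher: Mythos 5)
Your argument is essentially the paper's own proof: differentiate \eqref{MPR_price} in $x$, divide by the positive density $p(x,t)$ to get the deterministic family $\int_{\varepsilon}^{f(x)}\bar{\sigma}_p(x,s)\lambda(s,t)\,ds=\bar{\mu}_p(x)$, fix $\lambda$ as a constant on $[\varepsilon,2\varepsilon)$ from the equation at $x=x_{\min}$ using $\int_{\varepsilon}^{2\varepsilon}\bar{\sigma}_p(x_{\min},s)\,ds\neq 0$, and differentiate once more to obtain a second-kind Volterra equation determining a bounded $\lambda$ on $[2\varepsilon,1]$ (you add the Neumann-series justification and the explicit integration back to the undifferentiated equation, which the paper leaves implicit). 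The only divergence is on $s\in[0,\varepsilon)$, where the paper instead evaluates the undifferentiated \eqref{MPR_price} at $x=x_{\min}$ and uses the hypothesis $\int_{0}^{\varepsilon}\bar{\sigma}_p(x_{\min},s)\,ds\neq 0$ to choose a nonzero constant there, whereas you set $\lambda\equiv 0$; your choice is consistent with \eqref{prd} as written (whose noise integral starts at $\varepsilon$), so this is a cosmetic rather than substantive difference.
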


We prove Proposition \ref{prop:expo} in Section \ref{sec::A1}.

\begin{remark}
\label{rm:SPDE} \textrm{{It is possible to represent the net demand curve
\textquotedblleft in the risk-neutral measure\textquotedblright , like
practitioners do to model interest rates in the HJM framework: combining %
\eqref{mu_Px} and \eqref{sigma_P} with Theorem \ref{thm::no_arbitrage} shows that the process $%
Q$ under the measure $\mathbb{Q}$ satisfies
\begin{eqnarray}
dQ(p,t) &=&\left(-\frac{1}{2}\frac{\partial ^{2}Q(p,t)}{\partial p^{2}}\left(
\frac{\sigma _{Q}(p,t)}{\frac{\partial Q(p,t)}{\partial p}}\right) ^{2}+%
\frac{1}{2\frac{\partial Q}{\partial p}(p,t)}\frac{\partial (\sigma _{Q}(p,t))^{2}%
}{\partial p} \right)dt \label{Q-SPDE} \notag \\
& & \qquad + \sigma _{Q}(p,t)\int_{0}^{1}b_{Q}(p,s,t)B^{\mathbb{Q}}(ds,dt)\text{,}
  \label{SPDE} \\
Q(p,0) &=&Q_{0}(p).  \notag
\end{eqnarray}%
In this work we always define $Q$ with respect to the physical measure $%
\mathbb{P}$, which, under the conditions of Theorem \ref{thm::no_arbitrage},
automatically leads to a classical solution of \eqref{Q-SPDE}. On the other
hand, being a (rather complicated) quasi-linear stochastic partial
differential equation, \eqref{Q-SPDE} is ill-posed in the sense of Hadamard;
cf. \cite[Section 3.7]{Chow-SPDE}. This ill-posedness suggests that
liquidity models are, in general, unstable, demonstrating  yet another
benefit of formulating the model in the form $Q=Q(p,t)$, that is, the
quantity as a function of price. Indeed, instability of the model is not at
all obvious under the formulation  $P=P(x,t)$, that is, price as a function
of quantity, used in most of existing literature  on the subject. We suspect
that the reason for instability is that monotonicity, of either $P$ or $Q$,
is a hard condition to meet; without monotonicity, it is harder to connect
the processes $P$ and $Q$ and to make the family of prices a $\mathbb{Q}$%
-martingale. } }
\end{remark}

\section{Pricing of Options in a Manipulable Market}

\label{sec:PMC} As before, our market consists of atomistic traders and one
large trader. The question we address in this section is how to characterize
the price of a derivative security when the large trader can manipulate the
market.  While it might appear that the price of the derivative security
should depend on the future of the large trader's strategy $\theta$, this
section shows that it is  often not the case.

Denote by $\Theta^{BV}$ the subset of $\Theta$ (the set of the large trader
strategies) consisting of all the functions with bounded variation
\begin{equation*}
\theta(t)=\theta(0)+\int_0^t \dot{\theta}(s)ds,\ \ \dot{\theta}\in
L_1((0,T)).
\end{equation*}
By Corollary \ref{cor:SPM} (cf. \cite[page 7]{BB04}), absence of transaction
costs for the large trader is equivalent to the condition $\theta\in
\Theta^{BV}$.

We define the observable net demand on the market by
\begin{equation}
{Q}^{\theta}(p,t)=Q(p,t)+\theta (t),  \label{Q_theta_def}
\end{equation}
so that
\begin{align*}
&d{Q}^{\theta}(p,t) =\big(\mu_{{Q}}(p,t)+\dot{\theta}(t)\big)dt +\int_{0}^{1}%
\tilde{\sigma}_{{Q}}( p,s,t)B(ds,dt),\ \ \ p\in [0,S],\ t\in[ 0 ,T].
\end{align*}

Likewise, the corresponding price process $P^{\theta}={P}^{\theta}(x,t)$ is
defined by
\begin{equation}  \label{P_theta_def}
{Q}^{\theta}\big({P}^{\theta}(x,t),t\big)+x=0,
\end{equation}
and then, by the It\^o-Wentzell formula,
\begin{align}  \label{P-theta}
d{P}^{\theta}(x,t) &= \left(\mu_{{P}}(x,t)+\frac{\partial P(x,t)}{\partial x}%
\, \dot{\theta}(t)\right)dt \notag \\
& \quad -\int_{0}^{1}\frac{\tilde{\sigma}_{Q}\big({P}^{\theta}(x,t),s,t\big)}
{\frac{\partial Q}{\partial p} \big({P}^{\theta}(x,t),t\big)}B(ds,dt),\ t\in[ 0 ,T].
\end{align}
The quantity $x$ represents the deviation of the trader's position from the
strategy $\theta$; the range of admissible values of $x$ will, in general,
depend on $\theta$.

We denote the clearing price at time $t$ by ${\pi}^{\theta}(t)$:
\begin{equation*}
{Q}^{\theta}\big({\pi}^{\theta}(t),t\big)=0.
\end{equation*}

The constant strategy%
\begin{equation*}
\theta^{c}(t):=\theta(0), \qquad 0\leq t \leq T;
\end{equation*}
will be of special interest.

\textbf{{Assumption Q4}.} For every $\theta \in \Theta^{BV}$, there exists a
measure $\mathbb{Q}^{\theta}$  such that, for every admissible $x$,  the
process $t\mapsto {P}^{\theta}(x,t)$ is a martingale  under $\mathbb{Q}%
^{\theta}$.

\begin{theorem}
\label{thm::Q_price} If $\theta \in \Theta^{BV}$, then, for every Borel set $%
A\in \mathcal{B}(\mathcal{C}([0,T]))$,%
\begin{equation*}
\mathbb{Q^{\theta }}\big({\pi}^{\theta}\in A\big) =\mathbb{Q}^{\theta^c }%
\big(\pi^{\theta^c }\in A\big).
\end{equation*}
\end{theorem}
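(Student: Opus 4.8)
The plan is to show that, under its own martingale measure, the clearing price $\pi^{\theta}$ solves a one–dimensional stochastic differential equation whose coefficients do not depend on the strategy $\theta$, and then to invoke uniqueness in law for that equation.

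First I would record the identity $\pi^{\theta}(t)=P^{\theta}(0,t)$, which is immediate on comparing $Q^{\theta}\big(\pi^{\theta}(t),t\big)=0$ with $Q^{\theta}\big(P^{\theta}(x,t),t\big)+x=0$. By Assumption \textbf{Q4} with $x=0$, the process $\pi^{\theta}$ is therefore a continuous $\mathbb{Q}^{\theta}$-martingale, and likewise $\pi^{\theta^{c}}$ is a continuous $\mathbb{Q}^{\theta^{c}}$-martingale. Since $\theta^{c}(0)=\theta(0)$ and $Q(\cdot,0)$ is strictly decreasing, the two clearing prices start from the same point, $\pi^{\theta}(0)=\pi^{\theta^{c}}(0)=:p_{\ast}$, and this initial value is $\mathcal{F}_{0}$-measurable, hence non-random (as $\mathcal{F}_{0}$ is trivial).

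Next I would compute the quadratic variation. Using the representation \eqref{P-theta} at $x=0$ (equivalently, applying the It\^{o}--Wentzell formula to $P(\theta(t),t)$ via \eqref{P_1}), and using that $\theta\in\Theta^{BV}$ is continuous of bounded variation, so that $d\langle\theta\rangle\equiv 0$ and the $\dot{\theta}$-term enters only the drift, the martingale part of $\pi^{\theta}$ has stochastic differential $-\int_{0}^{1}\tilde{\sigma}_{Q}\big(\pi^{\theta}(t),s,t\big)\big/\tfrac{\partial Q}{\partial p}\big(\pi^{\theta}(t),t\big)\,B(ds,dt)$, and this coefficient does not involve $\theta$. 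By the normalization \eqref{GenMod3} and the invariance of quadratic variation under an equivalent change of measure,
\[
a(p,t):=\left(\frac{\sigma_{Q}(p,t)}{\frac{\partial Q}{\partial p}(p,t)}\right)^{2},
\qquad
d\langle \pi^{\theta}\rangle(t)=a\big(\pi^{\theta}(t),t\big)\,dt ,
\]
with the \emph{same} function $a$ for every $\theta\in\Theta^{BV}$; in particular $d\langle\pi^{\theta^{c}}\rangle(t)=a\big(\pi^{\theta^{c}}(t),t\big)\,dt$ under $\mathbb{Q}^{\theta^{c}}$.

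Finally I would conclude by uniqueness in law. A continuous $\mathbb{Q}^{\theta}$-martingale started at the non-random point $p_{\ast}$ whose quadratic variation is $\int_{0}^{t}a\big(\pi^{\theta}(u),u\big)\,du$ is, by a Dubins--Schwarz time change (equivalently, as a solution of the associated one-dimensional martingale problem), a weak solution of
\[
dX(t)=\sqrt{a\big(X(t),t\big)}\;dW(t),\qquad X(0)=p_{\ast};
\]
the same equation, with the same $a$ and the same starting point, is solved by $\pi^{\theta^{c}}$ under $\mathbb{Q}^{\theta^{c}}$. Under Assumptions \textbf{Q1}--\textbf{Q3} the coefficient $a$ is continuous on $[0,S]$ for each $t$, so (by the Engelbert--Schmidt criterion) this equation admits a unique weak solution, whence the $\mathbb{Q}^{\theta}$-law of $\pi^{\theta}$ and the $\mathbb{Q}^{\theta^{c}}$-law of $\pi^{\theta^{c}}$ agree on $\mathcal{B}\big(\mathcal{C}([0,T])\big)$, which is the assertion. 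The main obstacle is precisely this last step — upgrading ``same quadratic-variation functional and same initial point'' to ``same law''. It is clean when the model is Markovian, i.e.\ when $\sigma_{Q},b_{Q},\partial Q/\partial p$ are deterministic functions of $(p,s,t)$, as in the parameterized models of Section~\ref{sec:NA}; in a genuinely path-dependent setting one must impose enough regularity/non-degeneracy on $a$ to run a martingale-problem uniqueness argument, or verify directly that the random coefficient, read along the solution path, is the same functional under both measures. The remaining ingredients — the reduction to $x=0$, the identification $\pi^{\theta}=P^{\theta}(0,\cdot)$, and the absence of a quadratic-variation contribution from $\theta$ — are routine once Assumption \textbf{Q4} and the representation \eqref{P-theta} are in hand.
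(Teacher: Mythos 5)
Your argument is correct in substance but follows a genuinely different route from the paper's. The paper introduces an auxiliary process $X$ solving the driftless equation $dX(t)=-\int_{0}^{1}\tilde{\sigma}_{Q}(X(t),s,t)\big/\tfrac{\partial Q}{\partial p}(X(t),t)\,B(ds,dt)$ with $X(0)=P^{\theta}(0,0)$ under the physical measure $\mathbb{P}$, observes that passing to $\mathbb{Q}^{\theta}$ kills the drift of $P^{\theta}(0,\cdot)=\pi^{\theta}$ in \eqref{P-theta} without altering the diffusion coefficient, and then identifies $\mathbb{Q}^{\theta}\big(\pi^{\theta}\in A\big)=\mathbb{P}(X\in A)$ for \emph{every} $\theta\in\Theta^{BV}$ via uniqueness of the sheet-driven SDE; equality of the two laws follows because both equal the $\mathbb{P}$-law of the same $X$. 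You instead descend to the scalar level: same deterministic starting point, continuous local martingale under its own measure, and the same quadratic-variation functional $a(p,t)=\big(\sigma_{Q}/\tfrac{\partial Q}{\partial p}\big)^{2}$ (correctly using \eqref{GenMod3} and the fact that $\theta\in\Theta^{BV}$ contributes neither quadratic variation nor diffusion), then invoke uniqueness for the one-dimensional martingale problem. Your version buys a cleaner isolation of \emph{why} the result holds — only the law of the scalar clearing price matters, and the strategy never enters the diffusion coefficient — while the paper's version stays closer to how $\mathbb{Q}^{\theta}$ is actually built and avoids the Dubins--Schwarz reduction. Two caveats, both of which you partially anticipate: (i) the Engelbert--Schmidt criterion as usually stated is for autonomous equations and requires non-degeneracy conditions beyond continuity of $a$ (a degenerate continuous $a$ can destroy uniqueness in law), so the final step needs either uniform positivity of $a$ or a Stroock--Varadhan-type hypothesis — though the paper is equally terse here, asserting only that ``the properties of $\tilde{\sigma}_{Q}$ imply existence and uniqueness''; and (ii) the issue you flag about random (non-Markovian) coefficients — that $\tilde{\sigma}_{Q}$ and $\partial Q/\partial p$, read as functionals of the driving noise, may not transform consistently under the Girsanov change — is a real gap that the paper's proof shares but does not acknowledge, so your explicit restriction to the case of deterministic coefficient functions is a point in your favor rather than a defect relative to the published argument.
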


\begin{proof}
Consider the stochastic process $X=X(t)$ defined by the equation
\begin{equation*}
dX(t)=-\int_{0}^{1}\frac{\tilde{\sigma}_{Q}(X(t),s,t)}
{\frac{\partial Q}{\partial p} (X(t),t)}B(ds,dt),
\end{equation*}
with initial condition $X(0)=P^{\theta}(0,0)$; the properties of $%
\tilde{\sigma}_Q$ imply existence and uniqueness of the solution.

By construction,
\begin{equation}  \label{pi-theta}
\pi^{\theta}(t)=P^{\theta}(0,t).
\end{equation}
Switching from the original measure $\mathbb{P}$ to the measure $\mathbb{%
Q^{\theta }}$ removes the drift part in \eqref{P-theta} but does not change
the diffusion part. As a result, \eqref{P-theta} and \eqref{pi-theta} imply
\begin{equation*}
\mathbb{Q^{\theta }}\big({\pi}^{\theta}\in A\big)= \mathbb{P}(X\in A)=%
\mathbb{Q}^{\theta^c }\big(\pi^{\theta^c }\in A\big),\ A\in \mathcal{B%
}\big(\mathcal{C}([0,T])\big),
\end{equation*}
completing the proof.
\end{proof}

We seek to price a contingent claim $H^{\theta }$of the form:
\begin{equation*}
H^{\theta }=F\big(\pi ^{\theta }(\cdot )\big),
\end{equation*}%
for some continuous functional $F$ on $\mathcal{C}((0,T))$. We assume that
it can be replicated by a trading strategy $\Delta ^{\theta }\in \Theta $.
Theorem 4.1 in \cite{BB04} shows that the asymptotic liquidation process
generated by a strategy $\Delta ^{\theta }\in \Theta $ can be $\varepsilon -$%
approximated by a strategy $\Delta ^{\theta ,\varepsilon }\in \Theta ^{BV}$.
By proposition 3.3, the corresponding realizable wealth $V^{\Delta ^{\theta
,\varepsilon }}$ is a $\mathbb{Q}^{\theta }$-martingale, and thus we can
define (up to an $\varepsilon -$approximation) the no-arbitrage price of the
contingent claim $H^{\theta }$ by:%
\begin{equation*}
\nu _{H}^{\theta }=\mathbb{E}^{\mathbb{Q}^{\theta }}[F(\pi ^{\theta })].
\end{equation*}%
\bigskip

The following theorem is the main result of this section.

\begin{theorem}
\label{th:PCC} Suppose that Assumptions \textbf{Q1-Q4} hold. If $\theta \in
\Theta ^{BV}$, then the no-arbitrage price of $H^{\theta }$ depends only on
the initial value of the large trader strategy:
\begin{equation}
\nu _{H}^{\theta }=\mathbb{E}^{\mathbb{Q}^{\theta ^{c}}}[F(\pi ^{\theta
^{c}})].  \label{OPF}
\end{equation}
\end{theorem}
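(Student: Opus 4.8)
The plan is to reduce the statement directly to Theorem~\ref{thm::Q_price}, which has already done all the substantive work; what remains is the soft fact that the expectation of a fixed measurable functional of a process depends only on the law of that process. Recall that $\nu_H^{\theta}$ was \emph{defined} as $\mathbb{E}^{\mathbb{Q}^{\theta}}[F(\pi^{\theta})]$, where $F$ is a continuous functional on $\mathcal{C}((0,T))$ and $\pi^{\theta}$ is the clearing price process, regarded as a random element of $\mathcal{C}([0,T])$. By the image-measure (change-of-variables) theorem, $\mathbb{E}^{\mathbb{Q}^{\theta}}[F(\pi^{\theta})]=\int_{\mathcal{C}([0,T])}F\,d\big((\pi^{\theta})_{*}\mathbb{Q}^{\theta}\big)$, and likewise $\mathbb{E}^{\mathbb{Q}^{\theta^{c}}}[F(\pi^{\theta^{c}})]=\int_{\mathcal{C}([0,T])}F\,d\big((\pi^{\theta^{c}})_{*}\mathbb{Q}^{\theta^{c}}\big)$.

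First I would note that $F$, being continuous on $\mathcal{C}((0,T))$, is Borel measurable, so $F(\pi^{\theta})$ and $F(\pi^{\theta^{c}})$ are bona fide random variables and the two push-forward integrals above are meaningful. Next I would invoke Theorem~\ref{thm::Q_price}, which states precisely that $\mathbb{Q}^{\theta}\big(\pi^{\theta}\in A\big)=\mathbb{Q}^{\theta^{c}}\big(\pi^{\theta^{c}}\in A\big)$ for every $A\in\mathcal{B}(\mathcal{C}([0,T]))$, i.e. $(\pi^{\theta})_{*}\mathbb{Q}^{\theta}=(\pi^{\theta^{c}})_{*}\mathbb{Q}^{\theta^{c}}$ as Borel measures on $\mathcal{C}([0,T])$. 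Integrating the single measurable function $F$ against these two identical measures yields the same number; combining this with the definitions of $\nu_H^{\theta}$ and of $\mathbb{E}^{\mathbb{Q}^{\theta^{c}}}[F(\pi^{\theta^{c}})]$ gives \eqref{OPF} at once.

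The one point deserving a word of care is that $\nu_H^{\theta}$ be well-defined, i.e. that $F(\pi^{\theta})$ be $\mathbb{Q}^{\theta}$-integrable. This is supplied by the discussion preceding the theorem: the replicating strategy is $\varepsilon$-approximated by $\Delta^{\theta,\varepsilon}\in\Theta^{BV}$, whose realizable wealth $V^{\Delta^{\theta,\varepsilon}}$ is, by Proposition~\ref{prop:RW} together with Assumption~\textbf{Q4}, a $\mathbb{Q}^{\theta}$-martingale, hence has an integrable terminal value approximating $F(\pi^{\theta})$. Integrability then transfers automatically to the $\theta^{c}$ side because the two laws agree. Thus there is essentially no obstacle here: the real content is already packaged in Theorem~\ref{thm::Q_price}, whose proof in turn rests on the observation that passing from $\mathbb{P}$ to $\mathbb{Q}^{\theta}$ annihilates the drift of $P^{\theta}$ in \eqref{P-theta} while leaving its diffusion coefficient — and therefore the law of the driftless SDE governing the clearing price $\pi^{\theta}=P^{\theta}(0,\cdot)$ — independent of $\theta$. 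The present theorem is then just the corollary ``equal laws imply equal expectations of continuous functionals.''
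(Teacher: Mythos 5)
Your proposal is correct and follows the same route as the paper: the paper's own proof simply observes that \eqref{OPF} is immediate from Theorem~\ref{thm::Q_price}, and your argument just makes explicit the standard "equal laws imply equal expectations" step together with the definition of $\nu_H^{\theta}$. The extra care about measurability and integrability is harmless elaboration of the same reduction.
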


\begin{proof}
With $\theta \in \Theta ^{BV}$, \eqref{OPF} now immediately follows from
Theorem \ref{thm::Q_price}.
\end{proof}

\begin{remark}
\textrm{{\label{remmodelPisbad} Theorem \ref{th:PCC} shows that working with
$P$ as opposed to $Q$ has a subtle drawback. Suppose without loss of
generality that today's position of the large trader is $\theta(0)=x_{\min}$%
. In the lognormal model of Section \ref{sec::lognormal_model}, the clearing
price $\pi$ will thus be equal to $p(x_{\min})$. Looking at the definition
of $p(x_{\min})$ in Equation (\ref{prd}), we see that the remainder of the
price curve, i.e., $p(x)$ for $x > x_{\min}$ has no influence on the
dynamics of $p(x_{\min})$, and we are back to the standard Black-Scholes
model. Liquidity effects would be present if we specified volatility to be
stochastic, but this would further complicate the market price of risk
equations.} }
\end{remark}

\section{A Practical Model}

\label{sec:MOD}

\subsection{A Continuous Version}

The model in this section satisfies the main requirements for the use of the
option pricing formula developed in the previous section. In particular, the
net demand curve is specified as quantity as a function of price; this
allows for a better modeling of liquidity effects than models of price as a
function of quantity will allow (see Remark \ref{remmodelPisbad}).

We specify our model as follows:%
\begin{eqnarray}
Q(0,t) &=&Q(0,0)+F_{0}\left(\int_{0}^{1}\bar{\sigma}_{Q}(0,s)B(ds,t)
\right),  \label{Mod1} \\
q(p,t) &=&q(p,0)+F_{1} \left(\int_{0}^{1}\bar{\sigma}_{q}(p,s)B(ds,t)
\right),  \label{Mod2} \\
Q(p,t) &=&Q(0,t)-\int_{0}^{p}q(x,t)dx,  \label{Mod3}
\end{eqnarray}
and assume that

\begin{itemize}
\item $0 < \delta_{0,\min} \leq F_0(x) \leq \delta_{0,\max}$, $0<\delta_{1,\min} \leq F_1(x)
\leq \delta_{1,\max}$;

\item $F_0\in \mathcal{C}^1,\ \inf_x|F^{\prime }_0(x)|>0$;

\item $F_1\in \mathcal{C}^3,\ F_{1}^{\prime }(x)\leq -\varepsilon_1<0$;

\item $|\bar{\sigma}_{Q}(0,s,t)|+|\bar{\sigma}_{q}(p,s,t)|+|\partial \bar{%
\sigma}_{q}(p,s,t)/\partial p|\leq C_1$;

\item There exists an $\varepsilon \in (0,1)$ such that $\bar{\sigma}%
_{q}(p,s,t)=0$ for $(s,p)\in \lbrack 0,\varepsilon ]\times \lbrack
\varepsilon S,S]$, and:
\begin{equation}
Q(\varepsilon S,t)+x_{\min }\geq 0,\qquad 0\leq t\leq T.  \label{Bound_hold}
\end{equation}
\end{itemize}

Also, $Q(0,t)$ is positive by construction, and so condition
\begin{equation}
Q(S,t)+x_{\max }\leq 0,\qquad 0\leq t\leq T,  \label{MC_hold}
\end{equation}%
must hold for the clearing price to be less than $S$. Let:%
\begin{eqnarray*}
h_{0}(t) &\equiv& \frac{\partial F_0}{\partial x} \left( \int_{0}^{1}\bar{\sigma}%
_{Q}(0,s,t)B(ds,t)\right),  \\
h_{1}(p,t) &\equiv& \frac{\partial F_1}{\partial x} \left( \int_{0}^{1}\bar{\sigma}%
_{q}(p,s,t)B(ds,t)\right).
\end{eqnarray*}

One convenient way (which we will use in the next section to get explicit formula) for the market price of risk equations to hold, is to assume that:%
\begin{equation}
h_{0}(t)\int_{0}^{\varepsilon }\bar{\sigma}_{Q}(0,s,t)ds-h_{1}(\varepsilon
S,t)\int_{0}^{\varepsilon }\int_{0}^{\varepsilon S}\bar{\sigma}%
_{q}(x,s,t)dxds\geq \eta >0.  \label{positivity}
\end{equation}

\begin{lemma}
\label{lem::model_hold}  Conditions \eqref{Bound_hold}, \eqref{MC_hold}, and
$0<\varepsilon <S$ can be satisfied with a suitable choice of the
parameters $\delta _{0,\min}, \delta _{0,\max}, \delta _{1,\min}, \delta _{1,\max}, x_{\min},
x_{\max}, S$ in \eqref{Mod1}, \eqref{Mod2}, \eqref{Mod3}.
\end{lemma}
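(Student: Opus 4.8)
The plan is to eliminate the randomness by trapping the demand curve between two deterministic envelopes, so that \eqref{Bound_hold}, \eqref{MC_hold} and $0<\varepsilon<S$ become a finite system of inequalities among the numerical parameters, and then to choose those parameters in a suitable order so that the system is consistent. Along the way I keep the standing requirement \eqref{thetaQ}, which the model also needs.

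First I would record the envelopes: from \eqref{Mod1}, \eqref{Mod2} and $\delta_{0,\min}\le F_0\le\delta_{0,\max}$, $\delta_{1,\min}\le F_1\le\delta_{1,\max}$ one gets, for all $p\in[0,S]$ and $t\in[0,T]$,
\[
Q(0,0)+\delta_{0,\min}\le Q(0,t)\le Q(0,0)+\delta_{0,\max},\qquad \bar q(p)+p\,\delta_{1,\min}\le\int_0^p q(x,t)\,dx\le\bar q(p)+p\,\delta_{1,\max},
\]
where $\bar q(p):=\int_0^p q(x,0)\,dx$; choosing the initial datum in \eqref{Mod2} with $q(\cdot,0)\ge 0$ makes $\bar q$ continuous, nondecreasing and $\bar q(0)=0$. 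Substituting into \eqref{Mod3} yields $m(p)\le Q(p,t)\le M(p)$ with $m(p)=Q(0,0)+\delta_{0,\min}-\bar q(p)-p\,\delta_{1,\max}$ and $M(p)=Q(0,0)+\delta_{0,\max}-\bar q(p)-p\,\delta_{1,\min}$. Hence it suffices to choose the parameters so that
\[
\max\{-m(\varepsilon S),\,M(S)\}<\min\{-M(S),\,Q(0,0)+\delta_{0,\min}\},\qquad 0<\varepsilon<S,
\]
and then to take any $x_{\min}<x_{\max}$ strictly inside the first interval: indeed $x_{\min}>M(S)$ and $x_{\min}\ge -m(\varepsilon S)$ give \eqref{Bound_hold} and the lower part of \eqref{thetaQ}, while $x_{\max}\le -M(S)$ and $x_{\max}<Q(0,0)+\delta_{0,\min}\le Q(0,t)$ give \eqref{MC_hold} and the upper part of \eqref{thetaQ}.

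Next I would choose the parameters in a fixed order. Fix positive $\delta_{0,\min}<\delta_{0,\max}$ and $\delta_{1,\min}<\delta_{1,\max}$ with $Q(0,0)+\delta_{0,\min}>0$ (the positivity of $Q(0,\cdot)$ already observed before the lemma). Since $r\mapsto\bar q(r)+r\,\delta_{1,\max}$ is continuous and vanishes at $r=0$, fix $\eta>0$ with $\bar q(\eta)+\eta\,\delta_{1,\max}<2\big(Q(0,0)+\delta_{0,\min}\big)$. Finally take $S$ large — larger than a few explicit combinations of the $\delta$'s and $\eta$ — and put $\varepsilon:=\eta/(2S)$, so that $\varepsilon\in(0,1)$, $\varepsilon<S$, and $\varepsilon S=\eta/2$ stays small while $S$ grows. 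A routine computation with the envelopes then gives $M(S)<0$ (so $M(S)<-M(S)$ and $M(S)<Q(0,0)+\delta_{0,\min}$), $-m(\varepsilon S)<Q(0,0)+\delta_{0,\min}$ (because $\varepsilon S<\eta$ and $\bar q$ is nondecreasing), and $M(S)<m(\varepsilon S)$ (which, after cancelling $Q(0,0)$ and using $\bar q(\varepsilon S)\le\bar q(S)$, reduces to $\delta_{0,\max}-\delta_{0,\min}+\varepsilon S\,\delta_{1,\max}<S\,\delta_{1,\min}$). Together these are exactly the displayed gap condition, which finishes the argument.

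The one genuinely delicate point is the double role of $S$: we need $S$ large so that the net demand falls enough over $[\varepsilon S,S]$ to leave room for $x_{\min}<x_{\max}$ and to push $Q(S,\cdot)$ below $-x_{\max}$ (making $M(S)$ strongly negative), yet we need $Q(\varepsilon S,\cdot)$ to stay comfortably positive so that $x_{\min}$ need not be large, which forces $\varepsilon S$ small. The resolution is to treat $\varepsilon$ and $S$ as independent, holding $\varepsilon S$ fixed and small while $S\to\infty$; everything else is bookkeeping with the envelopes $m$ and $M$.
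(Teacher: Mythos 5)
Your proof is correct and follows essentially the same route as the paper: both arguments reduce \eqref{Bound_hold} and \eqref{MC_hold} to deterministic inequalities by bounding $F_0$ and $F_1$ between $\delta_{\cdot,\min}$ and $\delta_{\cdot,\max}$ (your envelopes $m(p)\le Q(p,t)\le M(p)$ are exactly the bounds the paper uses), and then exhibit a consistent choice of parameters. The only difference is bookkeeping order — the paper fixes $x_{\min},x_{\max},S$ and solves the resulting inequalities for $\delta_{1,\min},\delta_{1,\max},\varepsilon$, whereas you fix the $\delta$'s, send $S$ large while holding $\varepsilon S$ small, and then place $x_{\min}<x_{\max}$ in the resulting gap (additionally verifying \eqref{thetaQ}, which the paper's proof does not address).
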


We prove Lemma \ref{lem::model_hold} in Section \ref{sec::A2};  Figure \ref%
{fig::liq_illus} provides an illustration.

\begin{center}
\begin{figure}[th]
\centering
\includegraphics[scale = 0.6]{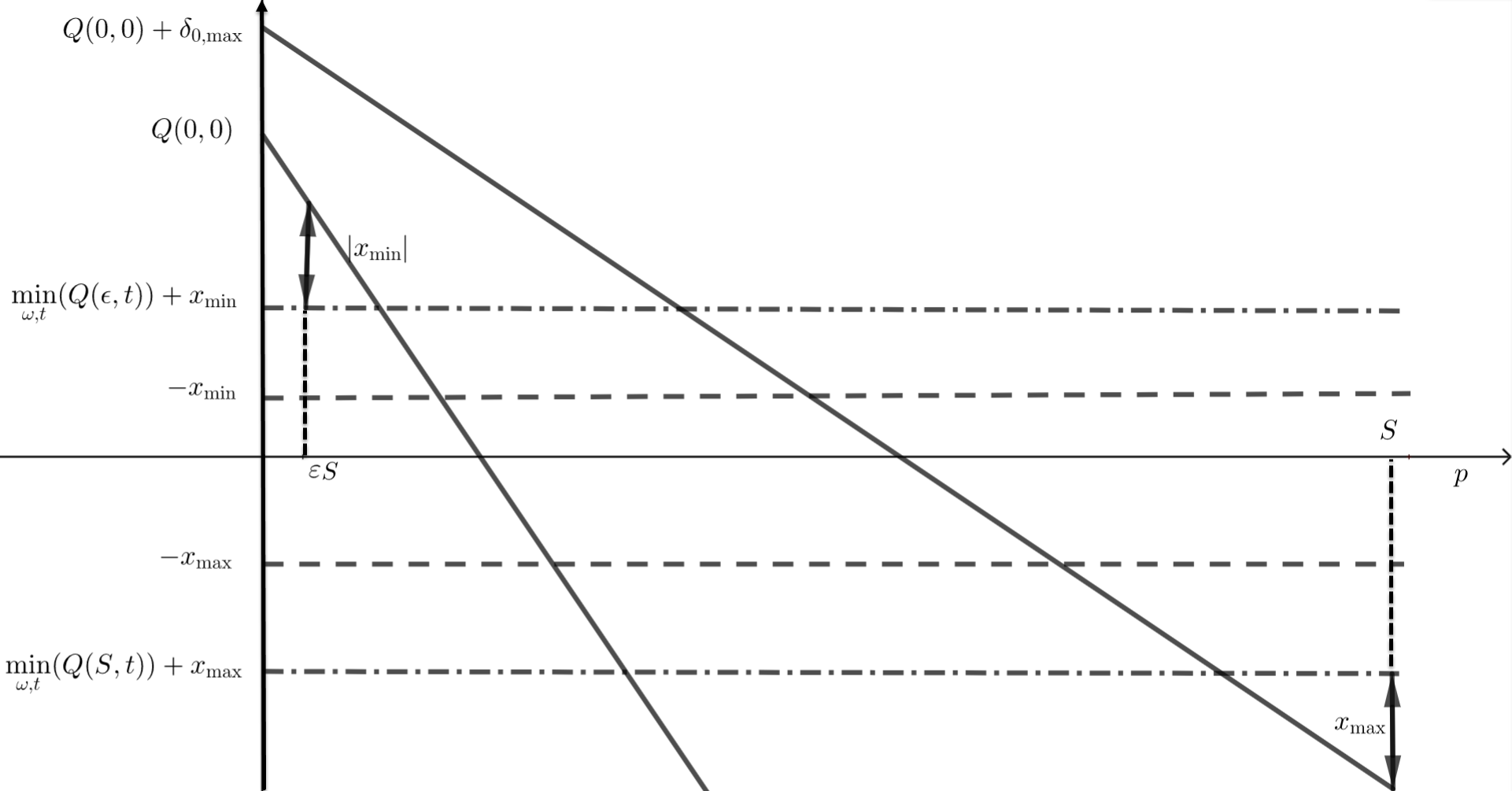} \newline
\caption{Illustration of liquidity model variables. }
\label{fig::liq_illus}
\end{figure}
\end{center}

By (7.4), the clearing price is bounded below by $\varepsilon S$. Thus,
there will no be arbitrage if the market price of risk equations (\ref%
{MPR_disc}) hold for the smaller domain $p\in \lbrack \varepsilon S,S]$:

\begin{equation}
\int_{0}^{1}\left( \sigma _{Q}(0,s,t)-\int_{0}^{p}h_{1}(x,t)\bar{%
\sigma}_{q}(x,s,t)dx\right) \lambda (s,t)ds = A(p,t).  \label{MPR_disc}
\end{equation}

Differentiating (\ref{MPR_disc}) with respect to $p$ we obtain, since $\bar{%
\sigma}_{q}(p,s,t)=0$ for $(s,p)\in \lbrack 0,\varepsilon ]\times \lbrack
\varepsilon S,S]$:%
\begin{equation}
\int_{\varepsilon }^{1}h_{1}(p,t)\bar{\sigma}_{q}(p,s,t)\lambda (s,t)ds=-%
\frac{\partial A(p,t)}{\partial p},\qquad p\in \lbrack \varepsilon S,S].
\label{MPR_part}
\end{equation}

Since, by assumption, $h_{1}(p,t)\leq \eta <0$ we can now divide by $%
h_{1}(p,t)$ to obtain the following deterministic Fredholm equation of the
first kind:%
\begin{equation}
\int_{\varepsilon }^{1}\bar{\sigma}_{q}(p,s,t)\lambda (s,t)ds=-\frac{%
\partial A(p,t)}{\partial p}\frac{1}{h_{1}(p,t)},\qquad p\in \lbrack
\varepsilon S,S].  \label{Fred}
\end{equation}

\begin{proposition}
Suppose that the Fredholm equation (\ref{Fred}) has a unique solution for $%
s\in \lbrack \varepsilon ,1]$.Then there exists a unique solution to the
market price of risk equations.
\end{proposition}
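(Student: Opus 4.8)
The plan is to fix $t\in[0,T]$ and $\omega$ and reduce \eqref{MPR_disc} on the interval $p\in[\varepsilon S,S]$ to two independent pieces: the equation obtained by differentiating it in $p$, which will constrain only the restriction of $\lambda(\cdot,t)$ to $s\in[\varepsilon,1]$, and a single scalar identity, namely the value of \eqref{MPR_disc} at $p=\varepsilon S$, which will pin down $\lambda(\cdot,t)$ on $s\in[0,\varepsilon]$. The preliminary point to record is that both sides of \eqref{MPR_disc} are continuously differentiable in $p$ on $[\varepsilon S,S]$: the regularity hypotheses on $F_0,F_1,\bar\sigma_Q,\bar\sigma_q$ attached to \eqref{Mod1}--\eqref{Mod3} make $\mu_Q$, $\partial^2 Q/\partial p^2$, $\tilde\sigma_Q$ and $C$ differentiable in $p$, hence so is $A(\cdot,t)$, while the left side may be differentiated under the integral sign because $\bar\sigma_q(\cdot,s,t)$ is $C^1$ in $p$ and bounded. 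By the fundamental theorem of calculus, \eqref{MPR_disc} then holds on all of $[\varepsilon S,S]$ if and only if its $p$-derivative holds there and it holds at the single point $p=\varepsilon S$.

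First I would differentiate \eqref{MPR_disc} in $p$. Because $\bar\sigma_q(p,s,t)=0$ for $(s,p)\in[0,\varepsilon]\times[\varepsilon S,S]$, the $p$-derivative of the kernel, which is $-h_1(p,t)\bar\sigma_q(p,s,t)$, vanishes for $s\le\varepsilon$; hence the differentiated equation involves $\lambda(\cdot,t)$ only through its restriction to $[\varepsilon,1]$. Dividing by $h_1(p,t)$, which is negative and bounded away from zero by the hypothesis $F_1'\le-\varepsilon_1<0$, converts it into exactly the Fredholm equation \eqref{Fred}, which by assumption has a unique solution $\lambda^{*}(\cdot,t)$ on $[\varepsilon,1]$. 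A short consistency check belongs here: using \eqref{Fred} one verifies that $p\mapsto\int_\varepsilon^1\tilde\sigma_Q(p,s,t)\lambda^{*}(s,t)\,ds-A(p,t)$ has zero derivative in $p$ on $[\varepsilon S,S]$, so it is a constant; this is what guarantees that imposing \eqref{MPR_disc} at the single point $p=\varepsilon S$ does not conflict with the equations at other values of $p$.

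Next I would handle the boundary piece. The key observation is that for $s\le\varepsilon$ and $p\ge\varepsilon S$ the kernel $\sigma_Q(0,s,t)-\int_0^p h_1(x,t)\bar\sigma_q(x,s,t)\,dx$ does not depend on $p$, again because $\bar\sigma_q(x,s,t)=0$ for $x\in[\varepsilon S,S]$ and $s\le\varepsilon$; consequently the entire dependence of \eqref{MPR_disc} on $\lambda|_{[0,\varepsilon]}$ is through the single number $c_0(t):=\int_0^{\varepsilon}\big(\sigma_Q(0,s,t)-\int_0^{\varepsilon S}h_1(x,t)\bar\sigma_q(x,s,t)\,dx\big)\lambda(s,t)\,ds$, and evaluating \eqref{MPR_disc} at $p=\varepsilon S$ forces $c_0(t)=A(\varepsilon S,t)-\int_\varepsilon^1\tilde\sigma_Q(\varepsilon S,s,t)\lambda^{*}(s,t)\,ds$, a fully determined, bounded, $\mathcal F_t$-measurable quantity. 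I would then look for $\lambda|_{[0,\varepsilon]}$ in the form of a constant $c(t)$, which forces $c(t)$ to equal $c_0(t)$ divided by $\int_0^{\varepsilon}\big(\sigma_Q(0,s,t)-\int_0^{\varepsilon S}h_1(x,t)\bar\sigma_q(x,s,t)\,dx\big)\,ds$, and the positivity hypothesis \eqref{positivity} is precisely the statement that this denominator is bounded below by $\eta>0$ uniformly in $(t,\omega)$; so $c(t)$ exists, is unique, bounded, and $\mathcal F_t$-measurable. Gluing $\lambda^{*}$ on $[\varepsilon,1]$ to the constant $c(t)$ on $[0,\varepsilon]$ yields a bounded solution of \eqref{MPR_disc}, and since it is bounded, condition \eqref{E1} holds by a Novikov-type argument and Theorem~\ref{thm::no_arbitrage} is applicable.

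The step I expect to be the real obstacle is this boundary piece. A single scalar constraint cannot by itself determine a function on $[0,\varepsilon]$, so one must fix the natural class in which the solution is sought — here, $\lambda$ constant on $[0,\varepsilon]$ — and then establish non-degeneracy; \eqref{positivity} supplies exactly the quantitative bound needed, but a little care is required to reconcile the coefficient $\int_0^{\varepsilon S}h_1(x,t)\bar\sigma_q(x,s,t)\,dx$ that actually appears with the form $h_1(\varepsilon S,t)\int_0^{\varepsilon S}\bar\sigma_q(x,s,t)\,dx$ written in \eqref{positivity} (the two coincide when $\bar\sigma_q$ is piecewise constant in $x$, as in the model of the next section). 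Everything else — differentiating under the integral, the sign of $h_1$, adaptedness of the constructed $\lambda$ — is routine given the assumptions accompanying \eqref{Mod1}--\eqref{Mod3}.
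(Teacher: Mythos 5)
Your proof is correct and follows essentially the same route as the paper's: determine $\lambda(\cdot,t)$ on $[\varepsilon,1]$ from the differentiated Fredholm equation \eqref{Fred}, then satisfy the remaining scalar constraint obtained by evaluating \eqref{MPR_disc} at $p=\varepsilon S$ with a constant value of $\lambda$ on $[0,\varepsilon)$, using \eqref{positivity} to keep the denominator bounded away from zero. You are in fact more careful than the paper on two points it leaves implicit: the consistency check that the differentiated equation together with the single evaluation at $p=\varepsilon S$ recovers \eqref{MPR_disc} on all of $[\varepsilon S,S]$, and the mismatch between the kernel $\int_0^{\varepsilon S}h_1(x,t)\bar\sigma_q(x,s,t)\,dx$ that the model actually produces and the factored form $h_1(\varepsilon S,t)\int_0^{\varepsilon S}\bar\sigma_q(x,s,t)\,dx$ appearing in \eqref{positivity}.
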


\begin{proof}
We need to prove that there is a solution to the equation (\ref{MPR_disc})\
for $p=\varepsilon S$. This equation can be rewritten as:%
\begin{equation*}
\int_{0}^{\varepsilon }\tilde{\sigma}_{Q}(\varepsilon S,s,t)\lambda
(s,t)ds=A(\varepsilon S,t)-\int_{\varepsilon }^{1}\tilde{\sigma}%
_{Q}(\varepsilon S,s,t)\lambda (s,t)ds.
\end{equation*}

We observe that:%
\begin{equation*}
\int_{0}^{\varepsilon }\tilde{\sigma}_{Q}(\varepsilon
S,s,t)ds=\int_{0}^{\varepsilon } \big( h_{0}(t)\bar{\sigma}_{Q}(0,s,t)-h_{1}(%
\varepsilon S,t)\int_{0}^{\varepsilon S}\bar{\sigma}_{q}(x,s,t) \big)dxds.
\end{equation*}

Thus, by (\ref{positivity}) $\int_{0}^{\varepsilon }\tilde{\sigma}%
_{Q}(\varepsilon S,s,t)ds\geq \eta >0$. Thus, for $s\in \lbrack
0,\varepsilon )$ we can choose the constant solution:%
\begin{equation}
\lambda (s,t)=\frac{A(\varepsilon S,t)-\int_{\varepsilon }^{1}\tilde{\sigma}%
_{Q}(\varepsilon S,s,t)\lambda (s,t)ds}{\int_{0}^{\varepsilon }\tilde{\sigma}%
_{Q}(\varepsilon S,s,t)ds}.  \label{tagada}
\end{equation}
\end{proof}

However, it is not possible to establish practical conditions on the kernel $%
\bar{\sigma}_{q}$ such that (\ref{Fred})\ has a solution for any right-hand
side. This problem does not occur in the discrete case, so we move to a
finite dimensional approximation of our problem.

Recall that the formula of $A(p,t)$ is provided in Equation \ref{eqn::A_p_t}. It is sufficient to obtain $A(p,t)$ with $\mu_Q(p,t)$, $\sigma_Q(p,t)$. We provide the discretized version of the model in the next subsection.

\subsection{A Discretized Version of the Model} \label{sec::dis_mod}

We consider a finite trading interval $[0,T]$ and discretize our time axis $%
0=t_{0}^{(n)}<\ldots <t_{J_{n}}^{(n)}=T$. \ We will verify the market price
of risk equations only for the set of discrete prices $\mathcal{P}^{(n)}$
where
\begin{equation*}
\mathcal{P}^{(n)}=\Big\{p_{1}^{(n)}\leq p_{2}^{(n)}\leq \cdots \leq
p_{I_{n}}^{(n)}\Big\},
\end{equation*}

with $p_{1}^{(n)}=\varepsilon S$ and $p_{I_{n}}^{(n)}=S$. We also set $%
p_{0}^{(n)}=0$. Finally we discretize the space of factors by setting, for $%
k=0,..,I_{n}:$

\begin{equation*}
s_{k}^{(n)}=\frac{k}{I_{n}}.
\end{equation*}

In our discrete model, the function $\bar{\sigma}_{Q}(0,.,.)$ is
approximated by:
\begin{eqnarray*}
\bar{\sigma}_{Q}^{(n)}(0,s,t) = \sum_{j=0}^{J_{n}-1}\mathbf{1}_{[t_{j}^{(n)},t_{j+1}^{(n)})} (t) \times \sum_{k=0}^{I_{n}-1} \mathbf{1}_{[s_{k}^{(n)},s_{k+1}^{(n)})}(s) \bar{\sigma}_{Q}\left(0,s_{k}^{(n)},t_{j}^{(n)}\right).
\end{eqnarray*}

To simplify notation, we define the approximation of the function $\bar{%
\sigma}_{q}$ in a slightly different way, whether $(s,p)\in \lbrack
0,\varepsilon )\times \lbrack \varepsilon S,S]$ or whether $(s,p)$ belongs
to the remainder of the domain. Indeed, for the former domain, we need to
introduce a collection of invertible matrices $\bar{\sigma}_{q,j}^{(n)}$ for
$j\in \lbrack 0,J_{n}-1]$:%
\begin{equation*}
\bar{\sigma}_{q,j}^{(n)}=\left[
\begin{array}{ccc}
\bar{\sigma}_{q,j}^{(n)}(1,1) & ... & \bar{\sigma}_{q,j}^{(n)}(1,I_{n}-1) \\
\vdots &  & \vdots \\
\bar{\sigma}_{q,j}^{(n)}(I_{n}-1,1) & ... & \bar{\sigma}%
_{q,j}^{(n)}(I_{n}-1,I_{n}-1)%
\end{array}%
\right].
\end{equation*}

The function $\bar{\sigma}_{q}$ is then approximated by:%
\begin{eqnarray*}
&& \bar{\sigma}_{q}^{(n)}(p,s,t) =\sum_{j=0}^{J_{n}-1}\mathbf{1}%
[t_{j}^{(n)},t_{j+1}^{(n)})(t) \times \sum_{i=0}^{I_{n}-1}\mathbf{1}%
[p_{i}^{(n)},p_{i+1}^{(n)})(p) \times  \\
&& \quad \left[\mathbf{1}[s_{0}^{(n)},s_{1}^{(n)})(s)\bar{\sigma}_{q}(0,p_{i}^{(n)},t_{j}^{(n)})+ \sum_{k=1}^{I_{n}-1}\mathbf{1}[s_{k}^{(n)},s_{k+1}^{(n)})(s)\bar{\sigma}%
_{q,j}^{(n)}(i,k)\mathbf{1}[p_{1}^{(n)},p_{I_{n}}^{(n)})(p)\right].
\end{eqnarray*}

The attentive reader will observe that, without loss of generality, $\bar{%
\sigma}_{q}^{(n)}(p,s,t)=0$ for $(s,p)\in $ $[\varepsilon ,1)\times \lbrack
0,\varepsilon S)$. We discretize the market price of risk by:%
\begin{equation*}
\lambda^{(n)} (s,t)=\sum_{j=0}^{J_{n}-1}\mathbf{1}_{[t_{j}^{(n)},t_{j+1}^{(n)})}(t) \sum_{k=0}^{I_{n}-1}\mathbf{1}_{[s_{k}^{(n)},s_{k+1}^{(n)})}(s)\lambda
_{j}^{(n)}(k).
\end{equation*}

The deterministic Fredholm equation of the first kind (\ref{Fred})\ becomes
then:

\begin{eqnarray}
& & \left[
\begin{array}{ccc}
\bar{\sigma}_{q,j}^{(n)}(1,1) & ... & \bar{\sigma}_{q,j}^{(n)}(1,I_{n}-1) \\
\vdots &  & \vdots \\
\bar{\sigma}_{q,j}^{(n)}(I_{n}-1,1) & ... & \bar{\sigma}%
_{q,j}^{(n)}(I_{n}-1,I_{n}-1)%
\end{array}%
\right] \left[
\begin{array}{c}
\lambda _{j}^{(n)}(1) \\
\vdots \\
\lambda _{j}^{(n)}(N-1)%
\end{array}%
\right] \notag \\
& & \qquad \qquad \qquad \qquad \qquad \qquad \qquad = I_{n}\left[
\begin{array}{c}
\frac{A(p_{1}^{(n)},t_{j}^{(n)})}{h_{1}(p_{1}^{(n)},t_{j}^{(n)})}-\frac{%
A(p_{2}^{(n)},t_{j}^{(n)})}{h_{1}(p_{2}^{(n)},t_{j}^{(n)})} \\
\vdots \\
\frac{A(p_{I_{n}-1}^{(n)},t_{j}^{(n)})}{h_{1}(p_{I_{n}-1}^{(n)},t_{j}^{(n)})}%
-\frac{A(p_{I_{n}}^{(n)},t_{j}^{(n)})}{h_{1}(p_{I_{n}}^{(n)},t_{j}^{(n)})}%
\end{array}\right].   \label{MPR_2}
\end{eqnarray}

which has a solution since $\bar{\sigma}_{q,j}^{(n)}$ is assumed invertible
for each $j\in \lbrack 0,J_{n-1}]$. 

\section{Empirical Analysis}

\label{sec:NumericalEx}

We calibrated the market model to historical data, and then simulated the
net demand surface, first in the physical measure, and then in the
risk-neutral measure in order to price options.

For notational simplicity, we drop the superscript $^{(n)}$ from all
variables in the last section. We choose the following model:

\begin{eqnarray*}
F_{0}(x) &=&\delta _{0,\min }+\frac{\delta _{0,\max }-\delta _{0,\min }}{1+x}%
, \\
F_{1}(x) &=&\delta _{1,\min }+\frac{\delta _{1,\max }-\delta _{1,\min }}{1+x}%
.
\end{eqnarray*}

\subsection{Data}

The trading data are collected from the NYSE Arca limit book orders for April 2011. Historical NYSE Arca book data provide information of the complete limit order book (LOB) from NYSE, NYSE Arca, NYSE MKT, NASDAQ and the ArcaEdge platforms from 3:30 a.m. to 8:00 p.m. ET under high speed of latencies (less than 5 milliseconds). In the empirical study of this paper, we consider the orders only from 9:30 a.m. to 4:00 p.m., when the price formation is effective and the equities are regularly and actively traded. Each limit order contains the unique reference number, the time stamp in seconds and milliseconds, the limit price in U.S. dollars, the quantity in number of shares, and the trading type (``B'': buy or ``S'' : sell).

All the limit order book records are categorized into three groups: ``A'': Add, ``M'': Modified and ``D'': Deleted. For the market liquidity model, we consider the net demand of the stock, which is captured by summing added records (``A'') with modified (``M'') adjustment and subtracting the deleted (``D'') orders. To be specific, within a certain partitioned time period, the added orders would be updated by the modified orders, if applicable and the orders occur in the same partitioned period.

We use the limit order book data of Apple Inc. (NYSE:AAPL) as of April 1st, 2011. Figure \ref{fig::Q_and_q} illustrate the net demand surface $Q$ (left) and net demand density $q$ (right) with respect to time and price. Each time step is equivalent to 5 minutes in real trading time.

\begin{center}
\begin{figure}[ht]
\centering
\includegraphics[scale = 0.135]{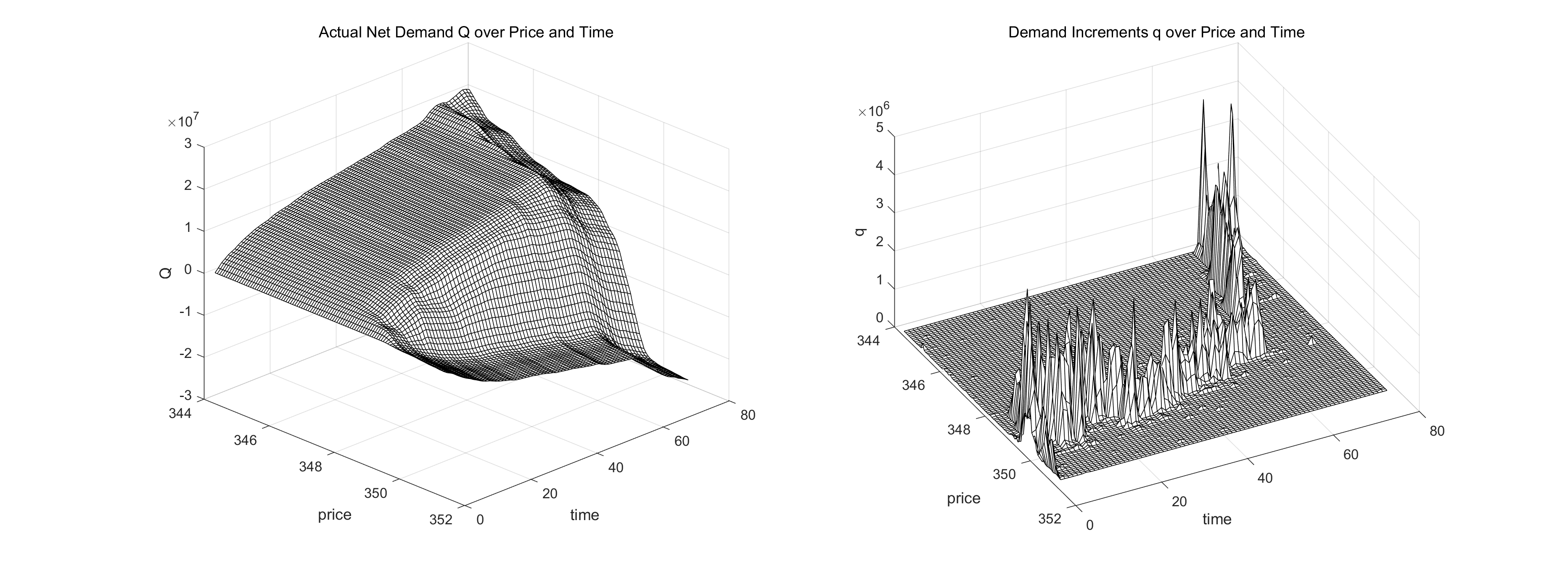} \newline
\caption{\textbf{Real dataset:} the left graph illustrates the net demand surface $Q$, and the right graph plots the corresponding net demand increments $q$ of stock of Apple Inc. (NYSE: AAPL). The market date is April 1st, 2011. Each time step represents 5 minutes in real trading time. Each price step is \$0.0672, and the price range is from \$344.24 to \$350.96. }
\label{fig::Q_and_q}
\end{figure}
\end{center}

As the theory suggests, the net demand $Q$, at any given time, decreases
monotonically with stock price. For a given price level, the net demand has
a \textquotedblleft hump\textquotedblright\ shape with time elapsing. A
plausible explanation of the hump shape is that for this particular market
date, the buy orders accumulate at the beginning of trading day, and then
large amount of sell order quickly decreases the net demand in the last few
hours.

\subsection{Calibration Methodology}

For simplicity of exposition we assume equal partition, i.e.:%
\begin{eqnarray*}
\Delta s &=&s_{k+1}-s_{k}=\frac{1}{I_{n}} \\
\Delta t &=&t_{j+1}-t_{j}=\frac{\tau }{J_{n}} \\
\Delta p &=&p_{i+1}=p_{i}=\frac{S}{I_{n}}
\end{eqnarray*}

where $\tau $ is our calibration window. In our empirical study, $S=350.96$, $I_n=100$, $J_n=78$, and $\tau=1$. We set 9:30 AM of April 1st, 2011 as time zero ($t_0=0$). First, the estimated net demand quantity is calculated by the
following formula.%
\begin{equation}
\hat{Q}(p_{i},t_{j})=\hat{Q}_{B}(p_{i},t_{j})-\hat{Q}_{S}(p_{i},t_{j}),\quad
i=0,\ldots ,I_{n}\text{ and }j=0,\ldots ,J_{n}.  \label{eqn::est_Q}
\end{equation}

where $\hat{Q}_{B}(p_{i},t_{j})$ is the quantity available for buy orders
with price greater than $p_{i}$ at time $t_{j}$, and $\hat{Q}%
_{S}(p_{i},t_{j})$ is the quantity available for sell orders with price less
than $t$ at time $t$. The net demand density $q(p_{i},t_{j})$ is estimated
as:%
\begin{equation}
\hat{q}(p_{i},t_{j})=-\frac{\hat{Q}(p_{i},t_{j})-\hat{Q}(p_{i-1},t_{j})}{%
\Delta p}.  \label{eqn::est_q}
\end{equation}

The other model parameters are $\delta_{0,\min} = 76,942$, $\delta_{0,\max} = 21,067,319$, $\delta_{1,\min} = 0$, and $\delta_{1,\max} = 1,822,500$, which satisfies all model assumptions for the stock and date we selected. The matrices $\bar{\sigma}_q$ and $\bar{\sigma}_Q$ are estimated using the method of moments.

\subsection{Simulation in the Risk Neutral Measure}

We used the Euler scheme to simulate the net demand surface. We compute the market price of risk using Equation (\ref{MPR_2}) then simulate the path with stochastic string using Equation (\ref{eqn::risk_lambda}). 

Figure \ref{fig::sim_Q_and_q} plots a simulated net demand and demand density, using one random sample. We can see from this figure that the main properties of the net demand surface are satisfied. In other words, the net demand curve $Q$ at any given time is downward sloping, and the net demand density $q$ are all positive. Furthermore, the simulated net demand surface ($Q(p,t)$) mimics the ``hump'' shape as shown in the real data. The net demand density ($q(p,t)$) concentrates on similar clearing prices and trading time as shown in the empirical limit order dataset. 

\begin{center}
\begin{figure}[h]
\centering
\includegraphics[scale = 0.135]{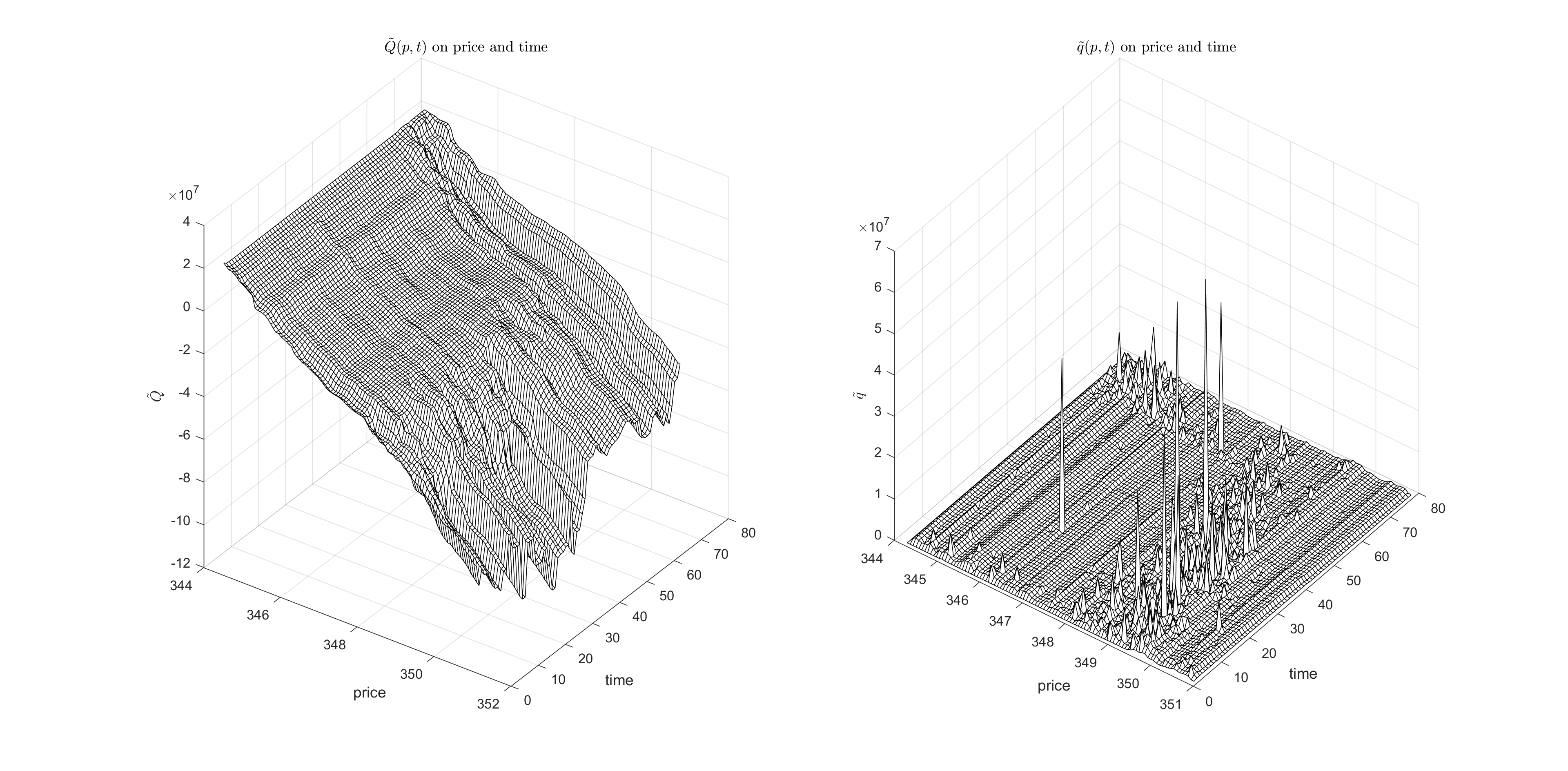} \newline
\caption{\textbf{Simulated data:} The left graph shows the simulated net demand surface of Apple Inc. as of April 1st, 2011 with one random simulation scenario. The price and time ranges of the simulated surface scale the same as the real-time net demand surface. The right graph plots the corresponding net demand increments.}
\label{fig::sim_Q_and_q}
\end{figure}
\end{center}

\subsection{Model Validation}

With the simulated net demand surface, we then calculated (by linear interpolating $Q$ in the price direction) the market clearing price $\pi (t,\omega )$, i.e., the price that solves:
\begin{equation*}
Q(\pi (t,\omega ),t,\omega )=0, \qquad t\in[t_0, T].
\end{equation*}
We simulated $N=$1,000 paths of the price $\pi (.,\omega )$. Let  $c(K)$ and $p(K)$ be the Monte Carlo estimators of the price at time $t <T$ of a call/put option with expiration $T=t_0 +$30 days and strike $K$. By Theorem \ref{th:PCC}:

\begin{eqnarray}
c(K) &=&\frac{1}{N}\sum_{\omega =1}^{N}\max \left( \pi (T,\omega
)-K,0\right) ,  \notag  \label{eqn::avg_price} \\
p(K) &=&\frac{1}{N}\sum_{\omega =1}^{N}\max \left( K-\pi (T,\omega
),0\right).
\end{eqnarray}

The implied volatility for different strike levels $K$ is calculated by
solving

\begin{eqnarray}
\sigma _{call}(K) &=&BS^{-1}(\pi (t_0);K,r,T,c(K)),  \notag
\label{eqn::bs_price} \\
\sigma _{put}(K) &=&BS^{-1}(\pi (t_0);K,r,T,p(K)),
\end{eqnarray}

where $BS^{-1}$ represents the inverse of Black-Scholes European option pricing formula, where we solve the implied volatility level given the clearing price $\pi $. Consistently with our model, we chose the interest rate $r=0.2537\%$, which is linearly interpolated from the market zero-coupon curve (provided in WRDS OptionMetric database). 

\begin{center}
\begin{figure}[ht]
\centering
\includegraphics[scale = 0.3]{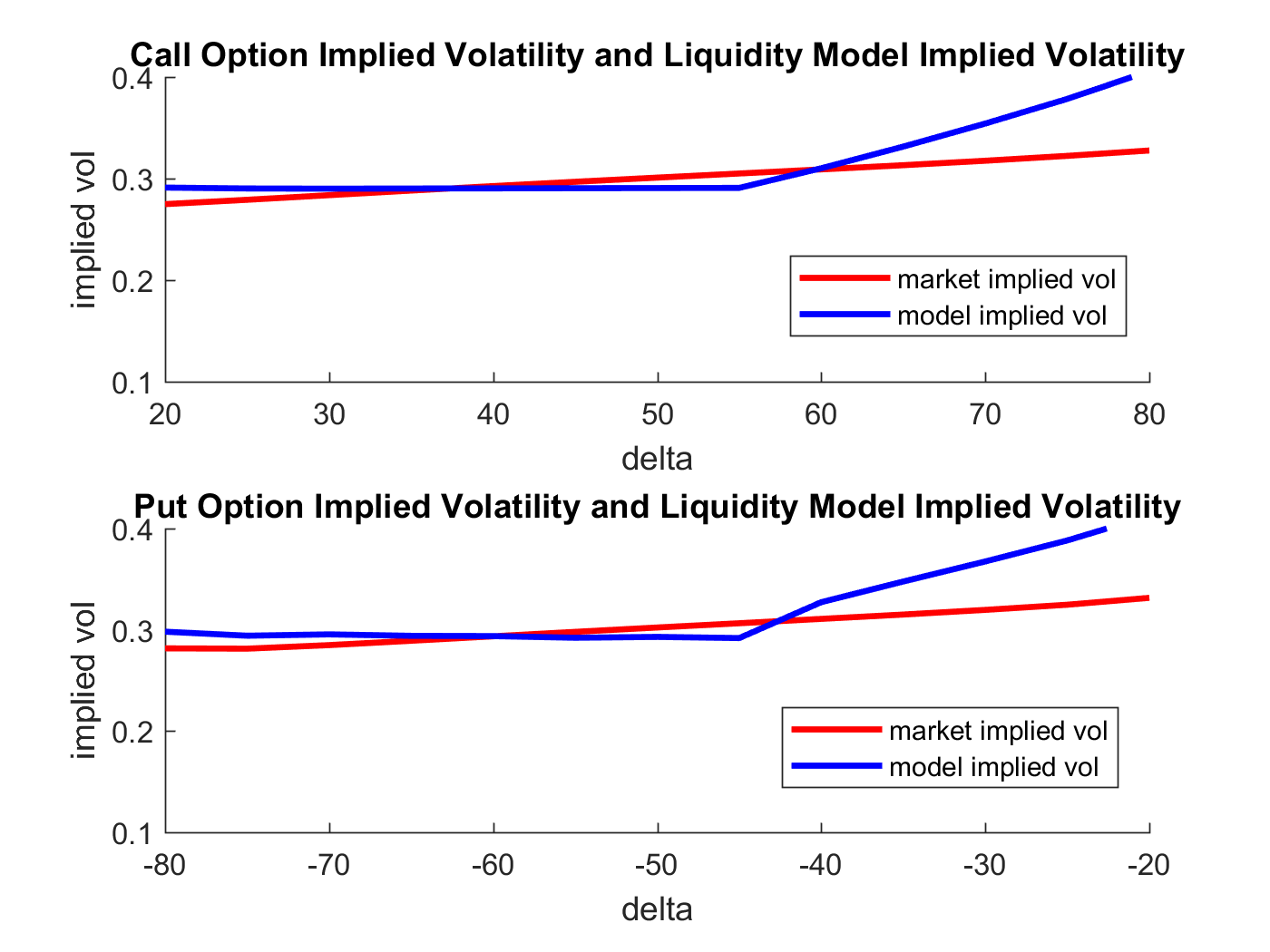} \newline
\caption{The top graph compares the market call option implied volatility
smile and model implied volatility skew. Option maturity is 30 days. The
bottom graphs compares the volatilities implies from put option. The x-axis
represents the delta of corresponding option. The selected stock is Apple
Inc. (NYSE: AAPL) and the market date is Apr. 1st, 2011.}
\label{fig::option_pricing}
\end{figure}
\end{center}

We compared the model implied volatility with the market option implied
volatility with 30-day maturity in Figure \ref{fig::option_pricing}. The
market option implied volatility skew is retrieved from OptionMetric
database. Note that the $x$-axis is indexed by the delta of the
corresponding options, which is the option price sensitivity with respect to
spot level moves. The call option delta is $\Phi (d_{1})$, and the put
option delta is $\Phi (-d_{1})$, where $\Phi (\cdot )$ is the standard
normal cumulative distribution function. The term $d_{1}(t)$ at time $t_0$ is

\begin{equation}
d_{1}=\frac{\log (\pi (t_0)/K)+(r+\frac{\sigma ^{2}}{2})(T-t_0)}{\sigma
\sqrt{T-t_0}}.
\end{equation}

We use the delta instead of strike price in volatility smile representation,
for the reasons that OptionMetric database provides volatility surface on
delta, and that delta is a unified measure of option in-the-moneyness, which
is of more interest to practitioners.

From the above graph, the at-the-money (50\% delta) volatility level from call-type and put-type options matches
almost perfect with the market implied volatility level. The error terms are 0.60\% and 0.21\% (in absolute difference) for call and put option respectively, which shows the pricing model we use is highly market consistent. The model seems to overestimate in-the-money and out-of-money implied volatilities, for both call-type and put-type option. A plausible reason is that simulated paths can hardly reach deep-in-the-moneyness or deep-out-the-moneyness levels. To avoid the volatility smile becoming volatility smirk, limit order book data with multiple trading days would be recommended, but the calibration and simulation approaches remain the same. Another plausible explanation is that there is arbitrage or market incompleteness, so that theorem \ref{th:PCC} does not apply.
\bigskip

\section{Appendix}

\label{sec::App}

\subsection{Proof of Proposition \protect\ref{prop:expo}}

\label{sec::A1}

\begin{proof}
By \eqref{prd},
\begin{equation*}
p(x,t)=p_0(x)+\int_0^t \bar{\mu}_p(x)p(x,r)dr+\int_0^t \int_{\varepsilon}^1%
\bar{\sigma}_p(x,s)B(ds,dr),
\end{equation*}
and therefore the corresponding coefficients in \eqref{P_1} are
\begin{eqnarray*}
\tilde{\sigma}_{P}(x,s,t) &=&\bar{\sigma}_{P}(x_{\min
},s)p(x_{\min},t)+\int_{x_{\min }}^{x}\bar{\sigma}_{p}(y,s)p(y,t)dy, \\
\mu _{P}(x,t) &=&\bar{\mu}_{P}(x_{\min })p(x_{\min },t)+\int_{x_{\min}}^{x}%
\bar{\mu}_{p}(y)p(y,t)dy.
\end{eqnarray*}

We fix $t\in [0,T]$ and construct the solution $\lambda(s,t)$ of %
\eqref{MPR_price} in three steps: first, for $s\in \lbrack
\varepsilon,2\varepsilon )$, then, for $s\in (2\varepsilon ,1]$, and
finally,for $s\in \lbrack 0,\varepsilon )$.

To begin, differentiate both sides of (\ref{MPR_price}) with respect to $x$:
\begin{equation*}
\int_{\varepsilon }^{1}\bar{\sigma}_{p}(x,s)p(x,t)\lambda (s,t)ds=\bar{\mu}%
_{p}(x)p(x,t),\text{ \ \ }x_{\min }\leq x\leq x_{\max },\ \ t\in [0,T].
\end{equation*}

Next, divide both sides by $p(x,t)$:
\begin{equation*}
\int_{\varepsilon }^{1}\bar{\sigma}_{p}(x,s)\lambda (s,t)ds=\bar{\mu}%
_{p}(x), \quad x_{\min }\leq x\leq x_{\max },t\in \lbrack 0,T].
\end{equation*}

Since $\bar{\sigma}_{p}(x,s)=0$ for $s>f(x)$,%
\begin{equation}  \label{MPR-f(x)}
\int_{\varepsilon }^{f(x)}\bar{\sigma}_{p}(x,s)\lambda (s,t)ds=\bar{\mu}%
_{p}(x), \quad x_{\min }\leq x\leq x_{\max },\ t\in [0,T].
\end{equation}

Evaluating this equation at $x=x_{\min }$ leads to a constant value for $%
\lambda (s,t)$ when $s\in \lbrack \varepsilon, 2\varepsilon )$:
\begin{equation*}
\lambda (s,t)=\frac{\bar{\mu}_{p}(x_{\min })}{\int_{\varepsilon
}^{2\varepsilon }\bar{\sigma}_{p}(x_{\min },s)ds}.
\end{equation*}

Next, differentiation of \eqref{MPR-f(x)} with respect to $x$ defines $%
\lambda (s,t)$ for $s\in [2\varepsilon ,1]$:
\begin{equation*}
\lambda \big(f(x),t\big)= \frac{1}{(1-2\varepsilon )\bar{\sigma}_{p}\big(%
x,f(x)\big)}\left(\frac{\partial \bar{\mu}_{p}(x)}{\partial x}%
-\int_{\varepsilon }^{f(x)}\frac{\partial \bar{\sigma}_{p}(x,s)}{\partial x}%
\lambda (s,t)ds\right).
\end{equation*}
recall that $x\mapsto f(x)$ is a bijection from $[x_{\min},x_{\max}]$ to $%
[2\varepsilon, 1].$

Finally, we define $\lambda (s,t)$ for $s\in \lbrack 0,\varepsilon )$. To
this end, we evaluate (\ref{MPR_price}) at $x=x_{\min }$:
\begin{eqnarray*}
\int_{0}^{\varepsilon }\bar{\sigma}_{p}(x_{\min },s)p(x_{\min },t)\lambda
(s,t)ds &=&\bar{\mu}_{p}(x_{\min })p(x_{\min },t) \\
&&-\int_{\varepsilon }^{1}\bar{\sigma}_{p}(x_{\min },s)p(x_{\min },t)\lambda
(s,t)ds,
\end{eqnarray*}
and choose a constant solution:%
\begin{equation*}
\lambda (s,t)=\frac{\bar{\mu}_{p}(x_{\min })-\int_{\varepsilon }^{1}\bar{%
\sigma}_{p}(x_{\min },s)\lambda (s,t)ds}{\varepsilon \int_{0}^{\varepsilon }%
\bar{\sigma}_{p}(x_{\min },s)ds}\text{, \ \ }0\leq s<\varepsilon.
\end{equation*}
\end{proof}

\subsection{Proof of Lemma \protect\ref{lem::model_hold}}

\label{sec::A2}

\begin{proof}
Inequality (\ref{MC_hold}) holds for all $t\in \lbrack 0,T]$ if%
\begin{equation}
Q(0,0) + \delta _{0,\max} - \int_{0}^{S}q(p,0)dp-\delta _{1,\min }S+x_{\max
}\leq 0.
\end{equation}
while inequality (\ref{Bound_hold}) holds for all $t\in \lbrack 0,T]$ if,
for some $0<\varepsilon <1$,
\begin{equation}
Q(0,0)-\int_{0}^{\varepsilon }q(p,0)dp-\delta _{1,\max }\varepsilon S +x_{\min
}\geq 0.
\end{equation}

If we assume that $x_{\max }$, $x_{\min }$ and $S$ are given, we solve these
inequalities for $\delta _{1,\min }$ and $\delta _{1,\max }$:%
\begin{eqnarray*}
\delta _{1,\min } &\geq &\frac{x_{\max }+Q(0,0)+\delta _{0,\max
}-\int_{0}^{S}q(p,0)dp}{S}, \\
\delta _{1,\max } &\leq &\frac{x_{\min }+Q(0,0)-\int_{0}^{\varepsilon
}q(p,0)dp}{\varepsilon S}.
\end{eqnarray*}%
\bigskip

We must first check that $\delta _{1,\min }<\delta _{1,\max }$ , which holds
if:%
\begin{equation*}
\frac{x_{\max }+Q(0,0)+\delta _{0,\max }-\int_{0}^{S}q(p,0)dp}{S}<\frac{%
x_{\min }+Q(0,0)-\int_{0}^{\varepsilon }q(p,0)dp}{\varepsilon S},
\end{equation*}
or
\begin{equation*}
\varepsilon <\frac{x_{\min }+Q(0,0)-\int_{0}^{\varepsilon }q(p,0)dp}{%
x_{\max }+Q(0,0)+\delta _{0,\max }-\int_{0}^{S}q(p,0)dp}.
\end{equation*}
Observe that the right-hand side is strictly positive, so that we can choose $\varepsilon >0$.
The price $\varepsilon S $ is less than $S$ if
\begin{equation*}
\delta _{0,\max }>x_{\min }-x_{\max }+\int_{0}^{S}q(p,0)dp.
\end{equation*}
\end{proof}

\bigskip

\bibliographystyle{apalike}
\bibliography{bib}

\end{document}